\newif\ifdraft
\setlist{leftmargin=10mm}
\newtheorem{theorem}{Theorem}
\newtheorem{lemma}[theorem]{Lemma}
\newtheorem{corollary}[theorem]{Corollary}
\newtheorem{definition}[theorem]{Definition}
\newtheorem{remark-star}{Remark}
\newtheorem{remark-star-1}{Remark}
\newtheorem{assumption}[theorem]{Assumption}
\newtheorem*{proof-sketch}{Proof Sketch}
\newtheorem{remark}[theorem]{Remark}
\newenvironment{customthm}[1]
  {\innercustomthm}
  {\endinnercustomthm}
\newenvironment{packeditemize}{
\begin{list}{$\bullet$}{
\setlength{\labelwidth}{8pt}
\setlength{\itemsep}{0pt}
\setlength{\leftmargin}{\labelwidth}
\addtolength{\leftmargin}{\labelsep}
\setlength{\parindent}{0pt}
\setlength{\listparindent}{\parindent}
\setlength{\parsep}{0pt}
\setlength{\topsep}{3pt}}}{\end{list}}
\newcommand{\mc}{\texttt{MC}}
\newcommand{\deltamc}{\widehat{\bm{\delta}}_{\mc}}
\newcommand{\IS}{\texttt{IS}}
\newcommand{\deltais}{\widehat{\bm{\delta}}_{\IS, \theta}}
\newcommand{\deltaism}{\widehat{\bm{\delta}}_{\IS}}
\newcommand{\E}{\mathbb{E}}
\newcommand{\R}{\mathbb{R}}
\newcommand{\pfp}{p_{\fp}}
\newcommand{\ptp}{p_{\tp}}
\newcommand{\M}{\mathcal{M}}
\newcommand{\cX}{\mathcal{X}}
\newcommand{\cY}{\mathcal{Y}}
\newcommand{\MS}{ \mathbb{N}^\cX }
\newcommand{\hokey}{\mathsf{E}}
\newcommand{\del}{\partial}
\newcommand{\varbound}{\nu}
\newcommand{\reject}{\bot}
\newcommand{\dd}{\mathrm{d}}
\newcommand{\N}{\mathcal{N}}
\newcommand{\dpv}{\texttt{DPV}}
\newcommand{\true}{\mathtt{True}}
\newcommand{\false}{\mathtt{False}}
\newcommand{\dist}{P}
\newcommand{\alterdist}{P_\theta}
\newcommand{\eps}{\varepsilon}
\newcommand{\pstar}{P^*}
\newcommand{\fp}{\mathsf{FP}}
\newcommand{\fn}{\mathsf{FN}}
\newcommand{\tp}{\mathsf{TP}}
\newcommand{\propdelta}{\delta^{\mathrm{est}}}
\newcommand{\Maug}{\M^{\mathrm{aug}}}
\newcommand{\bt}{\bm{t}}
\newcommand{\bp}{\bm{P}}
\newcommand{\bpalter}{\bm{P}'}
\newcommand{\bq}{\bm{Q}}
\newcommand{\mgf}{M}
\newcommand{\smooth}{\tau}
\newcommand{\unif}{\mathrm{Unif}}
\newcommand{\erf}{\mathrm{erf}}
\newcommand{\deltamctil}{\widetilde{\bm{\delta}}_{\mc}}
\newcommand{\deltaistil}{\widetilde{\bm{\delta}}_{\IS, \theta}}
\newcommand{\heuristic}{0.4 \left(1/\smooth-1/\uparam\right) \propdelta}
\newcommand{\uparam}{\rho}
\newcommand{\epserror}{\eps_{\mathrm{error}}}
\newcommand{\deltaerror}{\delta_{\mathrm{error}}}
\newcommand{\ind}{\mathbbm{1}}
\newif\iffinal
    \newcommand{\tianhao}[1]{}
    \newcommand{\ruoxi}[1]{}
    \newcommand{\saeed}[1]{}
    \newcommand{\tw}[1]{}
    \newcommand{\prateek}[1]{}
    \newcommand{\add}[1]{#1}
    \newcommand{\tianhao}[1]{{\bf \textcolor{purple}{[Tianhao: #1]}}}
    \newcommand{\ruoxi}[1]{{\bf \textcolor{violet}{[Ruoxi: #1]}}}
    \newcommand{\saeed}[1]{{\bf {\textcolor{blue}{[Saeed: #1]}}}}
    \newcommand{\tw}[1]{{\bf \textcolor{teal}{[Tong: #1]}}}
    \newcommand{\prateek}[1]{{\bf \textcolor{teal}{[Prateek: #1]}}}
    \newcommand{\add}[1]{\textcolor{red}{#1}}
\author[1]{Jiachen T. Wang}
\author[1]{Saeed Mahloujifar}
\author[1]{Tong Wu}
\author[2]{Ruoxi Jia}
\author[1]{Prateek Mittal}
\affil[1]{Princeton University%\protect\\
% \texttt{\small \{tianhaowang,sfar,tongwu,pmittal\}@princeton.edu}
}
\affil[2]{Virginia Tech\protect\\
\texttt{\small \{tianhaowang,sfar,tongwu,pmittal\}@princeton.edu}, 
\texttt{\small ruoxijia@vt.edu}
}
\date{}
\title{A Randomized Approach to Tight Privacy Accounting}
\begin{document}

%\doparttoc % Tell to minitoc to generate a toc for the parts
%\faketableofcontents % Run a fake tableofcontents command for the partocs
% \part{} % Start the document part
% \parttoc % Insert the document TOC

\maketitle

\begin{abstract}
Bounding privacy leakage over compositions, i.e., privacy accounting, is a key challenge in differential privacy (DP). The privacy parameter ($\eps$ or $\delta$) is often easy to estimate but hard to bound. In this paper, we propose a new differential privacy paradigm called estimate-verify-release (EVR), which tackles the challenges of providing a strict upper bound for the privacy parameter in DP compositions by converting an \emph{estimate} of privacy parameter into a formal guarantee. The EVR paradigm first verifies whether the mechanism meets the \emph{estimated} privacy guarantee, and then releases the query output based on the verification result. The core component of the EVR is privacy verification. We develop a randomized privacy verifier using Monte Carlo (MC) technique. Furthermore, we propose an MC-based DP accountant that outperforms existing DP accounting techniques in terms of accuracy and efficiency. MC-based DP verifier and accountant is applicable to an important and commonly used class of DP algorithms, including the famous DP-SGD. An empirical evaluation shows the proposed EVR paradigm improves the utility-privacy tradeoff for privacy-preserving machine learning.
\end{abstract}

% estimates the privacy parameter of a mechanism, then verifies whether it meets this guarantee, and finally releases the query output based on the verification result. 

% % \vspace{-1mm}
\section{Introduction}

% % \vspace{-1mm}
The concern of privacy is a major obstacle to deploying machine learning (ML) applications. In response, ML algorithms with differential privacy (DP) guarantees have been proposed and developed. For privacy-preserving ML algorithms, DP mechanisms are often repeatedly applied to private training data. For instance, when training deep learning models using DP-SGD \cite{abadi2016deep}, it is often necessary to execute sub-sampled Gaussian mechanisms on the private training data thousands of times. 

% % \vspace{-1mm}
A major challenge in machine learning with differential privacy is \emph{privacy accounting}, i.e., measuring the privacy loss of the composition of DP mechanisms. A privacy accountant takes a list of mechanisms, and returns the privacy parameter ($\eps$ and $\delta$) for the composition of those mechanisms. Specifically, a privacy accountant is given a target $\eps$ and finds the \emph{smallest achievable} $\delta$ such that the composed mechanism $\M$ is $(\eps, \delta)$-DP (we can also fix $\delta$ and find $\eps$). 
We use $\delta_\M(\eps)$ to denote the smallest achievable $\delta$ given $\eps$, which is often referred to as \emph{optimal privacy curve} in the literature. 

%, defined next. 
% Throughout this paper, we use $D$ to denote a dataset, which is a multiset over data universe $\cX$, and $\M(D)$ to denote the output of a (randomized) mechanism $\M$ applied to the dataset $D$. 

% For every mechanism $\M$ and every non-negative value of $\eps$, there exists a \emph{smallest achievable} $\delta$ such that the mechanism is $(\eps, \delta)$-DP. The function that maps each $\eps$ to its corresponding smallest $\delta$ is called the \emph{optimal privacy curve} of $\M$, denoted as $\delta_{\M}$.
% % (formally defined in Definition \ref{def:privacy-curve}). 
% Privacy accounting aims to estimate and bound the optimal privacy curve $\delta_\M(\eps)$ when $\M$ is \emph{adaptively composed}. 

% % \vspace{-1mm}
Training deep learning models with DP-SGD is essentially the \emph{adaptive composition} for thousands of sub-sampled Gaussian Mechanisms. Moment Accountant (MA) is a pioneer solution for privacy loss calculation in differentially private deep learning \cite{abadi2016deep}. 
% MA relies on Renyi Differential Privacy (RDP) as its main technical component. 
However, MA does not provide the optimal $\delta_\M(\eps)$ in general \cite{zhu2022optimal}. 
% However, RDP can be undefined for certain DP mechanisms and may not always provide the optimal $(\eps, \delta)$ pair \cite{zhu2022optimal}. 
This motivates the development of more advanced privacy accounting techniques that outperforms MA. Two major lines of such works are based on Fast Fourier Transform (FFT) (e.g., \cite{gopi2021numerical} and Central Limit Theorem (CLT) \cite{bu2020deep, wang2022analytical}. 
% Recently, a line of work has emerged that outperforms MA} by discretizing the domain of \emph{privacy loss random variable} (see Definition \ref{def:prv}) and using Fast Fourier Transform (FFT) to approximate $\delta_\M(\eps)$ \cite{koskela2020computing, koskela2021computing, koskela21a_FFT, gopi2021numerical, ghazi2022faster, doroshenko2022connect}. 
% Another line of works uses Central Limit Theorem (CLT) to approximate the distribution of privacy loss random variables as Gaussian distribution \cite{dong2019gaussian, bu2020deep, wang2022analytical}, which enables the closed form solution for $\delta_\M(\eps)$. We refer to the two aforementioned techniques as \emph{FFT-based} and \emph{CLT-based methods}, respectively. 
Both techniques can provide an \emph{estimate} as well as an upper bound for $\delta_\M(\eps)$ though bounding the worst-case estimation error. In practice, \textbf{only the upper bounds for $\delta_\M(\eps)$ can be used}, as differential privacy is a strict guarantee.

\newcommand{\width}{0.35}
\begin{wrapfigure}{R}{\width\columnwidth}
    \setlength\intextsep{0pt}
    \setlength\abovecaptionskip{0pt}
    \setlength\belowcaptionskip{-10pt}
    \centering
    % \vspace{-4mm}
    \includegraphics[width=\width\columnwidth]{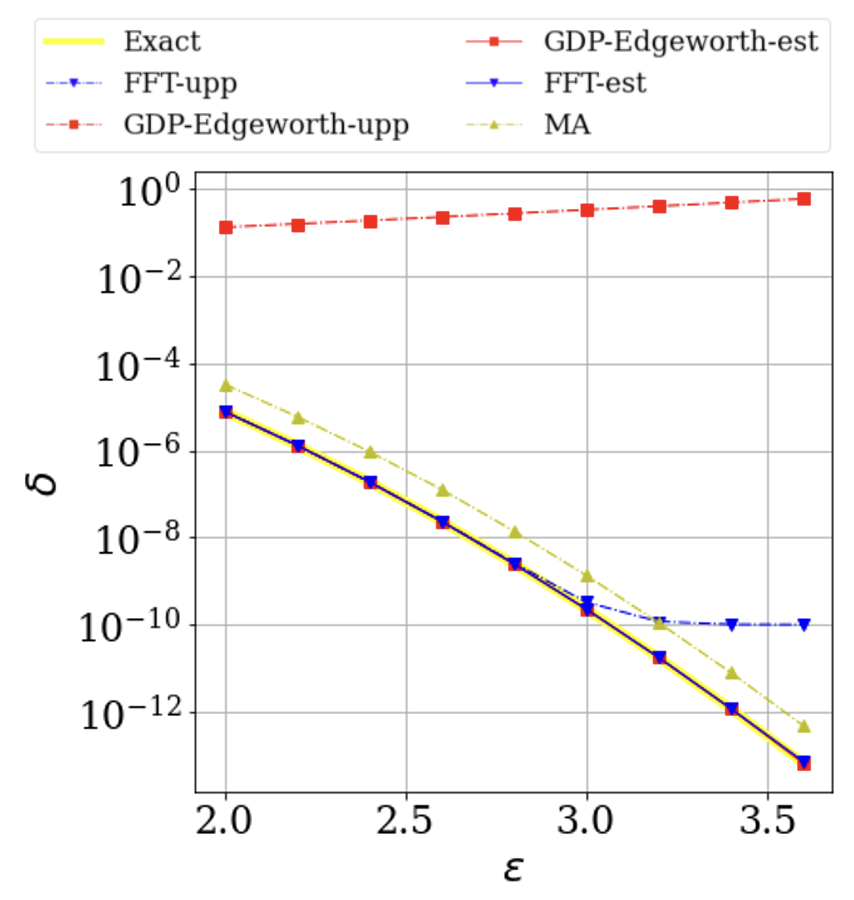}
    \caption{Results of estimating/bounding $\delta_\M(\eps)$ for the composition of 1200 Gaussian mechanisms with $\sigma=70$. `-upp' means upper bound and `-est' means estimate. Curves of `Exact', `FFT-est', and `CLT-est' are overlapped. 
    The groundtruth curve (`Exact') for pure Gaussian mechanism can be computed analytically \cite{gopi2021numerical}.} 
    % % \vspace{-2mm}
    \label{fig:unmeaningful-upp}
\end{wrapfigure}
\textbf{Motivation: estimates can be more accurate than upper bounds.} 
% While both FFT- and CLT-based privacy accounting techniques can provide strict upper bounds for $\delta_\M(\eps)$, these bounds can be meaningless in certain regimes due to different reasons. 
The motivation for this paper stems from the limitations of current privacy accounting techniques in providing tight upper bounds for $\delta_\M(\eps)$. %the privacy loss of adaptively composed DP mechanisms. 
Despite outperforming MA, both FFT- and CLT-based methods can provide ineffective bounds in certain regimes \cite{gopi2021numerical, wang2022analytical}. We demonstrate such limitations in Figure \ref{fig:unmeaningful-upp} using the composition of Gaussian mechanisms. 
For FFT-based technique \cite{gopi2021numerical}, we can see that although it outperforms MA for most of the regimes, 
%achieves promising results when $\delta > 10^{-10}$, 
the upper bounds (blue dashed curve) are worse than that of MA when $\delta < 10^{-10}$ due to computational limitations (as discussed in \cite{gopi2021numerical}'s Appendix A; also see Remark \ref{remark:smalldelta} for a discussion of why the regime of $\delta < 10^{-10}$ is important). 
The CLT-based techniques (e.g., \cite{wang2022analytical}) also produce sub-optimal upper bounds (red dashed curve) for the entire range of $\delta$. This is primarily due to the small number of mechanisms used ($k=1200$), which does not meet the requirements for CLT bounds to converge (similar phenomenon observed in \cite{wang2022analytical}). 
% For CLT-based technique \cite{wang2022analytical}, the upper bounds (red dashed curve) are very sub-optimal making them rather vacuous. This is mostly because the number of mechanisms ($k$) is smaller than what CLT bounds need, in order to converge. 
On the other hand, we can see that the \emph{estimate}s of $\delta_\M(\eps)$ from both FFT and CLT-based techniques, which estimate the parameters rather than providing an upper bound, are in fact very close to the ground truth (the three curves overlapped in Figure \ref{fig:unmeaningful-upp}). 
% On the other hand, we can see that the \emph{estimation}s of $\delta_\M(\eps)$ derived by \add{both FFT and CLT-based} techniques \saeed{that estimate the parameters (rather than providing an upper bound)} are in fact very close to the ground truth. 
However, as we mentioned earlier, these accurate estimations cannot be used in practice, as we cannot prove that they do not underestimate $\delta_\M(\eps)$. 
The dilemma raises an important question: \emph{can we develop new techniques that allow us to use privacy parameter estimates instead of strict upper bounds in privacy accounting?}\footnote{We note that this not only brings benefits for the regime where $\delta < 10^{-10}$, but also for the more common regime where $\delta \approx 10^{-5}$. See Figure \ref{fig:evr-vs-bound} for an example.}

\begin{figure}[t]
    \centering
    \setlength\belowcaptionskip{0pt}
    \setlength\abovecaptionskip{0pt}
    \includegraphics[width=0.9\columnwidth]{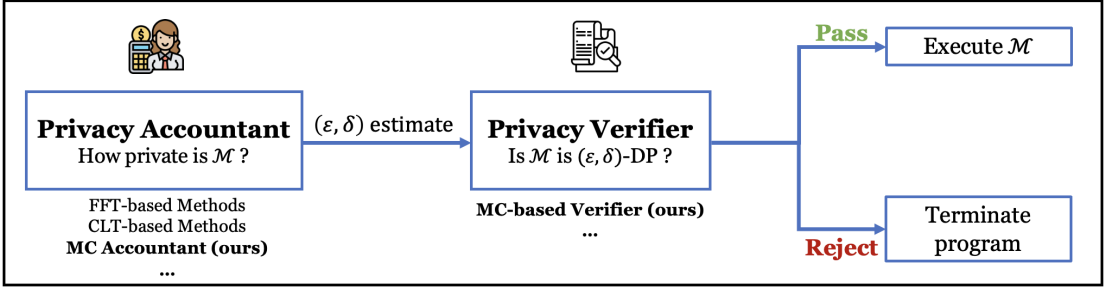}
    \caption{
    An overview of our EVR paradigm. EVR converts an estimated $(\eps, \delta)$ provided by a privacy accountant into a formal guarantee. Compared with the original mechanism, the EVR has an extra failure mode that does not output anything when the estimated $(\eps, \delta)$ is rejected. We show that the MC-based verifier we proposed can achieve negligible failure probability $(O(\delta))$ in Section \ref{sec:utility}. 
    }
    \label{fig:evr}
\end{figure}

% % \vspace{-1mm}
This paper gives a positive answer to it. Our contributions are summarized as follows:

% % \vspace{-1mm}
\textbf{Estimate-Verify-Release (EVR): a DP paradigm that converts privacy parameter estimate into a formal guarantee.} 
We develop a new DP paradigm called \emph{Estimate-Verify-Release}, which augments a mechanism with a formal privacy guarantee based on its privacy parameter estimates. 
The basic idea of EVR is to first verify whether the mechanism satisfies the estimated DP guarantee, and release the mechanism's output if the verification is passed. 
% based on the verification result. 
The core component of the EVR paradigm is \textbf{privacy verification}. A DP verifier can be randomized and imperfect, suffering from both false positives (accept an underestimation) and false negatives (reject an overestimation). We show that EVR's privacy guarantee can be achieved when privacy verification has a low false negative rate. 

% %Given the issue where a good estimation for $\delta_\M(\eps)$ may be a (small) underestimation while a strict upper bound may be vacuous, we develop a simple framework that enables the use of a good estimate for $\delta_\M(\eps)$. 
% %Specifically, g
% Given a mechanism $\M$ and an estimate of its privacy parameter $(\eps, \hat \delta)$, we propose a general paradigm called \emph{Estimate-Verify-Release} which converts $\M$ into an augmented mechanism with $(\eps, \hat \delta)$-privacy guarantee. 
% The core component of the EVR paradigm is a subroutine called \textbf{differential privacy verifier} ($\dpv$). In short, a $\dpv$ is an algorithm that verifies whether a mechanism $\M$ is indeed $(\eps, \hat \delta)$-differentially private. 
% % If $\M$ passes the test, then $\M$ can execute as usual; otherwise, the program is terminated immediately. 
% We show that the augmented mechanism $\Maug$ is guaranteed to be differentially private where the privacy level is determined by $\dpv$'s false positive rate.  

%\saeed{Might be a good idea to change the order of these two paragraphs.}

% when the $\dpv$ has a small false positive rate, the augmented mechanism $\Maug$ is guaranteed to be differentially private. 

% % \vspace{-1mm}
\textbf{A Monte Carlo-based DP Verifier.} 
For an important and widely used class of DP algorithms including Subsampled Gaussian mechanism (the building block for DP-SGD), 
%Privacy Loss Random Variable (PRV) for common mechanisms such as DP-SGD, 
we develop a Monte Carlo (MC) based DP verifier for the EVR paradigm. We present various techniques that ensure the DP verifier has both a low false positive rate (for privacy guarantee) and a low false negative rate (for utility guarantee, i.e., making the EVR and the original mechanism as similar as possible). 

\textbf{A Monte Carlo-based DP Accountant.} 
We further propose a new MC-based approach for DP accounting, which we call the \emph{MC accountant}. It utilizes similar MC techniques as in privacy verification. We show that the MC accountant achieves several advantages over existing privacy accounting methods. 
% FFT and CLT-based methods. %, including (1) accurate estimation in all regimes, (2) particularly at very small $\delta$, and low computational runtime. 
In particular, we demonstrate that MC accountant is efficient for \emph{online privacy accounting}, a realistic scenario for privacy practitioners where one wants to update the estimate on privacy guarantee whenever executing a new mechanism. 

% % \vspace{-1mm}
Figure \ref{fig:evr} gives an overview of the proposed EVR paradigm as well as this paper's contributions. 

\section{Privacy Accounting: a Mean Estimation/Bounding Problem}
\label{sec:setup}

% % \vspace{-2mm}
In this section, we review relevant concepts and introduce privacy accounting as a mean estimation/bounding problem. 

% the definition of differential privacy from the perspective of Hockey-stick divergence, the concept of dominating distribution and privacy loss random variable (PRV), and we show that privacy accounting is a mean estimation problem from the perspective of PRV. 
%We then review the current state-of-the-art privacy accounting techniques.

% % \vspace{-1mm}
\textbf{Symbols and notations.} 
We use $D, D' \in \MS$ to denote two datasets with an unspecified size over space $\cX$. We call two datasets $D$ and $D'$ \emph{adjacent} (denoted as $D \sim D'$) if we can construct one by adding/removing one data point from the other. We use $P, Q$ to denote random variables. We also overload the notation and denote $P(\cdot), Q(\cdot)$ the density function of $P, Q$. 

\textbf{Differential privacy and its equivalent characterizations.} 
Having established the notations, we can now proceed to formally define differential privacy. 

% \begin{definition}[Differential Privacy \cite{dwork2006calibrating}] 
% \label{def:DP}
% For $\eps, \delta \ge 0$, a randomized algorithm $\M : \MS(\cX) \rightarrow \cY$ is 
% {\em $(\eps, \delta)$-differentially private} if for every pair of adjacent datasets $D, D'\in \MS(\cX)$, we have 
% $
% \forall\ E\subseteq \cY\ \Pr[\M(D) \in E] \le e^\eps\cdot \Pr[\M(D') \in E] + \delta
% $
% where the randomness is over the coin flips of algorithm $\M$. 
% \end{definition}

\begin{definition}[Differential Privacy \cite{dwork2006calibrating}] 
\label{def:DP}
For $\eps, \delta \ge 0$, a randomized algorithm $\M : \MS \rightarrow \cY$ is {\em $(\eps, \delta)$-differentially private} if for every pair of adjacent datasets $D \sim D'$ and for every subset of possible outputs $E \subseteq \cY$, we have
$
\Pr_{\M}[\M(D) \in E] \le e^\eps \Pr_{\M}[\M(D') \in E] + \delta 
$. 
% where the randomness is over the coin flips of $\M$. 
\end{definition}

% % \vspace{-1mm}

% \textbf{Alternative Definition of Differential Privacy.}
% We first review an equivalent definition for differential privacy. 

One can alternatively define differential privacy in terms of the maximum possible divergence between the output distribution of any pair of $\M(D)$ and $\M(D')$.

\begin{lemma}[\cite{barthe2013beyond}]
A mechanism $\M$ is $(\eps, \delta)$-DP iff 
$
\sup_{D \sim D'} \hokey_{e^{\eps}} (\M(D) \| \M(D')) \le \delta
$, 
where $\hokey_{\gamma}(P \| Q) := \E_{o \sim Q} [ (\frac{P(o)}{Q(o)} - \gamma)_{+} ]$ \& $(a)_{+} := \max(a, 0)$. 
\end{lemma}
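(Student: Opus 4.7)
The plan is to fix a pair of adjacent datasets $D \sim D'$, abbreviate $P := \M(D)$ and $Q := \M(D')$, and show the pointwise equivalence
\[
\Bigl[\forall E \subseteq \cY,\ P(E) - e^\eps Q(E) \le \delta\Bigr] \iff \hokey_{e^\eps}(P \| Q) \le \delta,
\]
since taking a supremum over adjacent pairs on both sides then yields the lemma. Writing $\gamma = e^\eps$, the key observation is that the hockey-stick divergence admits a variational representation as a supremum of signed measures over events: $\hokey_\gamma(P \| Q) = \sup_{E \subseteq \cY} \bigl(P(E) - \gamma Q(E)\bigr)$, with the supremum attained at $E^\star := \{o : P(o)/Q(o) > \gamma\}$.

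To prove this representation, I would first show the $(\le)$ direction: for any event $E$,
\[
P(E) - \gamma Q(E) = \int_E \Bigl(\tfrac{P(o)}{Q(o)} - \gamma\Bigr) Q(o)\, do \le \int_E \Bigl(\tfrac{P(o)}{Q(o)} - \gamma\Bigr)_+ Q(o)\, do \le \int_\cY \Bigl(\tfrac{P(o)}{Q(o)} - \gamma\Bigr)_+ Q(o)\, do = \hokey_\gamma(P \| Q),
\]
and then the $(\ge)$ direction by plugging in the specific event $E^\star$, which kills the $\max(\cdot,0)$ exactly, giving $P(E^\star) - \gamma Q(E^\star) = \hokey_\gamma(P\|Q)$.

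With this variational identity in hand, the equivalence is immediate. The forward direction of the lemma follows because $(\eps,\delta)$-DP states $P(E) - \gamma Q(E) \le \delta$ for every $E$, so in particular for $E = E^\star$, yielding $\hokey_\gamma(P\|Q) \le \delta$. The reverse direction uses that $\hokey_\gamma(P\|Q) \le \delta$ upper-bounds the supremum over all events, so every $E$ satisfies $P(E) \le \gamma Q(E) + \delta = e^\eps Q(E) + \delta$. Taking the supremum over adjacent pairs $D \sim D'$ on both sides gives the claimed iff.

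The only mild subtlety I anticipate is the measure-theoretic setup: the above manipulations implicitly assume $P \ll Q$ or at least treat densities with respect to a common dominating measure. In full generality one handles this by taking the Lebesgue decomposition $P = P_{ac} + P_s$ with respect to $Q$, defining $dP/dQ$ on the absolutely continuous part, and observing that the singular part contributes to $P(E^\star)$ but not to $\gamma Q(E^\star)$, so it only strengthens the same inequalities. This is a standard bookkeeping step and not a genuine obstacle — the conceptual content of the proof is entirely the variational identity for $\hokey_\gamma$.
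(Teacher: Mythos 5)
Your proof is correct and is the standard argument; the paper itself states this lemma as a cited result from \citet{barthe2013beyond} without proof, and the variational identity $\hokey_\gamma(P\|Q)=\sup_{E}\bigl(P(E)-\gamma Q(E)\bigr)$ attained at $E^\star=\{o: P(o)/Q(o)>\gamma\}$ is exactly how it is established in that reference. One small caution on your closing remark: if $P$ has a $Q$-singular part, the paper's formula $\E_{o\sim Q}[(P(o)/Q(o)-\gamma)_+]$ \emph{undercounts} the supremum (the singular mass shows up in $P(E^\star)$ but not in the $Q$-expectation), so that part would weaken rather than strengthen the direction ``$\hokey\le\delta \Rightarrow$ DP''; the clean fix is the one you name, namely defining the divergence as $\int (dP-\gamma\, dQ)_+$ with respect to a common dominating measure, under which both directions go through verbatim.
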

% % \vspace{-1mm}

$\hokey_{\gamma}$ is usually referred as \emph{Hockey-Stick} (HS) Divergence in the literature. 
For every mechanism $\M$ and every $\eps \ge 0$, there exists a smallest $\delta$ such that $\M$ is $(\eps, \delta)$-DP. Following the literature \cite{zhu2022optimal, alghamdi2022saddle}, we formalize such a $\delta$ as a function of $\eps$. 
% \saeed{Here we can say right away that the problem is a mean estimation but there are two challenges: 1) We don't know the worst case D and D'. 2) Even if we do know the worst case D and D' we still don't know the df of P and Q. Then we say how the existing literature can help us to solve this.}

\begin{definition}[Optimal Privacy Curve]
The \emph{optimal privacy curve} of a mechanism $\M$ is the function $\delta_\M: \R^+ \rightarrow [0, 1]$ s.t. 
$
    \delta_\M(\eps) := \sup_{D \sim D'} \hokey_{e^{\eps}} (\M(D) \| \M(D'))
$. 
% We also define the inverse curve $\eps_\M(\delta)$ by 
% \begin{align}
%     \eps_\M(\delta) := \inf \{\eps: \delta \le \delta_\M(\eps)\}
% \end{align}
\label{def:privacy-curve}
\end{definition}
% % \vspace{-1mm}

%According to Definition \ref{def:privacy-curve}, 
\textbf{Dominating Distribution Pair and Privacy Loss Random Variable (PRV).} It is computationally infeasible to find $\delta_\M(\eps)$ by computing $\hokey_{e^{\eps}} (\M(D) \| \M(D'))$ for all pairs of adjacent dataset $D$ and $D'$. A mainstream strategy in the literature is to find a pair of distributions $(P, Q)$ that dominates all $(\M(D), \M(D'))$ in terms of the Hockey-Stick divergence. This results in the introduction of \emph{dominating distribution pair} and \emph{privacy loss random variable} (PRV). 

% Computing $\delta_\M(\eps)$ by finding $\hokey_{e^{\eps}} (\M(D) | \M(D'))$ for every pair of neighboring datasets $D$ and $D'$ is not feasible computationally. A different strategy is to locate a set of distributions $(P, Q)$ that exceed all $(\M(D), \M(D'))$ in terms of the Hockey-Stick divergence, which results in the introduction of the terms "dominating distribution pair" and "privacy loss random variable" (PRV).

\begin{definition}[\cite{zhu2022optimal}]
\label{def:dpair-prv}
A pair of distributions $(P, Q)$ is a pair of \emph{dominating distributions} for $\M$ under adjacent relation $\sim$ if \textbf{for all} $\gamma \ge 0$, $\sup_{D \sim D'} \hokey_{\gamma} (\M(D) \| \M(D')) \le \hokey_{\gamma} (P \| Q)$. If equality is achieved \textbf{for all} $\gamma \ge 0$, then we say $(P, Q)$ is a pair of \emph{tightly} dominating distributions for $\M$. 
Furthermore, we call $Y := \log \left( \frac{P(o)}{Q(o)} \right), o \sim P$ the \emph{privacy loss random variable (PRV)} of $\M$ associated with dominating distribution pair $(P, Q)$.  
\end{definition}

% % \vspace{-1mm}

Zhu et al. \cite{zhu2022optimal} shows that all mechanisms have a pair of tightly dominating distributions. 
Hence, we can alternatively characterize the optimal privacy curve as $\delta_\M(\eps) = \hokey_{e^\eps}(P \| Q)$ for the tightly dominating pair $(P, Q)$, and we have $\delta_\M(\eps) \le \hokey_{e^\eps} (P \| Q)$ if $(P, Q)$ is a dominating pair that is not necessarily tight. 
% A somewhat surprising result shown in \cite{zhu2022optimal} says that \emph{all} mechanisms have a pair of tightly dominating distributions. 
% This leads to an alternative characterization of the optimal privacy curve, namely, $\delta_\M(\eps) = \hokey_{e^\eps}(P \| Q)$ for tightly dominating distributions $(P, Q)$, and $\delta_\M(\eps) \le \hokey_{e^\eps} (P \| Q)$ if $(P, Q)$ is a dominating pair. 
The importance of the concept of PRV comes from the fact that we can write $\hokey_{e^\eps} (P \| Q)$ as an expectation over it: $\hokey_{e^\eps} (P \| Q) = \E_{Y} \left[ \left(1 - e^{\eps-Y} \right)_{+} \right]$. Thus, one can bound $\delta_\M(\eps)$ by first identifying $\M$'s dominating pair distributions as well as the associated PRV $Y$, and then computing this expectation. 
Such a formulation allows us to bound $\delta_\M(\eps)$ without enumerating over all adjacent $D$ and $D'$. 
%PRV allows us to bound $\delta_\M(\eps)$ without enumerating over all adjacent $D$ and $D'$, which makes it the core component for many existing privacy accounting techniques.
For notation convenience, we denote 
$
    \delta_{Y}(\eps) := \E_{Y} \left[ \left(1 - e^{\eps-Y} \right)_{+} \right]
    \label{eq:deltafunc}.
$ Clearly, $\delta_\M \le \delta_Y$. If $(P, Q)$ is a tightly dominating pair for $\M$, then $\delta_\M = \delta_Y$. 

\textbf{Privacy Accounting as a Mean Estimation/Bounding Problem.}
Privacy accounting aims to estimate and bound the optimal privacy curve $\delta_\M(\eps)$ for adaptively composed mechanism $\M = \M_1 \circ \cdots \circ \M_k(D)$. The \emph{adaptive composition} of two mechanisms $\M_1$ and $\M_2$ is defined as $\M_1 \circ \M_2(D) := (\M_1(D), \M_2(D, \M_1(D)))$, in which $\M_2$ can access both the dataset and the output of $\M_1$. 
% Recall that $\M = \M_1 \circ \cdots \circ \M_k$ is the adaptive composition of mechanisms $\M_1, \ldots, \M_k$. 
% Privacy accounting aims at estimating or upper bounding $\delta_{\M_1 \circ \cdots \circ \M_k}$. 
Most of the practical privacy accounting techniques are based on the concept of PRV, centered on the following result.

\begin{lemma}[\cite{zhu2022optimal}]
Let $\left(P_j, Q_j\right)$ be a pair of tightly dominating distributions for mechanism $\M_j$ for $j \in\{1, \ldots, k\}$. Then $\left(P_1 \times \cdots \times P_k, Q_1 \times \cdots \times Q_k\right)$ is a pair of dominating distributions for $\M = \M_1 \circ \cdots \circ \M_k$, where $\times$ denotes the product distribution. 
Furthermore, the associated privacy loss random variable \add{is} $Y = \sum_{i=1}^k Y_i$ where $Y_i$ is the PRV associated with $(P_i, Q_i)$. 
\label{lemma:dominating-composed}
\end{lemma}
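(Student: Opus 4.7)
My plan is to split the statement into its two pieces: (a) that $(P_1\times\cdots\times P_k, Q_1\times\cdots\times Q_k)$ is a dominating pair for the adaptive composition $\M = \M_1\circ\cdots\circ\M_k$, and (b) that the associated PRV is $Y = \sum_i Y_i$. The PRV identification is the easy half and I would handle it first. The density of $P_1\times\cdots\times P_k$ evaluated at a tuple $(o_1,\ldots,o_k)$ factors as $\prod_i P_i(o_i)$, and likewise for $Q_1\times\cdots\times Q_k$, so the log-density ratio telescopes into $\sum_i \log\!\bigl(P_i(o_i)/Q_i(o_i)\bigr)$. Under sampling $(o_1,\ldots,o_k)\sim P_1\times\cdots\times P_k$, the coordinates are independent with $o_i\sim P_i$, hence each summand is an independent copy of the PRV $Y_i$, and their sum has the claimed distribution.

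For the dominance claim I would proceed by induction on $k$. The base case $k=1$ is exactly the definition of tight dominance in Definition~\ref{def:dpair-prv}. For the inductive step, the idea is to condition on the realized output $o_1$ of $\M_1$. For each fixed $o_1$, the residual mechanism $\M_2(\cdot,o_1)\circ\cdots\circ\M_k(\cdot,\ldots)$ is itself an adaptive composition of $k-1$ mechanisms, and every component $\M_j(\cdot, \text{aux})$ still has $(P_j, Q_j)$ as a tightly dominating pair, because tight dominance in the definition is quantified over \emph{all} adjacent dataset pairs and thus absorbs any auxiliary side-information from previous rounds. The inductive hypothesis then yields that $(P_2\times\cdots\times P_k,\, Q_2\times\cdots\times Q_k)$ dominates the residual composition, uniformly in $o_1$.

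The crux of the argument, and what I expect to be the main obstacle, is the two-step case: showing that if $(P_1,Q_1)$ tightly dominates $\M_1$ and, for every output $o_1$, $(P',Q')$ dominates the residual $(\widetilde{\M}(D)\mid o_1,\widetilde{\M}(D')\mid o_1)$, then $(P_1\times P',\,Q_1\times Q')$ dominates $\M_1\circ\widetilde\M$. A naive attempt to expand $\hokey_\gamma(P_1\times P'\,\|\,Q_1\times Q')$ as an integral over $o_1$ fails because the integrand $(P_1(x)P'(y) - \gamma Q_1(x)Q'(y))_+$ does not factor. The correct approach is to use the equivalent characterization of tight dominance as the existence of a post-processing Markov kernel mapping $(P_1,Q_1)$ to $(\M_1(D),\M_1(D'))$ (Blackwell-type reduction, equivalently the trade-off-function framework of Dong--Roth--Su). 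Stitching together this kernel with the analogous kernels realizing the residual dominance (indexed by $o_1$) produces a single post-processing kernel from $(P_1\times P',\,Q_1\times Q')$ to $(\M_1\circ\widetilde\M(D),\,\M_1\circ\widetilde\M(D'))$, after which the data-processing inequality for $\hokey_\gamma$ finishes the argument.

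The delicate points I would watch out for are: verifying measurability of the $o_1$-indexed family of residual kernels so that the stitching actually defines a valid Markov kernel, and confirming that the ``tight'' hypothesis on $(P_1,Q_1)$ (equality of $\hokey_\gamma$ for all $\gamma$) is strictly needed to obtain the kernel, since mere dominance of $(P_1,Q_1)$ only yields inequalities and does not by itself furnish a coupling on the right side. With those details in place, the induction closes and both halves of the lemma follow.
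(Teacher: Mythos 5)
The paper does not prove this lemma at all---it is imported verbatim from \citet{zhu2022optimal}---so there is no in-paper argument to compare against; I will assess your proposal against the known proofs of this composition theorem. Your overall strategy is sound and essentially matches the standard route: the PRV identification by telescoping log-density ratios is correct, and the induction with a Blackwell-type reduction (realize each dominance relation as a post-processing Markov kernel, stitch the kernel for $\M_1$ together with the $o_1$-indexed kernels for the residual composition, then apply data processing for $\hokey_\gamma$) is a legitimate and complete way to close the two-step case, modulo the measurability of the kernel family, which you correctly flag and which is handled under standard Borel assumptions.

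One conceptual point in your last paragraph is off, though it does not break the proof. You worry that tightness of $(P_1,Q_1)$ is ``strictly needed'' because mere dominance ``does not furnish a coupling.'' In fact the Blackwell--Sherman--Stein theorem gives the kernel in exactly the direction you need from dominance alone: $\hokey_\gamma(\M_1(D)\,\|\,\M_1(D')) \le \hokey_\gamma(P_1\|Q_1)$ for all $\gamma\ge 0$ already implies (using the identity $\hokey_\gamma(Q\|P)=1-\gamma+\gamma \hokey_{1/\gamma}(P\|Q)$ to cover both sides of the trade-off curve) that $(P_1,Q_1)$ is Blackwell-more-informative, hence can be \emph{degraded} to $(\M_1(D),\M_1(D'))$ by a kernel. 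Tightness would give kernels in both directions, but only the downward one is used. This also means your induction goes through with merely dominating (not tightly dominating) pairs for every $\M_j$, which is the generality in which \citet{zhu2022optimal} state the result. Separately, the ``naive'' integral expansion you dismiss can in fact be made to work and yields a more elementary proof that avoids the hard direction of Blackwell: writing
\begin{align}
\hokey_\gamma(P_1\times P'\,\|\,Q_1\times Q') = \E_{x\sim Q_1}\!\left[\, g_\gamma\!\left(\tfrac{P_1(x)}{Q_1(x)}\right)\right],
\qquad g_\gamma(\ell):=\int \left(\ell\, P'(y)-\gamma\, Q'(y)\right)_+ \dd y,
\end{align}
one observes that $g_\gamma$ is convex, applies the inductive hypothesis inside the conditional expectation for the composed mechanism, and then uses the fact that hockey-stick dominance for all $\gamma\ge 0$ implies dominance of the likelihood-ratio law in the convex order. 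Either route is acceptable; yours trades a convexity argument for the (heavier) randomization criterion.
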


% % \vspace{-1mm}
Lemma \ref{lemma:dominating-composed} suggests that privacy accounting for DP composition can be cast into a \emph{mean estimation/bounding problem} where one aims to approximate or bound the expectation in (\ref{eq:deltafunc}) when $Y = \sum_{i=1}^k Y_i$. Note that while Lemma \ref{lemma:dominating-composed} does not guarantee a pair of tightly dominating distributions for the adaptive composition, it cannot be improved in general, as noted in \cite{dong2019gaussian}. Hence, all the current privacy accounting techniques work on $\delta_Y$ instead of $\delta_\M$, as Lemma \ref{lemma:dominating-composed} is tight even for non-adaptive composition. 
Following the prior works, in this paper, we only consider the practical scenarios where Lemma \ref{lemma:dominating-composed} is tight for the simplicity of presentation. That is, we assume $\delta_Y = \delta_\M$ unless otherwise specified. 

%In this paper, we only consider the settings where Lemma \ref{lemma:dominating-composed} is tight (e.g., for the scenario of composing subsampled Gaussian mechanisms). 

Most of the existing privacy accounting techniques can be described as different techniques for such a mean estimation problem. 
\textbf{Example-1: FFT-based methods.} This line of works (e.g., \cite{gopi2021numerical})
% The line of \cite{koskela2020computing, koskela2021computing, koskela21a_FFT, gopi2021numerical, ghazi2022faster, doroshenko2022connect} 
discretizes the domain of each $Y_i$ and use Fast Fourier Transform (FFT) to speed up the approximation of $\delta_Y(\eps)$. The upper bound is derived through the worst-case error bound for the approximation. 
\textbf{Example-2: CLT-based methods.} 
\cite{bu2020deep, wang2022analytical} use CLT to approximate the distribution of $Y = \sum_{i=1}^k Y_i$ as Gaussian distribution. They then use CLT's finite-sample approximation guarantee to derive the upper bound for $\delta_Y(\eps)$.

% As mentioned earlier, strict upper bounds for $\delta_Y(\eps)$ are the only valid options for privacy accounting techniques. 
% In this paper, we develop a new para

%\saeed{Here we can have a road map, something like this: In the rest of the paper, we show how to tackle this mean estimation problem using a Monte Carlo approach by allowing a negligible probability of failure in the estimation. But before doing that, we first show that this small failure probability will not hurt DP.}
%used By Lemma \ref{lemma:dominating- if composed}, the PRV for composed mechanism $\M = \M_1 \circ \cdots \circ \M_k$ is $Y := Y_1 + \ldots + Y_n$, where $Y_i$ are the PRV for $\M_i$. 
% This lemma is the foundation for many recently proposed privacy accounting methods (e.g., \cite{gopi2021numerical, zhu2022optimal}). 

% % % \vspace{-1mm}
\begin{remark}[\textbf{The Importance of Privacy Accounting in Regime $\delta < 10^{-10}$}]
\label{remark:smalldelta}
The regime where $\delta<10^{-10}$ is of significant importance for two reasons. \textbf{(1)} $\delta$ serves as an upper bound on the chance of severe privacy breaches, such as complete dataset exposure, necessitating a ``cryptographically small'' value, namely, $\delta < n^{-\omega(1)}$ \cite{dwork2014algorithmic, vadhan2017complexity}. \textbf{(2)} Even with the oft-used yet questionable guideline of $\delta \approx n^{-1}$ or $n^{-1.1}$, datasets of modern scale, such as JFT-3B \cite{zhai2022scaling} or LAION-5B \cite{schuhmann2022laion}, already comprise billions of records, thus rendering small $\delta$ values crucial. 
While we acknowledge that it requires a lot of effort to achieve a good privacy-utility tradeoff even for the current choice of $\delta \approx n^{-1}$, it is important to keep such a goal in mind.
\end{remark}

% % \vspace{-2mm}
\section{Estimate-Verify-Release}
\label{sec:EVR}

% % \vspace{-2mm}
As mentioned earlier, upper bounds for $\delta_Y(\eps)$ are the only valid options for privacy accounting techniques. However, as we have demonstrated in Figure \ref{fig:unmeaningful-upp}, both FFT- and CLT-based methods can provide overly conservative upper bounds in certain regimes. On the other hand, their \emph{estimates} for $\delta_Y(\eps)$ can be very close to the ground truth even though there is no provable guarantee. Therefore, it is highly desirable to develop new techniques that enable the use of privacy parameter estimates instead of overly conservative upper bounds in privacy accounting.

% % \vspace{-1mm}
We tackle the problem by introducing a new paradigm for constructing DP mechanisms, which we call Estimate-Verify-Release (EVR). The key component of the EVR is an object called DP verifier (Section \ref{sec:dpv}). The full EVR paradigm is then presented in Section \ref{sec:dpv-to-dp}, where the DP verifier is utilized as a building block to guarantee privacy. 

% new concept called differential privacy verifier (DP verifier) introduced in Section \ref{sec:dpv}. 
% The full EVR paradigm is then presented in Section \ref{sec:dpv-to-dp}, where the DP verifier is utilized as a building block to guarantee privacy. 

% % \vspace{-2mm}
\subsection{Differential Privacy Verifier}
\label{sec:dpv}
% % \vspace{-3mm}

We first formalize the concept of \emph{differential privacy verifier}, the central element of the EVR paradigm. In informal terms, a DP verifier is an algorithm that attempts to verify whether a mechanism satisfies a specific level of differential privacy. 

\begin{definition}[Differential Privacy Verifier]
\label{def:dpv}
    We say a differentially private verifier 
    $\dpv(\cdot)$
    %$\dpv: \Mspace \times \R^+ \times [0, 1] \rightarrow \{\true, \false\}$ 
    is an algorithm that takes the description of a mechanism $\M$ and proposed privacy parameter $(\eps, \propdelta)$ as input, and returns $\true \leftarrow \dpv(\M, \eps, \propdelta)$ if the algorithm believes $\M$ is $(\eps, \propdelta)$-DP (i.e., $\propdelta \ge \delta_Y(\eps)$ where $Y$ is the PRV of $\M$), and returns $\false$ otherwise. 
\end{definition}
% % \vspace{-2mm}

% We establish a relaxed notion of false positive (FP) and false negative (FN) rate for DP verifiers. The traditional FP rate is the probability for $\dpv$ to accept $(\eps, \propdelta)$ when $\propdelta < \delta_Y(\eps)$. However, $\propdelta$ is still a good estimate for $\delta_Y(\eps)$ by being a small (e.g., <10\%) underestimate. Hence, we introduce a smoothing factor $\tau \in (0, 1]$ where $\propdelta$ is considered as ``should be rejected'' only when $\propdelta \le \tau \delta_Y(\eps)$. Similar reasoning can be made for FN. This leads to relaxed notions for FP/FN rate: 

A differential privacy verifier can be imperfect, suffering from both false positives (FP) and false negatives (FN). Typically, FP rate is the likelihood for $\dpv$ to accept $(\eps, \propdelta)$ when $\propdelta < \delta_Y(\eps)$. However, $\propdelta$ is still a good estimate for $\delta_Y(\eps)$ by being a small (e.g., <10\%) underestimate. To account for this, we introduce a smoothing factor, $\tau \in (0, 1]$, such that $\propdelta$ is deemed ``should be rejected'' only when $\propdelta \le \tau \delta_Y(\eps)$. A similar argument can be put forth for FN cases where we also introduce a smoothing factor $\rho \in (0, 1]$. This leads to relaxed notions for FP/FN rate: 

\begin{definition}
\label{def:fp-and-fn}
We say a $\dpv$'s $\smooth$-relaxed false positive rate at $(\eps, \propdelta)$ is 
\begin{small}
\begin{align}
    \fp_\dpv(\eps, \propdelta; \smooth) := \sup_{\M: \propdelta < \smooth \delta_Y(\eps)}~~ \Pr_\dpv \left[ \dpv(\M, \eps, \propdelta) = \true \right] \nonumber 
\end{align} 
\end{small}
% % \vspace{-2mm}
We say a $\dpv$'s $\uparam$-relaxed false negative rate at $(\eps, \propdelta)$ is 
\begin{small}
\begin{align}
    \fn_\dpv(\eps, \propdelta; \uparam) := \sup_{\M: \propdelta > \uparam \delta_Y(\eps)} ~~\Pr_\dpv \left[ \dpv(\M, \eps, \propdelta) = \false \right] \nonumber 
\end{align} 
\end{small}
\end{definition}

% % % \vspace{-2mm}
% \add{Several heuristics have tried to perform DP verification, forming a line  of work called \textit{auditing differential privacy} \cite{jagielski2020auditing, nasr2021adversary, lu22general}. However, these techniques can suffer from low false negative rate by computing a lower bound for the actual privacy parameter (see Appendix \ref{appendix:relatedwork} for detail).}

%Specifically, these techniques can verify a claimed privacy parameter by computing a lower bound for the actual privacy parameter, and comparing that with the claimed privacy parameter. The input description of mechanism $\M$ for $\dpv$, in this case, is a black-box oracle $\M(\cdot)$, where the $\dpv$ makes multiple queries to $\M(\cdot)$ and estimates the actual privacy leakage. 
% Privacy auditing techniques can achieve 100\% accuracy when $\propdelta > \delta_Y(\eps)$ (or $0$ $\rho$-FN rate for any $\rho \le 1$), as the computed lower bound is guaranteed to be smaller than $\propdelta$. 
% However, when $\propdelta$ lies between $\delta_Y(\eps)$ and the computed lower bound, the DP verification will be wrong. Moreover, such techniques do not have a guarantee for the lower bound's tightness.
% \tianhao{move to related work, link to appendix}

%, and hence no guarantee for both FP and FN. 

% \saeed{Let's mention that although this could be useful in practice, it doesn't come with any guarantee on accuracy.}

% % \vspace{-1mm}
\textbf{Privacy Verification with DP Accountant.} 
For a composed mechanism $\M = \M_1 \circ \ldots \circ \M_k$, a DP verifier can be easily implemented using any existing privacy accounting techniques. That is, one can execute DP accountant to obtain an estimate or upper bound $(\eps, \hat \delta)$ of the actual privacy parameter. If $\propdelta < \hat \delta$, then the proposed privacy level is rejected as it is more private than what the DP accountant tells; otherwise, the test is passed. The input description of a mechanism $\M$, in this case, can differ depending on the DP accounting method. 
% For the Advanced Composition Theorem \cite{dwork2010boosting}, the input description is simply the $(\eps_i, \delta_i)$ privacy guarantee of each individual mechanism $\M_i$. 
For Moment Accountant \cite{abadi2016deep}, the input description is the upper bound of the moment-generating function (MGF) of the privacy loss random variable for each individual mechanism. For FFT and CLT-based methods, the input description is the cumulative distribution functions (CDF) of the dominating distribution pair of each individual $\M_i$.

\begin{wrapfigure}{r}{0.6\linewidth}
\centering
% \vspace{-2em} % Adjust the vertical spacing as needed
\begin{minipage}{0.6\textwidth}
\begin{algorithm}[H]
\caption{Estimate-Verify-Release (EVR) Framework}
\label{alg:abstract}
\begin{algorithmic}[1] 
\STATE \textbf{Input:}  $\M$: mechanism. $D$: dataset. $(\eps, \propdelta)$: an estimated privacy parameter for $\M$.  \\
\STATE \textbf{if} $\dpv(\M, \eps, \propdelta)$ outputs $\true$ \textbf{then}  Execute $\M(D)$. \\
\STATE \textbf{else} Print $\reject$. 
\end{algorithmic}
\end{algorithm}
% % \vspace{-2em} % Adjust the vertical spacing as needed
\end{minipage}
\end{wrapfigure}

% % \vspace{-3mm}
\subsection{EVR: Ensuring Estimated Privacy with DP Verifier}
\label{sec:dpv-to-dp}

% % \vspace{-3mm}
We now present the full paradigm of EVR. As suggested by the name, it contains three steps: \textbf{(1) Estimate:} A privacy parameter $(\eps, \propdelta)$ for $\M$ is estimated, e.g., based on a privacy auditing or accounting technique. \textbf{(2) Verify:} A DP verifier $\dpv$ is used for validating whether mechanism $\M$ satisfies $(\eps, \propdelta)$-DP guarantee. \textbf{(3) Release:} If DP verification test is passed, we can execute $\M$ as usual; otherwise, the program is terminated immediately. For practical utility, this rejection probability needs to be small when $(\eps, \propdelta)$ is an accurate estimation. The procedure is summarized in Algorithm \ref{alg:abstract}. 

Given estimated privacy parameter $(\eps, \propdelta)$, we have the privacy guarantee for the EVR paradigm:

\begin{restatable}{theorem}{privguarantee}
\label{thm:privguarantee}
Algorithm \ref{alg:abstract} is $(\eps, \propdelta / \smooth)$-DP for any $\tau>0$ if $\fp_{\dpv}(\eps, \propdelta; \smooth) \le 
\propdelta / \smooth$. 
\end{restatable}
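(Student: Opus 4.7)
The plan is to do a case analysis on whether the input parameter $\propdelta$ happens to dominate $\smooth \cdot \delta_Y(\eps)$ (the threshold appearing in the definition of $\fp_{\dpv}$). The crucial structural observation is that $\dpv$ sees only the description of $\M$ and the parameters $(\eps, \propdelta)$, never the dataset $D$; so the Boolean outcome of the verification step is statistically independent of $D$ and of $D'$. This lets us treat the verifier's accept probability as a constant $p := \Pr_\dpv[\dpv(\M, \eps, \propdelta) = \true]$ that is shared between the $D$ and $D'$ executions.

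First I would fix an adjacent pair $D \sim D'$ and a measurable event $E \subseteq \cY \cup \{\reject\}$, and decompose
\[
\Pr[\mathrm{EVR}(D) \in E] \;=\; p \cdot \Pr[\M(D) \in E \setminus \{\reject\}] \;+\; (1-p)\cdot \mathbf{1}[\reject \in E],
\]
with the analogous expression for $D'$. In \textbf{Case 1}, when $\delta_Y(\eps) \le \propdelta/\smooth$, the Barthe et al.\ characterization of DP implies $\M$ itself is already $(\eps, \propdelta/\smooth)$-DP, so applying this guarantee to $\Pr[\M(D) \in E \setminus \{\reject\}]$ and noting that the $(1-p)\mathbf{1}[\reject \in E]$ contribution is identical for $D$ and $D'$ immediately yields the claim.

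In \textbf{Case 2}, when $\delta_Y(\eps) > \propdelta/\smooth$, i.e.\ $\propdelta < \smooth \cdot \delta_Y(\eps)$, the assumed bound $\fp_{\dpv}(\eps, \propdelta; \smooth) \le \propdelta/\smooth$ applies to $\M$, giving $p \le \propdelta/\smooth$. I would then bound $\Pr[\mathrm{EVR}(D)\in E] - e^\eps \Pr[\mathrm{EVR}(D')\in E]$ by splitting on whether $\reject \in E$. If $\reject \notin E$, the left-hand side is at most $p \cdot \Pr[\M(D) \in E] \le p \le \propdelta/\smooth$. If $\reject \in E$, the $(1-p)$ reject contributions yield $(1-p)(1 - e^\eps) \le 0$, while the accept branch contributes at most $p \le \propdelta/\smooth$.

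I do not expect any real obstacle here. The calculation is essentially a post-processing argument once one notices the dataset-independence of $\dpv$; the only thing to be careful about is aligning the case split with the supremum inside $\fp_{\dpv}(\eps, \propdelta; \smooth)$, which ranges over mechanisms with $\propdelta < \smooth \delta_Y(\eps)$, matching Case 2 exactly. Handling the $\reject \in E$ subcase requires a one-line check that $(1-e^\eps)(1-p) \le 0$ for $\eps \ge 0$, which is immediate.
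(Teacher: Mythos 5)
Your proof is correct and takes essentially the same route as the paper's: the paper's events $A$ and $\bar A$ are exactly your case split on whether $\propdelta \ge \smooth\,\delta_Y(\eps)$, and both arguments rest on the same three facts — the verifier's accept probability is dataset-independent, in the first case $\M$ is already $(\eps,\propdelta/\smooth)$-DP, and in the second case the accept probability is bounded by $\fp_\dpv \le \propdelta/\smooth$. The only difference is presentational (the paper writes one chain of inequalities with indicators rather than two explicit cases), so no further comment is needed.
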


% \begin{restatable}{theorem}{privguarantee}
% \label{thm:privguarantee}
% \add{The steps of Verify-Release (line 2-6 in Algorithm \ref{alg:abstract})} is $(\eps, \propdelta / \smooth)$-DP if $\fp_{\dpv}(\eps, \propdelta; \smooth) \le 
% \propdelta / \smooth$. 
% \end{restatable}

% % \vspace{-1mm}
We defer the proof to Appendix \ref{appendix:proof-for-priv}. The implication of this result is that, for any \emph{estimate} of the privacy parameter, one can safely use it as a $\dpv$ with a bounded false positive rate would enforce differential privacy. However, this is not enough: an overly conservative $\dpv$ that satisfies 0 FP rate but rejects everything would not be useful. 
When $\propdelta$ is accurate, we hope the $\dpv$ can also achieve a small \emph{false negative rate} so that the output distributions of EVR and $\M$ are indistinguishable. 
We discuss the instantiation of $\dpv$ in Section \ref{sec:mcdpv}. 

\section{Monte Carlo Verifier of Differential Privacy}
\label{sec:mcdpv}

% % \vspace{-2mm}

As we can see from Section \ref{sec:dpv-to-dp}, a DP verifier ($\dpv$) that achieves a small FP rate is the central element for the EVR framework. In the meanwhile, it is also important that $\dpv$ has a low FN rate in order to maintain the good utility of the EVR when the privacy parameter estimate is accurate. 
In this section, we introduce an instantiation of $\dpv$ based on the Monte Carlo technique that achieves both a low FP and FN rate, assuming the PRV is known for each individual mechanism.

\begin{remark}[\textbf{Mechanisms where PRV can be derived}]
PRV can be derived for many commonly used DP mechanisms such as the Laplace, Gaussian, and Subsampled Gaussian Mechanism \cite{koskela2020computing, gopi2021numerical}. 
In particular, our DP verifier applies for DP-SGD, one of the most important application scenarios of privacy accounting. Moreover, the availability of PRV is also the assumption for most of the recently developed privacy accounting techniques (including FFT- and CLT-based methods). The extension beyond these commonly used mechanisms is an important future work in the field. 
\end{remark}

\begin{remark}[\textbf{Previous studies on the hardness of privacy verification}]
\label{remark:hardness}
Several studies \cite{gaboardi2019complexity, bun2022complexity} have shown that DP verification is an NP-hard problem. However, these works consider the setting where the input description of the DP mechanism is its corresponding randomized Boolean circuits. Some other works \cite{gilbert2018property} show that DP verification is impossible, but this assertion is proved for the black-box setting where the verifier can only query the mechanism. 
Our work gets around this barrier by providing the description of the PRV of the mechanism as input to the verifier. 
\end{remark}

% % \vspace{-2mm}
\subsection{$\dpv$ through an MC Estimator for $\delta_Y(\eps)$}

% % \vspace{-1mm}
Recall that most of the recently proposed DP accountants are essentially different techniques for estimating the expectation 
\begin{small}
\begin{align}
\delta_{Y = \sum_{i=1}^k Y_i}(\eps) = \E_{Y} \left[ \left(1 - e^{\eps-Y} \right)_{+} \right] \nonumber
\end{align}
\end{small}
% are based on the upper bound 
% \begin{align}
% \delta_{\M = \M_1 \circ \ldots \circ \M_k}(\eps) \le \delta_{Y = \sum_{i=1}^k Y_i}(\eps) = \E_{Y} \left[ \left(1 - e^{\eps-Y} \right)_{+} \right] \nonumber
% \end{align}
where each $Y_i$ is the privacy loss random variable $Y_i = \log \left( \frac{P_i(t)}{Q_i(t)} \right)$ for $t \sim P_i$, and $(P_i, Q_i)$ is a pair of dominating distribution for individual mechanism $\M_i$. In the following text, we denote the product distribution $\bp := P_1 \times \ldots \times P_k$ and $\bq := Q_1 \times \ldots \times Q_k$. 
Recall from Lemma \ref{lemma:dominating-composed} that $(\bp, \bq)$ is a pair of dominating distributions for the composed mechanism $\M$. 
For notation simplicity, we denote a vector $\bt := (t^{(1)}, \ldots, t^{(k)})$. 
% We slightly abuse the notation and write $y(t; P, Q) := \log \left(\frac{P(t)}{Q(t)} \right)$. 
% Note that $y(\bt; \bp, \bq) = \sum_{i=1}^k y(t^{(i)}; P_i, Q_i)$. 
% When the context is clear, we omit the dominating pairs and simply write $y(t)$. 
% \tianhao{can cut here more}

\begin{wrapfigure}{r}{0.5\linewidth}
\centering
\vspace{-2.2em} % Adjust the vertical spacing as needed
\begin{minipage}{0.5\textwidth}
\begin{algorithm}[H]
\caption{~$\dpv(\M, \eps, \propdelta)$ with Simple MC Estimator and Offset Parameter $\Delta$.}
\label{alg:mcdpv}
\begin{algorithmic}[1] 
%\STATE  Obtain an estimate $\widehat{\delta}$ from an MC estimator for $\delta_Y(\eps)$.  
\STATE  Obtain i.i.d. samples $\{\bt_i\}_{i=1}^m$ from $\bp$. 
\STATE  Compute $\widehat{\delta} = \frac{1}{m} \sum_{i=1}^m \left(1 - e^{\eps-y_i} \right)_{+}$ with PRV samples $y_i = \log \left(\frac{\bp(\bt_i)}{\bq(\bt_i)}\right), i = 1 \ldots m$.  
\STATE  \textbf{if} $\widehat{\delta} < \frac{\propdelta}{\smooth} - \Delta$  \textbf{then}  return $\true$. \\
\STATE  \textbf{else} return $\false$.
\end{algorithmic}
\end{algorithm}
\end{minipage}
% % \vspace{-1em}
\end{wrapfigure}
\emph{Monte Carlo} (MC) technique is arguably one of the most natural and widely used techniques for approximating expectations. Since $\delta_Y(\eps)$ is an expectation in terms of the PRV $Y$, one can apply MC-based technique to estimate it. 
Given an MC estimator for $\delta_Y(\eps)$, we construct a $\dpv(\M, \eps, \propdelta)$ as shown in Algorithm \ref{alg:mcdpv} (instantiated by the Simple MC estimator introduced in Section \ref{sec:mc-estimator}). 
Specifically, we first obtain an estimate $\widehat{\delta}$ from an MC estimator for $\delta_Y(\eps)$. The estimate $\propdelta$ passes the test if $\widehat{\delta} < \frac{\propdelta}{\smooth} - \Delta$, and fails otherwise. The parameter $\Delta \ge 0$ here is an offset that allows us to conveniently controls the $\smooth$-relaxed false positive rate. 
We will discuss how to set $\Delta$ in Section \ref{sec:utility}. 
%We defer the discussion on how to set $\Delta$ in Section \ref{sec:utility}. 

% as follows. 
% \saeed{Mention again that this is only possible because we know there exists a dominating pair of distributions with known df.}:

% \fbox{\parbox{0.97\columnwidth}{
% MC-based $\dpv(\M, \eps, \propdelta)$:

% \begin{enumerate}[itemsep=0em, leftmargin=5.5mm]
% \item[\textbf{(1)}] Obtain an estimate $\widehat{\delta}$ from an MC estimator for $\delta_Y(\eps)$. 
% \item[\textbf{(2)}] If $\widehat{\delta} < \frac{\propdelta}{\smooth} - \Delta$, return $\true$, otherwise return $\false$.
% \end{enumerate}
% }}

% \begin{algorithm}[h]
% \SetAlgoLined
% \SetKwInOut{Input}{input}
% \SetKwInOut{Output}{output}

% Obtain an estimate $\widehat{\delta}$ from an MC estimator for $\delta_Y(\eps)$. 

% \uIf{ $\widehat{\delta} < \frac{\propdelta}{\smooth} - \Delta$ }{
% \Return{$\true$}
% }
% \uElse{
% \Return{$\false$}
% }

% \caption{
% MC-based $\dpv(\M, \eps, \propdelta)$
% }
% \label{alg:mcdpv}
% \end{algorithm}

% \newcommand{\width}{0.5}
% \begin{wrapfigure}{R}{0.5\columnwidth}
%     %\setlength\intextsep{0pt}
%     % % % \vspace{-10pt}
%     %\setlength\belowcaptionskip{-20pt}
%     \centering
%     \includegraphics[width=\width\columnwidth]{}
%     % % % \vspace{-20pt}
%     \label{alg:mcdpv}
% \end{wrapfigure}

% To see this, suppose $\propdelta < \delta_\M(\eps) \smooth$, then $\Pr[\widehat{\delta} < \frac{\propdelta}{\smooth} - \Delta] \le \Pr[\widehat{\delta} < \delta_\M(\eps) - \Delta]$. 

% % % \vspace{-1mm}
In the following contents, we first present two constructions of MC estimators for $\delta_Y(\eps)$ %based on different MC techniques 
in Section \ref{sec:mc-estimator}. 
We then discuss the condition for which our MC-based $\dpv$ achieves a certain target FP rate in Section \ref{sec:sample-comp}. 
Finally, we discuss the utility guarantee for the MC-based $\dpv$ in Section \ref{sec:utility}.

% In the following contents, we first present the simple MC estimator, and then introduce a more advanced MC estimator based on Importance Sampling. 

% Obtaining the analytical solution of (\ref{eq:deltadef}) is usually infeasible for many privacy loss random variables. Many of the works are based on evaluating convolution by discretization and fast Fourier transform (FFT). However, the FFT-based methods suffer from high-memory consumption and floating-point errors. 
%\emph{Monte Carlo} technique is arguably one of the most natural ideas for approximating an expectation/integral. 
%However, this approach has not been discussed in any existing literature in the field of privacy accounting, possibly due to the issue that it does not give a strict upper bound for $\delta_{Y}(\eps)$. The post-processing technique we proposed in Section \ref{sec:dpv-to-dp} enables the use of the MC-based technique. 
% We first present the simple but naive MC technique, and then introduce a more advanced MC technique based on Importance Sampling. 

% % \vspace{-3mm}
\subsection{Constructing MC Estimator for $\delta_Y(\eps)$}
\label{sec:mc-estimator}

% % \vspace{-3mm}
In this section, we first present a simple MC estimator that applies to any mechanisms where we can derive and sample from the dominating distribution pairs. 
Given the importance of Poisson Subsampled Gaussian mechanism for privacy-preserving machine learning, we further design a more advanced and specialized MC estimator for it based on the importance sampling technique. 

\textbf{Simple Monte Carlo Estimator.} 
One can easily sample from $Y$ by sampling $\bt \sim \bp$ and output $\log \left(\frac{\bp(\bt)}{\bq(\bt)}\right)$. 
%$y(\bt; \bp, \bq)$. 
%$y = y(\bt; \bp, \bq) = \log \left(\frac{\bp(\bt)}{\bq(\bt)}\right)$. 
Hence, a straightforward algorithm for estimating (\ref{eq:deltafunc}) is the Simple Monte Carlo (SMC) algorithm, which directly samples from the privacy random variable $Y$. We formally define it here. 
%\saeed{Let's describe the algorithm for getting such a sample using Lemma 7. }
% % \vspace{-2mm}

\begin{definition}[Simple Monte Carlo (SMC) Estimator]
\label{def:smc}
We denote $\deltamc^{m}(\eps)$ as the random variable of SMC estimator for $\delta_Y(\eps)$ with $m$ samples, i.e., 
$
    \deltamc^{m}(\eps) := \frac{1}{m} \sum_{i=1}^m \left(1 - e^{\eps-y_i} \right)_{+} 
$ for $y_1, \ldots, y_m$ i.i.d. sampled from $Y$. 
\end{definition}
% % \vspace{-2mm}

% Recall that one can easily sample from $Y$ by sampling $\bt \sim \bp$ and output $y(\bt; \bp, \bq)$. 
% %$y = y(\bt; \bp, \bq) = \log \left(\frac{\bp(\bt)}{\bq(\bt)}\right)$. 
% SMC can serve as a general DP verifier for any mechanisms where we can derive and sample from the dominating distribution pairs. 

\textbf{Importance Sampling Estimator for Poisson Subsampled Gaussian (Overview).} 
As $\delta_Y(\eps)$ is usually a tiny value ($10^{-5}$ or even cryptographically small), it is likely that by naive sampling from $Y$, almost all of the samples in $\{(1-e^{\eps-y_i})_+\}_{i=1}^m$ are just 0s! That is, the i.i.d. samples $\{y_i\}_{i=1}^m$ from $Y$ can rarely exceed $\eps$. To further improve the sample efficiency, one can potentially use more advanced MC techniques such as Importance Sampling or MCMC. However, these advanced tools usually require additional distributional information about $Y$ and thus need to be developed case-by-case. 

%When $\delta_Y(\eps)$ is tiny (e.g., when composing Subsampled Gaussian mechanisms with a small sampling rate), the samples $\{y_i\}_{i=1}^m$ from $Y$ can rarely exceed $\eps$. This means that almost all of the $\{(1-e^{\eps-y_i})_+\}_{i=1}^m$ are just 0s! To further improve the sample efficiency, one can utilize advanced MC techniques such as MCMC or Importance Sampling. These advanced techniques may require additional distributional information about $Y$ and thus need to be developed case-by-case. 

Poisson Subsampled Gaussian mechanism is the main workhorse behind the DP-SGD algorithm \cite{abadi2016deep}. 
Given its important role in privacy-preserving ML, we derive an advanced MC estimator for it based on the Importance Sampling technique. 
Importance Sampling (IS) is a classic method for rare event simulation \cite{tokdar2010importance}. It samples from an alternative distribution instead of the distribution of the quantity of interest, and a weighting factor is then used for correcting the difference between the two distributions. 
The specific design of alternative distribution is complicated and notation-heavy, and we defer the technical details to Appendix \ref{appendix:importance-sampling}. 
At a high level, we construct the alternative sampling distribution based on the \emph{exponential tilting} technique and derive the optimal tilting parameter such that the corresponding IS estimator approximately achieves the smallest variance. 
Similar to Definition \ref{def:smc}, we use $\deltaism^m$ to denote the random variable of importance sampling estimator with $m$ samples.

\subsection{Bounding FP Rate}
\label{sec:sample-comp}
% % \vspace{-2mm}

% % % \vspace{-1mm}
We now discuss the FP guarantee for the $\dpv$ instantiated by $\deltamc^m$ and $\deltaism^m$ we developed in the last section. 
Since both estimators are unbiased, by Law of Large Number, both $\deltamc^m$ and $\deltaism^m$ converge to $\delta_Y(\eps)$ almost surely as $m \rightarrow \infty$, which leads a $\dpv$ with perfect accuracy. Of course, $m$ cannot go to $\infty$ in practice. In the following, we derive the required amount of samples $m$ for ensuring that $\smooth$-relaxed false positive rate is smaller than $\propdelta/\smooth$ for $\deltamc^m$ and $\deltaism^m$. We use $\deltamc$ (or $\deltaism$) as an abbreviation for $\deltamc^1$ (or $\deltaism^1$), the random variable for a single draw of sampling. We state the theorem for $\deltamc^m$, and the same result for $\deltaism^m$ can be obtained by simply replacing $\deltamc$ with $\deltaism$. We use $\fp_\mc$ to denote the FP rate for $\dpv$ implemented by SMC estimator. 

\begin{restatable}{theorem}{smcsamplecomp}
\label{thm:smc-samplecomp}
    Suppose $\E \left[ \left( \deltamc \right)^2 \right] \le \varbound$. $\dpv$ instantiated by $\deltamc^m$ has bounded $\smooth$-relaxed false positive rate $\fp_{\mc}(\eps, \propdelta; \smooth) \le \propdelta/\smooth$ with $m \ge \frac{2\varbound}{\Delta^2} \log(\smooth/\propdelta)$. 
\end{restatable}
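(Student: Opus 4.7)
The plan is to reduce the statement to a standard one-sided concentration inequality for the sample mean, specialized to the fact that the per-sample integrand is non-negative with bounded second moment.

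First I would unpack the definitions. Fix an arbitrary mechanism $\M$ satisfying $\propdelta < \smooth \delta_Y(\eps)$, i.e.\ $\delta_Y(\eps) > \propdelta/\smooth$. By Definition \ref{def:smc}, $\deltamc^m(\eps) = \frac{1}{m}\sum_{i=1}^m X_i$ where each $X_i$ is an i.i.d.\ copy of $X := (1-e^{\eps - Y})_+$, which takes values in $[0,1]$ and satisfies $\E[X] = \delta_Y(\eps)$ and $\E[X^2] \le \varbound$ by hypothesis. The DPV in Algorithm \ref{alg:mcdpv} returns $\true$ iff $\deltamc^m(\eps) < \propdelta/\smooth - \Delta$. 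Since $\E[\deltamc^m(\eps)] = \delta_Y(\eps) > \propdelta/\smooth$, the event $\{\deltamc^m(\eps) < \propdelta/\smooth - \Delta\}$ is contained in $\{\deltamc^m(\eps) < \E[\deltamc^m(\eps)] - \Delta\}$, so it suffices to show that this lower-tail event has probability at most $\propdelta/\smooth$.

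Next I would establish a left-tail Chernoff bound tailored to non-negative summands with bounded second moment (rather than bounded range), because the target sample complexity scales with $\varbound$ and not with a range-dependent term like Hoeffding's $1/4$. The key inequality is the pointwise bound $e^{-y} \le 1 - y + y^2/2$ for $y \ge 0$, which follows from the Lagrange form of Taylor's remainder. Applied to $\lambda X_i \ge 0$ with $\lambda > 0$, this yields
\begin{equation}
\E[e^{-\lambda X_i}] \le 1 - \lambda \E[X_i] + \tfrac{\lambda^2}{2}\E[X_i^2] \le \exp\!\big({-\lambda \mu + \tfrac{\lambda^2 \varbound}{2}}\big), \nonumber
\end{equation}
where $\mu = \E[X_i]$. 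The standard exponential Markov argument then gives, for any $\lambda > 0$,
\begin{equation}
\Pr\!\left[\tfrac{1}{m}\sum_i X_i \le \mu - \Delta\right] \le \exp\!\big(\lambda m (\mu - \Delta)\big)\,\prod_{i=1}^m \E[e^{-\lambda X_i}] \le \exp\!\big({-\lambda m \Delta + \tfrac{m \lambda^2 \varbound}{2}}\big). \nonumber
\end{equation}
Optimizing over $\lambda$ by setting $\lambda = \Delta/\varbound$ collapses the exponent to $-m\Delta^2/(2\varbound)$.

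Finally I would solve for $m$. Imposing $\exp(-m\Delta^2/(2\varbound)) \le \propdelta/\smooth$ rearranges to $m \ge \frac{2\varbound}{\Delta^2}\log(\smooth/\propdelta)$, which is exactly the stated requirement. Because the bound on $\E[e^{-\lambda X}]$ used only that $X \ge 0$ and $\E[X^2]\le \varbound$, and because the argument held for every $\M$ in the supremum defining $\fp_{\mc}(\eps,\propdelta;\smooth)$, taking the supremum concludes the proof.

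I do not anticipate a serious obstacle: the only subtle step is choosing the right Chernoff variant, since Hoeffding (using the $[0,1]$ range) would give a $\varbound$-independent constant and a Bernstein bound would introduce an extra $\Delta/3$ additive term inside the denominator. The clean $\varbound$-only expression in the theorem forces us to exploit non-negativity via the quadratic upper bound on $e^{-y}$, which is the pivotal observation for matching the claimed constant.
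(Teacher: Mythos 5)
Your proof is correct, and the reduction step (restricting to $\M$ with $\propdelta < \smooth\,\delta_Y(\eps)$, noting $\propdelta/\smooth < \delta_Y(\eps)$, and hence containing the acceptance event in the lower-tail event $\{\deltamc^m < \delta_Y(\eps) - \Delta\}$) is identical to the paper's. Where you diverge is in the concentration step: the paper invokes Bennett's inequality for the summands $\delta_Y(\eps) - \deltamc^{(i)}$, which are bounded above by $\delta_Y(\eps)$, and then passes to the limit $b \to 0^+$ to collapse $\exp\bigl(-\tfrac{m\varbound}{b^2}h(b\Delta/\varbound)\bigr)$ to $\exp\bigl(-\tfrac{m\Delta^2}{2\varbound}\bigr)$, whereas you derive the same exponent directly from the Chernoff method using $e^{-y} \le 1 - y + y^2/2$ for $y \ge 0$ and optimizing $\lambda = \Delta/\varbound$. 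The two arguments are proving the same underlying fact (the lower tail of a mean of non-negative variables is sub-Gaussian with the second moment as variance proxy), but your route is the cleaner and, strictly speaking, the more rigorous one: since the Bennett exponent $\tfrac{\varbound}{b^2}h(bt/\varbound)$ is decreasing in $b$, the paper's limit $b \to 0^+$ yields a bound slightly \emph{stronger} than what Bennett actually licenses at the true value $b = \delta_Y(\eps) > 0$ (the discrepancy is negligible here because $\delta_Y(\eps)\Delta/\varbound$ is tiny, but it is a gap in the limiting argument). Your direct MGF bound requires only non-negativity and $\E[X^2] \le \varbound$ and delivers the exact constant $2\varbound/\Delta^2$ with no limiting step, so it matches the theorem statement without caveat.
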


The proof is based on Bennett's inequality and is deferred to Appendix \ref{appendix:sample-complexity}. 
This result suggests that, to improve the computational efficiency of MC-based $\dpv$ (i.e., tighten the number of required samples), it is important to tightly bound $\E[(\deltamc)^2]$ (or $\E[(\deltais)^2]$), the second moment of $\deltamc$ (or $\deltaism$). 
%, which we are going to discuss next. 

% In the following, we bound the sample complexity of $\deltaism^m(\eps)$ required for achieving the small enough false positive rate
% % \begin{align}
% %     &\fp_{\mc}(\M, \eps, \propdelta) \\
% %     &:=  \Pr \left[ \deltamc^m(\eps) < \propdelta - \Delta \right] \nonumber 
% % \end{align}

% \begin{small}
% \begin{align}
%     &\fp_\mc(\eps, \propdelta; \smooth) \nonumber \\
%     &:= \sup_{\M: \propdelta < \smooth \delta_\M(\eps)} \Pr \left[ \deltamc^m(\eps) < \frac{\propdelta}{\smooth} - \Delta \right] \nonumber 
% \end{align}
% \end{small}

% The sample complexity for $\deltamc^m$ estimator is a special case when $\theta = 0$. 
% When it's clear in the context, we omit the argument and simply write $\deltamc^m$ (or $\deltaism^m$). We also use $\deltamc$ (or $\deltaism$) as an abbreviation for $\deltamc^1$ (or $\deltaism^1$), the random variable for a single draw of sampling. 
\textbf{Bounding the Second-Moment of MC Estimators (Overview).}
For clarity, we defer the notation-heavy results and derivation of the upper bounds for $\E[(\deltamc)^2]$ and $\E[(\deltaism)^2]$ to Appendix \ref{appendix:moment-bound}. 
Our high-level idea for bounding $\E[(\deltamc)^2]$ is through the RDP guarantee for the composed mechanism $\M$. This is a natural idea since converting RDP to upper bounds for $\delta_Y(\eps)$ -- the first moment of $\deltamc$ -- is a well-studied problem \cite{mironov2017renyi, canonne2020discrete, asoodeh2021three}. 
Bounding $\E[(\deltaism)^2]$ is highly technically involved. 
\subsection{Guaranteeing Utility}
\label{sec:utility}
% % \vspace{-2mm}

\textbf{Overall picture so far.} 
Given the proposed privacy parameter $(\eps, \propdelta)$, a tolerable degree of underestimation $\tau$, and an offset parameter $\Delta$, one can now compute the number of samples $m$ required for the MC-based $\dpv$ such that $\tau$-relaxed FP rate to be $\le \propdelta / \tau$ based on the results from Section \ref{sec:sample-comp} and Appendix \ref{appendix:moment-bound}. 
We have not yet discussed the selection of the hyperparameter $\Delta$. 
An appropriate $\Delta$ is important for the utility of MC-based $\dpv$. That is, when $\propdelta$ is not too smaller than $\delta_Y(\eps)$, the probability of being rejected by $\dpv$ should stay negligible. If we set $\Delta \rightarrow \infty$, the $\dpv$ simply rejects everything, which achieves 0 FP rate (and with $m=0$) but is not useful at all! 

% We now discuss how to guarantee the utility of the $\dpv$ implemented by Monte Carlo techniques. That is, when $\propdelta$ is not too smaller than $\delta_Y(\eps)$, the probability of being rejected by $\dpv$ should stay negligible. As we mentioned earlier, a trivial $\dpv$ that achieves zero FP rate can simply reject everything, which is not useful at all. 

% To begin with, we quantify the utility of a $\dpv$ through the notion of $\uparam$-relaxed false negative (FN) rate, which analogizes to the notion of $\smooth$-relaxed false positive rate (Definition \ref{def:fp}). 

% % \vspace{-1mm}
Formally, the utility of a $\dpv$ is quantified by the $\uparam$-relaxed false negative (FN) rate (Definition \ref{def:fp-and-fn}). 
While one may be able to bound the FN rate through concentration inequalities, 
%as how we bound FP rate in Section \ref{sec:sample-comp}, 
a more convenient way is to pick an appropriate $\Delta$ such that $\fn_\dpv$ is \emph{approximately} smaller than $\fp_\dpv$. After all, $\fp_\dpv$ already has to be a small value $\le \propdelta / \tau$ for privacy guarantee. 
The result is stated informally in the following (holds for both $\deltamc$ and $\deltaism$), and the involved derivation is deferred to Appendix \ref{appendix:proof-for-util}. 
\begin{theorem}[Informal]
\label{thm:Delta-informal}
When $\Delta = \heuristic$, then
$
    \fn_\mc(\eps, \propdelta; \uparam) 
    \lessapprox 
    \fp_\mc(\eps, \propdelta; \smooth) 
$. 
\end{theorem}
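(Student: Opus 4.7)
The plan is to reduce both the FP and the FN rates to the two tails of the same empirical mean $\hat\delta$, and then choose $\Delta$ so that those two tail probabilities are comparable. For concreteness I write the argument for $\deltamc^m$; the $\deltaism^m$ version is identical once the matching second-moment bound is in hand.

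First, identify the worst-case mechanism in each definition. In $\fp_{\mc}(\eps, \propdelta; \smooth)$, the estimator mean $\delta_Y(\eps)$ is constrained to $(\propdelta/\smooth, \infty)$, and the acceptance event $\{\hat\delta < \propdelta/\smooth - \Delta\}$ is a \emph{lower}-tail event whose probability is maximized as $\delta_Y(\eps)\downarrow \propdelta/\smooth$. This reduces FP to bounding the probability that $\hat\delta$ falls at least $\Delta$ below its mean. Symmetrically, in $\fn_{\mc}(\eps, \propdelta; \uparam)$, the rejection event $\{\hat\delta \geq \propdelta/\smooth - \Delta\}$ is an \emph{upper}-tail event whose probability is maximized as $\delta_Y(\eps)\uparrow \propdelta/\uparam$, reducing FN to bounding the probability that $\hat\delta$ exceeds its mean by at least $(1/\smooth - 1/\uparam)\propdelta - \Delta$. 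Both reduced problems use the same estimator and the same second-moment bound $\varbound$ that underlies Theorem \ref{thm:smc-samplecomp}.

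Second, apply Bennett's inequality to each tail. Because each summand $(1 - e^{\eps - y_i})_{+}$ lies in $[0,1)$, Bennett's upper-tail bound carries an additional linear-in-deviation correction in the denominator of the exponent, whereas its lower-tail counterpart does not. Consequently, for two deviations of equal magnitude the lower tail decays strictly faster, so in order to guarantee $\fn_\mc \lessapprox \fp_\mc$ one must choose $\Delta$ so that the upper-tail deviation $(1/\smooth - 1/\uparam)\propdelta - \Delta$ is strictly larger than the lower-tail deviation $\Delta$. The symmetric (sub-Gaussian) balance would yield $\Delta = (1/\smooth - 1/\uparam)\propdelta/2$; shaving the constant to absorb the Bennett correction gives the heuristic $\Delta = \heuristic$, which pushes the FN deviation to $0.6(1/\smooth - 1/\uparam)\propdelta = 1.5\,\Delta$, providing the extra margin needed to dominate the weaker upper-tail decay in the target regime of small $\propdelta$ and moderate $\varbound$.

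The main obstacle is precisely why the statement is flagged informal: Bennett's upper and lower tails are genuinely different functions, so $\fn_\mc \lessapprox \fp_\mc$ is not an exact inequality and the $0.4$ constant cannot be derived from a clean variational principle. I would first establish the clean sub-Gaussian version with constant $1/2$, which follows immediately by equating the two exponents after dropping the Bennett correction. I would then argue that in the finite-sample regime of interest, the ratio of upper- to lower-tail exponents stays bounded above a constant strictly larger than $1$, so replacing $1/2$ by $0.4$ in the formula for $\Delta$ is sufficient to ensure the FN exponent dominates the FP exponent; the detailed bookkeeping of the Bennett correction is presumably what is carried out formally in Appendix \ref{appendix:proof-for-util}.
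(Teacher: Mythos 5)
Your first reduction (FP as the lower tail of the empirical mean at deviation $\Delta$, FN as the upper tail at deviation $(1/\smooth-1/\uparam)\propdelta-\Delta$) matches the paper's, but from there the two arguments part ways, and yours has a genuine gap. You propose to bound \emph{both} tails by Bennett's inequality and pick $\Delta$ so that the upper-tail bound is dominated by the lower-tail one. The paper explicitly declines this route (``the guarantee may be loose''), and the looseness is fatal rather than cosmetic. The lower-tail exponent is $m\Delta^2/(2\varbound)$ (the summands are bounded above by $\delta_Y(\eps)\approx 0$, so $b\to 0^+$), while the upper-tail exponent is $m\varbound\,h(1.5\Delta/\varbound)$ with $b\approx 1$ and $h(y)=(1+y)\log(1+y)-y$. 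Your claim thus requires $h(1.5x)\ge x^2/2$ for $x=\Delta/\varbound$, which fails once $x\gtrsim 4$. Since $\Delta=\Theta(\propdelta)$ while the second moment $\varbound$ of the importance-sampling estimator is driven toward $\delta_Y(\eps)^2$ by design, the regime $\Delta/\varbound\gg 1$ is precisely the one the verifier targets, and there no constant in place of $0.4$ rescues the comparison: any constant multiple of $\Delta$ as the upper-tail deviation leaves the Bennett upper-tail exponent growing like $\Delta\log(\Delta/\varbound)$ against a lower-tail exponent growing like $\Delta^2/\varbound$. So the assertion that ``the ratio of upper- to lower-tail exponents stays bounded above a constant strictly larger than $1$'' is unsupported and false in general.

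The paper's actual argument is distributional rather than concentration-based. It assumes the estimator is zero-inflated, $\Pr[\deltamc=0]\ge 1/2$ (Assumption \ref{assumption:zero}), constructs a symmetrized auxiliary variable $\deltamctil$ with the same mean, proves by induction on $m$ that the empirical mean of this symmetric variable has lower tail at least as heavy as its upper tail at every threshold (Lemma \ref{lemma:asymmetry}), and transfers the conclusion back to $\deltamc^m$ through a coupling that costs a slack $c$ on each of two sides. Taking $c=\Delta/4$ turns the upper-tail deviation into $\Delta+2c=1.5\Delta$, and equating $1.5\Delta=(1/\smooth-1/\uparam)\propdelta-\Delta$ gives $\Delta=\heuristic$. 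So although you land on the same ``$1.5\Delta$'' arithmetically, the $0.4$ does not come from shaving a sub-Gaussian $1/2$ to absorb a Bennett correction; it comes from the coupling slack, and the tail comparison itself rests on the shape of $\deltamc$'s distribution (the point mass at zero), not on any concentration inequality. This also means the paper compares the two actual tail probabilities, whereas you only compare two upper bounds on them. To salvage your route you would have to restrict to a regime with $\varbound\gtrsim\Delta$, which excludes exactly the variance-reduced estimator the theorem is meant to cover.
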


% similar techniques that we bound the false positive rate, i.e., applying the concentration inequalities, the guarantee may be loose as the second-moment bounds we derived in Section \ref{sec:variance} are not tight. 
% Instead, our main idea is to pick an appropriate $\Delta$ such that $\fn_\dpv$ is \emph{approximately} smaller than $\fp_\dpv$. The detailed analysis is technically involved and we defer it to Appendix \ref{appendix:proof-for-util}. 
% Our main result is stated informally in the following:
% \begin{theorem}[Informal]
% When $\Delta = \heuristic$, we have 
% \begin{align}
%     \fn_\dpv(\eps, \propdelta; \uparam) 
%     \lessapprox 
%     \fp_\dpv(\eps, \propdelta; \smooth) 
% \end{align} 
% \end{theorem}
% This holds for both $\deltamc$ and $\deltaism$. 

% % \vspace{-2mm}

Therefore, by setting $\Delta = \heuristic$, one can ensure that $\fn_{\mc}(\eps, \propdelta; \uparam)$ is also (approximately) upper bounded by $\Theta( \propdelta/\smooth )$. Moreover, in Appendix, we empirically show that the FP rate is actually a very conservative bound for the FN rate. 
Both $\tau$ and $\rho$ are selected based on the tradeoff between privacy, utility, and efficiency. 

The pseudocode of privacy verification for DP-SGD is summarized in Appendix \ref{appendix:pseudocode}. 

\section{Monte Carlo Accountant of Differential Privacy}
\label{sec:mcacct}

% % \vspace{-2mm}
The Monte Carlo estimators $\deltamc$ and $\deltaism$ described in Section \ref{sec:mc-estimator} are used for implementing DP verifiers. One may already realize that the same estimators can also be utilized to directly implement a DP accountant which \emph{estimates} $\delta_Y(\eps)$. It is important to note that with the EVR paradigm, DP accountants are no longer required to derive a strict upper bound for $\delta_Y(\eps)$. We refer to the technique of estimating $\delta_Y(\eps)$ using the MC estimators as \emph{Monte Carlo accountant}.

\begin{algorithm}[h]
\setlength{\textfloatsep}{0pt}
\caption{MC Accountant for $\eps_Y(\delta)$.
}
\label{alg:mcaccountant}
\begin{algorithmic}[1] 
\STATE Obtain PRV samples $\{y_i\}_{i=1}^m$ with either Simple MC or Importance Sampling. 
\STATE Binary search $\eps$ such that $\frac{1}{m} \sum_{i=1}^m \left(1 - e^{\eps-y_i} \right)_{+} = \delta$. 
\STATE Return $\eps$. 
\end{algorithmic}
\end{algorithm}

% \begin{wrapfigure}{r}{0.5\linewidth}
% \centering
% % \vspace{-2em} % Adjust the vertical spacing as needed
% \begin{minipage}{0.5\textwidth}
% \begin{algorithm}[H]
% \setlength{\textfloatsep}{0pt}
% \caption{MC Accountant for $\eps_Y(\delta)$.
% }
% \label{alg:mcaccountant}
% \begin{algorithmic}[1] 
% \STATE Obtain PRV samples $\{y_i\}_{i=1}^m$ with either Simple MC or Importance Sampling. 
% \STATE Binary search $\eps$ such that $\frac{1}{m} \sum_{i=1}^m \left(1 - e^{\eps-y_i} \right)_{+} = \delta$. 
% \STATE Return $\eps$. 
% \end{algorithmic}
% \end{algorithm}
% \end{minipage}
% % % \vspace{-1em}
% \end{wrapfigure}

% % \vspace{-1mm}
\textbf{Finding $\eps$ for a given $\delta$.} It is straightforward to implement MC accountant when we fix $\eps$ and compute for $\delta_Y(\eps)$. In practice, privacy practitioners often want to do the inverse: finding $\eps$ for a given $\delta$, which we denote as $\eps_Y(\delta)$. Similar to the existing privacy accounting methods, we use binary search to find $\eps_Y(\delta)$ (see Algorithm \ref{alg:mcaccountant}). Specifically, after generating PRV samples $\{y_i\}_{i=1}^m$, we simply need to find the $\eps$ such that $\frac{1}{m} \sum_{i=1}^m \left(1 - e^{\eps-y_i} \right)_{+} = \delta$. We do \textbf{not} need to generate new PRV samples for different $\eps$ we evaluate during the binary search; hence the additional binary search is computationally efficient.

%\saeed{Let us avoid passive tense when describing something that we are doing . For example here instead of saying "is referred to as", we should say "that we refer to as"}.

% However, we can also use the same technique to \emph{estimate} the privacy parameter estimate $\propdelta$. 
% We stress again that for now, one does not need to derive a strict upper bound for $\delta_Y(\eps)$; a good estimate would be sufficient with the framework we described earlier. 

% \saeed{We don't say what this accountant does exactly. Let's define it in an algorithm? It would be very similar to the verifier but the output is different. Also do we want to talk about outputting $\epsilon$ at a given $\delta$? This is a very important benefit of our accountant over the previous ones (Google folks told me they often don't use FFT stuff because it takes forever to find epsilon for a given delta). }

% % \vspace{-1mm}
\textbf{Number of Samples for MC Accountant.} 
Compared with the number of samples required for achieving the FP guarantee in Section \ref{sec:sample-comp}, one may be able to use much fewer samples to obtain a decent estimate for $\delta_Y(\eps)$, as the sample complexity bound derived based on concentration inequality may be conservative. Many heuristics for guiding the number of samples in MC simulation have been developed (e.g., Wald confidence interval) and can be applied to the setting of MC accountants. 

% \saeed{We can also comment on benefit of doing our technique for per instance accounting?}
% use of concentration inequality may be loose. 
% We outline the overall procedure in Algorithm \ref{alg:mcaccountant}. 
% Specifically, we first use MC accountant to estimate and obtain $\propdelta$ with $\mest$ samples, and then use MC verifier (with $\mbound$ samples) to convert the composed mechanism into an augmented mechanism with $(\eps, \propdelta/\smooth)$ guarantee. 
% In practice, $\mest$ could be smaller than $\mbound$ as it does not need to achieve a certain formal guarantee. 

% % \vspace{-1mm}
Compared with FFT-based and CLT-based methods, MC accountant exhibits the following strength: 

% % \vspace{-1mm}
\textbf{(1) Accurate $\delta_Y(\eps)$ estimation in all regimes.} 
As we mentioned earlier, the state-of-the-art FFT-based method \cite{gopi2021numerical} fails to provide meaningful bounds due to computational limitations when the true value of $\delta_Y(\eps)$ is small. In contrast, the simplicity of the MC accountant allows us to accurately estimate $\delta_Y(\eps)$ in all regimes.

% \textbf{(1) Accurate $\delta_Y(\eps)$ estimation in all regimes.} As we shown earlier, the state-of-the-art FFT-based method \cite{gopi2021numerical} is not 
% suffers from floating point inaccuracies when the true value of $\delta_Y(\eps)$ is small. In contrast, the simplicity of the MC accountant allows us to straightforwardly implement a floating point stable algorithm. 

% % \vspace{-1mm}
\textbf{(2) Short clock runtime \& Easy GPU acceleration.} 
MC-based techniques are well-suited for parallel computing and GPU acceleration due to their nature of repeated sampling. 
One can easily utilize PyTorch's CUDA functionality (e.g., \texttt{torch.randn(size=(k,m)).cuda()*sigma+mu}) to significantly boost the computational efficiency for sampling from common distributions such as Gaussian. 
In Appendix \ref{appendix:eval}, we show that when using one NVIDIA A100 GPU, the runtime time of sampling Gaussian mixture $(1-q) \N(0, \sigma^2) + q\N(1, \sigma^2)$ can be improved by $10^3$ times compared with CPU-only scenario. 

% % \vspace{-1mm}
\textbf{(3) Efficient \emph{online privacy accounting}.} 
When training ML models with DP-SGD or its variants, a privacy practitioner usually wants to compute a running privacy leakage for \emph{every} training iteration, and pick the checkpoint with the best utility-privacy tradeoff. 
This involves estimating $\delta_{Y^{(i)}}(\eps)$ for every $i=1, \ldots, k$, where $Y^{(i)} := \sum_{j=1}^i Y_j$. 
We refer to such a scenario as \emph{online privacy accounting}\footnote{Note that this is different from the scenario of privacy odometer \cite{rogers2016privacy}, as here the privacy parameter of the next individual mechanism is not adaptively chosen.}. 
MC accountant is especially efficient for online privacy accounting. When estimating $\delta_{Y^{(i)}}(\eps)$, one can re-use the samples previously drawn from $Y_1, \ldots, Y_{i-1}$ that were used for estimating privacy loss at earlier iterations. 

% % \vspace{-1mm}
These advantages are justified empirically in Section \ref{sec:experiments} and Appendix \ref{appendix:eval}.

\section{Numerical Experiments}
\label{sec:experiments}

% % \vspace{-2mm}
In this section, we conduct numerical experiments to illustrate \textbf{(1)} EVR paradigm with MC verifiers enables a tighter privacy analysis, and \textbf{(2)} MC accountant achieves state-of-the-art performance in privacy parameter estimation.

% % \vspace{-2mm}
\subsection{EVR vs Upper Bound}

% % \vspace{-2mm}
To illustrate the advantage of the EVR paradigm compared with directly using a strict upper bound for privacy parameters, we take the current state-of-the-art DP accountant, the FFT-based method from  \cite{gopi2021numerical} as the example. 
\begin{wrapfigure}{R}{0.5 \textwidth}
    \setlength\intextsep{0pt}
    \setlength\abovecaptionskip{0pt}
    \setlength\belowcaptionskip{0pt}
    % \vspace{-3mm}
    \centering
    \includegraphics[width=0.5 \columnwidth]{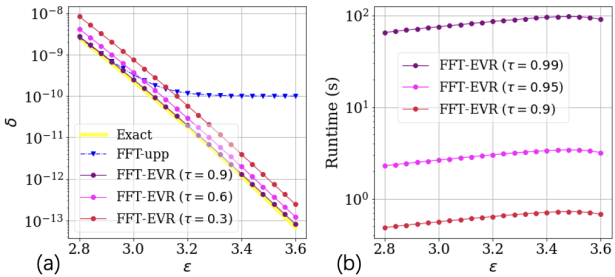}
    \caption{
    Privacy analysis and runtime of the EVR paradigm. 
    The settings are the same as Figure \ref{fig:unmeaningful-upp}. 
    For (a), when $\tau > 0.9$, the curves are indistinguishable from `Exact'. 
    For fair comparison, we set $\uparam = (1+\smooth)/2$ and set $\Delta$ according to Theorem \ref{thm:Delta-informal}, which ensures EVR's failure probability of the order of $\delta$. 
    For (b), the runtime is estimated on an NVIDIA A100-SXM4-80GB GPU. 
    }
    % % \vspace{-3mm}
    \label{fig:evr-vs-bound}
\end{wrapfigure}

% % \vspace{-1mm}
\textbf{EVR provides a tighter privacy guarantee.} 
Recall that in Figure \ref{fig:unmeaningful-upp}, FFT-based method provides vacuous bound when the ground-truth $\delta_Y(\eps) < 10^{-10}$. Under the same hyperparameter setting, Figure \ref{fig:evr-vs-bound} (a) shows the privacy bound of the EVR paradigm where the $\propdelta$ are FFT's estimates. 
We use the Importance Sampling estimator $\deltaism$ for DP verification. 
We experiment with different values of $\smooth$. A higher value of $\smooth$ leads to tighter privacy guarantee but longer runtime. 
% We experiment with different values of $\smooth$, which leads to different privacy guarantees of the EVR paradigm. 
For fair comparison, the EVR's output distribution needs to be almost indistinguishable from the original mechanism. We set $\uparam = (1+\smooth)/2$ and set $\Delta$ according to the heuristic from Theorem \ref{thm:Delta-informal}. This guarantees that, as long as the estimate of $\propdelta$ from FFT is not a big underestimation (i.e., as long as $\propdelta \ge \uparam \delta_Y(\eps)$), the failure probability of the EVR paradigm is negligible ($O(\delta_Y(\eps))$). 
The `FFT-EVR' curve in Figure \ref{fig:evr-vs-bound} (a) is essentially the `FFT-est' curve in Figure \ref{fig:unmeaningful-upp} scaled up by $1/\tau$. 
As we can see, EVR provides a significantly better privacy analysis in the regime where the `FFT-upp' is unmeaningful ($\delta < 10^{-10}$). 
% even for relatively small $\tau=0.3$ or $0.6$. When $\tau=0.9$, EVR provides at least a comparable privacy guarantee with the upper bound for all choices of $\eps$. 

% % \vspace{-1mm}
\textbf{EVR incurs little extra runtime.}
In Figure \ref{fig:evr-vs-bound} (b), we plot the runtime of the Importance Sampling verifier in Figure \ref{fig:evr-vs-bound} (b) for different $\smooth \ge 0.9$. Note that for $\tau > 0.9$, the privacy curves are indistinguishable from `Exact' in Figure \ref{fig:evr-vs-bound} (a). 
The runtime of EVR is determined by the number of samples required to achieve the target $\smooth$-relaxed FP rate from Theorem \ref{thm:smc-samplecomp}. Smaller $\smooth$ leads to faster DP verification. As we can see, even when $\tau=0.99$, the runtime of DP verification in the EVR is $<2$ minutes. This is attributable to the sample-efficient IS estimator and GPU acceleration.

% We set $\smooth=0.9, \uparam=0.95$ as $10\%$ underestimate is very small when $\delta_\M(\eps)$ is at the scale of $10^{-10}$. 
% We set $\Delta$ according to the heuristic in Section \ref{sec:utility} so that the EVR's behavior is almost the same as the original mechanism when $\propdelta$ is close to the ground truth. 
% The `FFT-EVR' curve here is essentially the `FFT-est' curve in Figure \ref{fig:unmeaningful-upp} scaled up by $1/\tau$. As we can see, we can obtain a significantly better privacy analysis through the EVR paradigm in the regime where the strict upper bound is unmeaningful. 

\begin{wrapfigure}{R}{0.3\columnwidth}
    % \setlength\intextsep{0pt}
    % \vspace{0pt}
    \setlength\belowcaptionskip{0pt}
    \setlength\abovecaptionskip{0pt}
    \centering    \includegraphics[width=0.3\columnwidth]{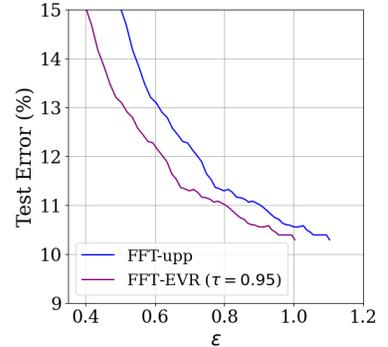}
    \caption{
    Utility-privacy tradeoff curve for fine-tuning ImageNet-pretrained BEiT \cite{bao2021beit} on CIFAR100 when $\delta=10^{-5}$. We follow the training procedure from \cite{panda2022dp}. 
    }
    \label{fig:tradeoff-gm}
\end{wrapfigure}
\textbf{EVR provides better privacy-utility tradeoff for Privacy-preserving ML with minimal time consumption.}
To further underscore the superiority of the EVR paradigm in practical applications, we illustrate the privacy-utility tradeoff curve when finetuning on CIFAR100 dataset with DP-SGD. As shown in Figure \ref{fig:tradeoff-gm}, the EVR paradigm provides a lower test error across all privacy budget $\eps$ compared with the traditional upper bound method. For instance, it achieves around 7\% (relative) error reduction when $\eps=0.6$. 
The runtime time required for privacy verification is less than $<10^{-10}$ seconds for all $\eps$, which is negligible compared to the training time. 
We provide additional experimental results in Appendix \ref{appendix:eval}. 
% \tianhao{10\% reduction in error rate} \tw{make sure this is relative error rate reduction}

% We also plot the corresponding runtime of the MC verifier in the red curve. We use the advanced IS estimator $\deltaism$.  
% The runtime is determined by the number of samples required to achieve the target $\tau$-relaxed FP rate from Theorem \ref{thm:smc-samplecomp}. 
% As we can see, the extra runtime from DP verification in the EVR paradigm is $<1$ second, due to the tight bound for $\E[\deltaism^2]$ and GPU acceleration. 

% \begin{remark}
% EVR actually changes the mechanism, so it's a little weird to compare the privacy analysis between them. However, in Section 4.5 we found a heuristic to set Delta (a hyperparameter in EVR) such that the false negative rate <= false positive rate <= deltaestimate / tau. This means that the EVR's behavior is almost exactly the same when delta estimate is close to the ground truth. 
% \end{remark}

% % \vspace{-2mm}
\subsection{MC Accountant}

% % \vspace{-2mm}
We evaluate the MC Accountant proposed in Section \ref{sec:mcacct}. We focus on privacy accounting for the composition of Poisson Subsampled Gaussian mechanisms, the algorithm behind the famous DP-SGD algorithm \cite{abadi2016deep}. The mechanism is specified by the noise magnitude $\sigma$ and subsampling rate $q$. 

% % \vspace{-1mm}
\underline{\textbf{Settings.}} We consider two practical scenarios of privacy accounting: \textbf{(1) Offline accounting} which aims at estimating $\delta_{Y^{(k)}}(\eps)$, and \textbf{(2) Online accounting} which aims at estimating $\delta_{Y^{(i)}}(\eps)$ for all $i=1, \ldots, k$. For space constraint, we only show the results of online accounting here, and defer the results for offline accounting to Appendix \ref{appendix:eval}. 
\underline{\textbf{Metric: Relative Error.}}
To easily and fairly evaluate the performance of privacy parameter estimation, we compute the \emph{almost exact} (yet computationally expensive) privacy parameters as the ground-truth value. The ground-truth value allows us to compute the \emph{relative error} of an estimate of privacy leakage. That is, if the corresponding ground-truth of an estimate $\widehat \delta$ is $\delta$, then the relative error $r_{\mathrm{err}} = | \widehat \delta - \delta | / \delta$. 
\ul{\textbf{Implementation.}}
For MC accountant, we use the IS estimator described in Section \ref{sec:mc-estimator}. For baselines, in addition to the FFT-based and CLT-based method we mentioned earlier, we also examine AFA \cite{zhu2022optimal} and GDP accountant \cite{bu2020deep}. 
For a fair comparison, we adjust the number of samples for MC accountant so that the runtime of MC accountant and FFT is comparable. 
Note that we compared with the privacy parameter \emph{estimates} instead of upper bounds from the baselines. Detailed settings for both MC accountant and the baselines are provided in Appendix \ref{appendix:eval}. 
\begin{wrapfigure}{R}{0.5 \textwidth}
    \setlength\intextsep{0pt}
    \setlength\abovecaptionskip{0pt}
    \setlength\belowcaptionskip{-30pt}
    \vspace{-5mm}
    \centering
    \includegraphics[width=0.5 \columnwidth]{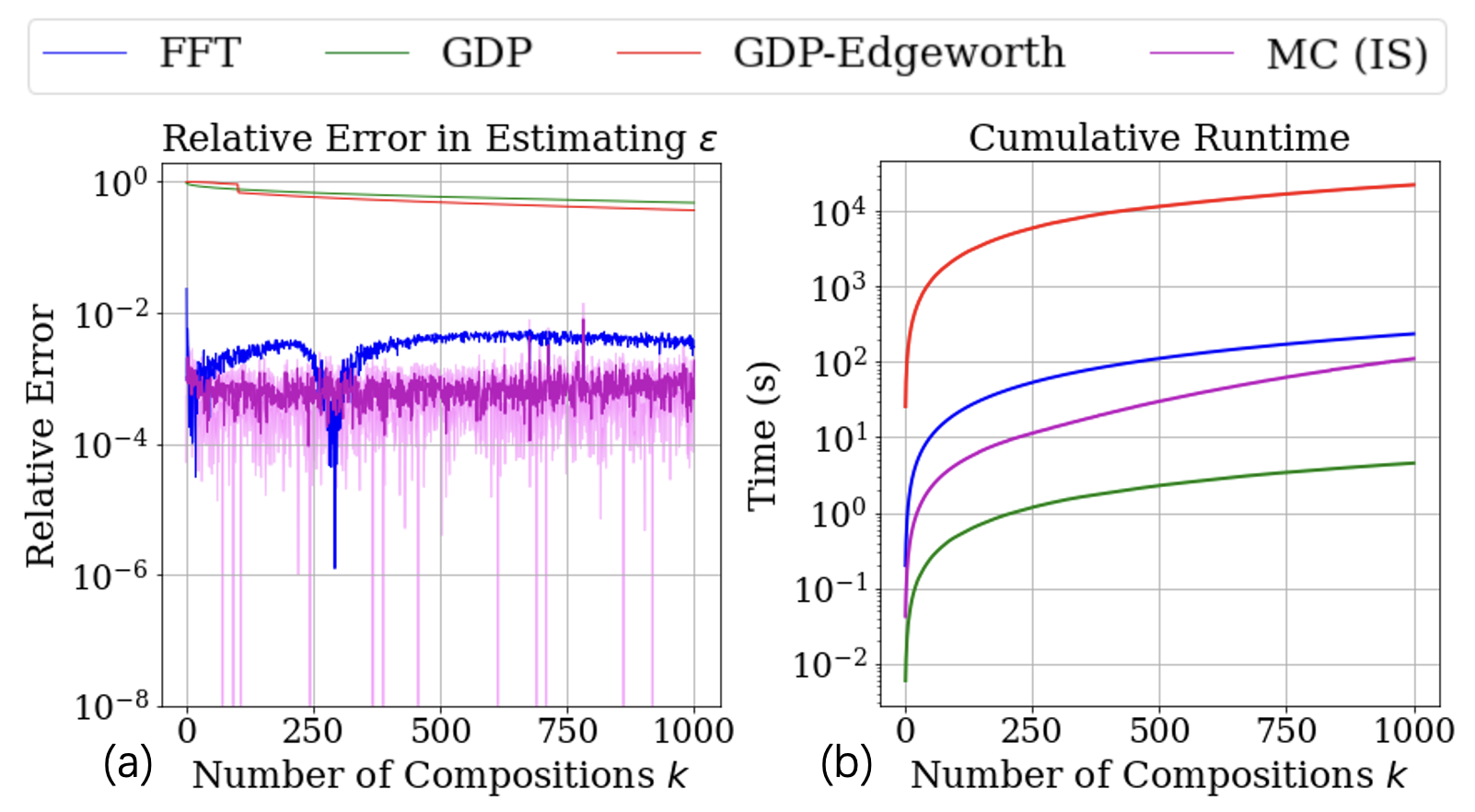}
    \caption{ Experiment for Composing Subsampled Gaussian Mechanisms in the Online Setting. 
    (a) Compares the relative error in approximating $k \mapsto \eps_{Y}(\delta)$. 
    The error bar for MC accountant is the variance taken over 5 independent runs. Note that the y-axis is in the log scale. 
    (b) Compares the cumulative runtime for online privacy accounting. 
    We did not show AFA \cite{zhu2022optimal} as it does not terminate in 24 hours. 
    }
    \label{fig:online-all}
\end{wrapfigure}

\textbf{\underline{Results for Online Accounting:} MC accountant is both more accurate and efficient.} Figure \ref{fig:online-all} (a) shows the online accounting results for $(\sigma, \delta, q) = (1.0, 10^{-9}, 10^{-3})$. As we can see, MC accountant outperforms all of the baselines in estimating $\eps_Y(\delta)$. 
The sharp decrease in FFT at approximately 250 steps is due to the transition of FFT's estimates from underestimating before this point to overestimating after. 
Figure \ref{fig:online-all} (b) shows that MC accountant is around 5 times faster than FFT, the baseline with the best performance in (a). 
This showcases the MC accountant's efficiency and accuracy in online setting.

% This demonstrates that MC accountant is \emph{both} accurate and efficient in online accounting. 
% \add{The sharp drop for FFT at around 250 steps is because the FFT’s estimates are an underestimate before 250 steps, and become an overestimate after around 250 steps, which makes the relative error close to 0 around 250 steps.} 

% Here, we set the number of samples for MC accountant as $10^5$, and the parameter for FFT-based method as $\epserror=10^{-1}, \deltaerror=10^{-10}$. 

% \begin{figure}[h]
%     \centering
%     % \setlength\belowcaptionskip{-10pt}
%     % \setlength\abovecaptionskip{0pt}
%     \includegraphics[width=0.5\columnwidth]{images/online_all_errorbar.pdf}
%     \caption{ Experiment for Composing Subsampled Gaussian Mechanisms in the Online Setting. 
%     (a) Compares the relative error in approximating $k \mapsto \eps_{Y}(\delta)$. 
%     The error bar for MC accountant is the variance taken over 5 independent runs. Note that the y-axis is in the log scale. 
%     (b) Compares the cumulative runtime for online privacy accounting. 
%     We were not able to run AFA \cite{zhu2022optimal} as it does not terminate in 24 hours. 
%     }
%     \label{fig:online-all}
% \end{figure}

% % \vspace{-2mm}
\section{Conclusion \& Limitations}

% % \vspace{-2mm}
% This paper addresses an important obstacle faced by many existing privacy accounting techniques where deriving the provable upper bound for privacy leakage is hard. We developed a new DP paradigm called estimate-verify-release (EVR) which enables the safe use of privacy parameter estimate. The core of the EVR paradigm is privacy verification, and we presented a randomized DP verifier based on Monte Carlo (MC) technique, which achieves high verification accuracy. We further proposed an MC-based DP accountant, which exhibits several advantages over the existing privacy accounting techniques. 

This paper tackles the challenge of deriving provable privacy leakage upper bounds in privacy accounting. We present the estimate-verify-release (EVR) paradigm which enables the safe use of privacy parameter estimate.
% , allowing safe privacy parameter estimates without provable assurance.
% In this paper, we tackled the significant challenge of determining provable upper bounds for privacy leakage, a common hurdle for existing privacy accounting methods. We introduced the estimate-verify-release (EVR) paradigm which enables safe use of privacy parameter estimates without a provable guarantee. 
\textbf{Limitations.} 
Currently, our MC-based DP verifier and accountant require known and efficiently samplable dominating pairs and PRV for the individual mechanism. Fortunately, this applies to commonly used mechanisms such as Gaussian mechanism and DP-SGD. Generalizing MC-based DP verifier and accountant to other mechanisms is an interesting future work.

% Moreover, one may need to optimize the MC estimator for $\delta_Y$ for different kinds of mechanisms case-by-case. 

% \tianhao{generalizeing is interesting future work}
% Removing these requirements is interesting future work. 

% An interesting follow-up work would be further bounding the second-moment of MC estimators.  
% Another important question is whether we can develop new privacy verification algorithm that are not based on MC techniques. 

%Besides the EVR, are there any alternative techniques that allow us to use privacy parameter estimates instead of upper bounds? Moreover, can we implement DP verifiers through any other techniques besides Monte Carlo?

\newpage

\section*{Acknowledgments}
This work was supported in part by the National Science Foundation under grants CNS-2131938, CNS-1553437, CNS-1704105, the ARL’s Army Artificial Intelligence Innovation Institute (A2I2), the Office of Naval Research Young Investigator Award, the Army Research Office Young Investigator Prize, Schmidt DataX award, and Princeton E-ffiliates Award, Amazon-Virginia Tech Initiative in Efficient and Robust Machine Learning, and Princeton's Gordon Y. S. Wu Fellowship. 
We are grateful to anonymous reviewers at NeurIPS for their valuable feedback.

\bibliography{ref}
\bibliographystyle{alpha}

\newpage
\appendix
\onecolumn

\section{Extended Related Work}
\label{appendix:relatedwork}

In this section, we review the related works in privacy accounting, privacy verification, privacy auditing, and we also discuss the connection between our EVR paradigm and the famous Propose-Test-Release (PTR) paradigm. 

\paragraph{Privacy Accounting.}
Early privacy accounting techniques such as Advanced Composition Theorem \cite{dwork2010boosting} only make use of the privacy parameters of the individual mechanisms, which bounds $\delta_{\M}(\eps)$ in terms of the privacy parameter $(\eps_i, \delta_i)$ for each $\M_i, i=1,\ldots,k$. 
The optimal bound for $\delta_{\M}(\eps)$ under this condition has been derived \cite{kairouz2015composition, murtagh2016complexity}. However, the computation of the optimal bound is \#P-hard in general. 
Bounding $\delta_{\M}(\eps)$ only in terms of $(\eps_i, \delta_i)$ is often sub-optimal for many commonly used mechanisms \cite{murtagh2016complexity}. This disadvantage has spurred many recent advances in privacy accounting by making use of more statistical information from the specific mechanisms to be composed \cite{abadi2016deep, mironov2017renyi, koskela2020computing, bu2020deep, koskela2021computing, koskela21a_FFT, gopi2021numerical, zhu2022optimal, ghazi2022faster, doroshenko2022connect, wang2022analytical, alghamdi2022saddle}. All of these works can be described as approximating the expectation in (\ref{eq:deltafunc}) when $Y = \sum_{i=1}^k Y_i$. For instance, the line of \cite{koskela2020computing, bu2020deep, koskela2021computing, koskela21a_FFT, gopi2021numerical, ghazi2022faster, doroshenko2022connect} discretize the domain of each $Y_i$ and use Fast Fourier Transform (FFT) in order to speed up the approximation of $\delta_Y(\eps)$. 
\cite{zhu2022optimal} tracks the characteristic function of the privacy loss random variables for the composed mechanism and still requires discretization when the mechanisms do not have closed-form characteristic functions. 
The line of \cite{bu2020deep, wang2022analytical} uses Central Limit Theorem (CLT) to approximate the distribution of $Y = \sum_{i=1}^k Y_i$ as Gaussian distribution and uses the finite-sample bound to derive the strict upper bound for $\delta_Y(\eps)$. We also note that \cite{mahloujifar2022optimal} also uses Monte Carlo approaches to calculate optimal membership inference bounds. 
They use a similar Simple MC estimator as the one in Section \ref{sec:mc-estimator}. 
%Their approach is a special case of our Monte Carlo accountant outlined in Section \ref{sec:mcacct}. 
Although their Monte Carlo approach is similar, their error analysis only works for large values of $\delta$ ($\delta \approx 0.5)$ as they use sub-optimal concentration bounds.

\paragraph{Privacy Verification.} 
As we mentioned in Remark \ref{remark:hardness}, some previous works have also studied the problem of privacy verification. Most of the works consider either ``white-box setting'' where the input description of the DP mechanism is its corresponding randomized Boolean circuits \cite{gaboardi2019complexity, bun2022complexity}. 
Some other works consider an even more stringent ``black-box setting'' where the verifier can only query the mechanism \cite{gilbert2018property, liu2019minimax, bichsel2021dpsniper, gorla2022possibility, lu2022eureka}. 
In contrast, our MC verifier is designed specifically for those mechanisms where the PRV can be derived, which includes many commonly used mechanisms such as the Subsampled Gaussian mechanism. 

\paragraph{Privacy verification via auditing.}
Several heuristics have tried to perform DP verification, forming a line  of work called \textit{auditing differential privacy} \cite{jagielski2020auditing, nasr2021adversary, lu22general, nasr2023tight, steinke2023privacy}. Specifically, these techniques can verify a claimed privacy parameter by computing a lower bound for the actual privacy parameter, and comparing that with the claimed privacy parameter. The input description of mechanism $\M$ for $\dpv$, in this case, is a black-box oracle $\M(\cdot)$, where the $\dpv$ makes multiple queries to $\M(\cdot)$ and estimates the actual privacy leakage. 
Privacy auditing techniques can achieve 100\% accuracy when $\propdelta > \delta_Y(\eps)$ (or $0$ $\rho$-FN rate for any $\rho \le 1$), as the computed lower bound is guaranteed to be smaller than $\propdelta$. 
However, when $\propdelta$ lies between $\delta_Y(\eps)$ and the computed lower bound, the DP verification will be wrong. Moreover, such techniques do not have a guarantee for the lower bound's tightness.

\begin{remark}[Connection between our EVR paradigm and the Propose-Test-Release (PTR) paradigm \cite{dwork2009differential}]
\label{remark:connection-ptr}
% \paragraph{Connection between our EVR paradigm and the Propose-Test-Release (PTR) paradigm \cite{dwork2009differential}.} 
PTR is a classic differential privacy paradigm introduced over a decade ago by \cite{dwork2009differential}, and is being generalized in \cite{redberg2022generalized, wang2022renyi}. At a high level, PTR checks if releasing the query answer is safe with a certain amount of randomness (in a private way). If the test is passed, the query answer is released; otherwise, the program is terminated. PTR shares a similar underlying philosophy with our EVR paradigm. However, they are fundamentally different in terms of implementation. The verification step in EVR is completely independent of the dataset. In contrast, the test step in PTR measures the privacy risks for the mechanism $\M$ on a specific dataset $D$, which means that the test itself may cause additional privacy leakage. One way to think about the difference is that EVR asks ``whether $\M$ is private'', while PTR asks ``whether $\M(D)$ is private''. 
\end{remark}

\clearpage

\section{Proofs for Privacy}
\label{appendix:proof-for-priv}
\privguarantee*

\begin{proof}
For any mechanism $\M$, we denote $A$ as the event that $\propdelta \ge \smooth \delta_Y(\eps)$, and indicator variable $B = \ind[\dpv(\M, \eps, \propdelta; \smooth) = \true]$. Note that event $A$ implies $\M$ is $(\propdelta / \smooth)$-DP.

Thus, we know that 
\begin{align}
    \Pr[B=1 | \bar A] \le \fp_{\dpv}(\eps, \propdelta; \smooth)
\end{align} 
For notation simplicity, we also denote $\pfp := \Pr[B=1 | \bar A]$, and $\ptp := \Pr[B=1|A]$. 

For any possible event $S$, 
\begin{align}
    &\Pr_{\Maug} \left[ \Maug(D) \in S \right] \nonumber \\ 
    &= \Pr_{\M}[\M(D) \in S | B=1] \Pr[B=1] + 
    I[\reject \in S] \Pr[B=0] \nonumber \\
    &= \Pr_{\M}[\M(D) \in S | B=1, A] \Pr[B=1 | A] I[A] + \Pr_{\M}[\M(D) \in S | B=1, \bar A] \Pr[B=1 | \bar A] I[\bar A] \nonumber \\
    &~~~~+ I[\reject \in S] \Pr[B=0] \nonumber \\
    &\le \left( e^\eps \Pr_{\M}[\M(D') \in S | B=1, A] + \frac{\propdelta}{\smooth} \right) \Pr[B=1 | A] I[A] \nonumber \\
    &~~~~+ \Pr_{\M}[\M(D) \in S | B=1, \bar A] \Pr[B=1 | \bar A] I[\bar A] \nonumber \\
    &~~~~+ I[\reject \in S] \Pr[B=0] \nonumber \\
    &\le \left(e^\eps \Pr_{\M}[\M(D') \in S | B=1, A] + \frac{\propdelta}{\smooth} \right) \ptp I[A] + \pfp I[\bar A] + I[\reject \in S] \Pr[B=0] \nonumber \\
    &\le e^\eps \left( \Pr_{\M}[\M(D') \in S | B=1, A] \ptp I[A] + \Pr_{\M}[\M(D’) \in S | B=1, \bar A] \pfp I[\bar A] + I[\reject \in S] \Pr[B=0] \right) \nonumber \\
    &~~~~~~+ \frac{\propdelta}{\smooth} \ptp I[A] + \pfp I[\bar A] \nonumber \\
    &\le e^\eps \Pr_{\Maug} \left[ \Maug(D') \in S \right] + \max \left(\frac{\propdelta \ptp}{\smooth}, \pfp \right) \nonumber
\end{align}
where in the first inequality, we use the definition of differential privacy. 
Therefore, $\Maug$ is $\left(\eps, \max \left(\frac{\propdelta \ptp}{\smooth}, \pfp \right)\right)$-DP. 
By assumption of $\pfp \le \fp_{\dpv}(\eps, \propdelta; \smooth) \le \propdelta/\smooth$, we reach the conclusion. 
\end{proof}

\clearpage

\section{Importance Sampling via Exponential Tilting}
\label{appendix:importance-sampling}
\paragraph{Notation Review.} 
Recall that most of the recently proposed DP accountants are essentially different techniques for estimating the expectation 
\begin{align}
\delta_{Y = \sum_{i=1}^k Y_i}(\eps) = \E_{Y} \left[ \left(1 - e^{\eps-Y} \right)_{+} \right] \nonumber
\end{align}
where each $Y_i$ is the privacy loss random variable $Y_i = \log \left( \frac{P_i(t)}{Q_i(t)} \right)$ for $t \sim P_i$, and $(P_i, Q_i)$ is a pair of dominating distribution for individual mechanism $\M_i$. In the following text, we denote the product distribution $\bp := P_1 \times \ldots \times P_k$ and $\bq := Q_1 \times \ldots \times Q_k$. 
Recall from Lemma \ref{lemma:dominating-composed} that $(\bp, \bq)$ is a pair of dominating distributions for the composed mechanism $\M$. 
For notation simplicity, we denote a vector $\bt := (t^{(1)}, \ldots, t^{(k)})$. 
We slightly abuse the notation and write $y(t; P, Q) := \log \left(\frac{P(t)}{Q(t)} \right)$. 
Note that $y(\bt; \bp, \bq) = \sum_{i=1}^k y(t^{(i)}; P_i, Q_i)$. 
When the context is clear, we omit the dominating pairs and simply write $y(t)$.

\paragraph{Dominating Distribution Pairs for Poisson Subsampled Gaussian Mechanisms.}

The dominating distribution pair for Poisson Subsampled Gaussian Mechanisms is a well-known result. 

\begin{lemma}
\label{lemma:dominating-pair-sgm}
For Poisson Subsampled Gaussian mechanism with sensitivity $C$, noise variance $C^2 \sigma^2$, and subsampling rate $q$, one dominating pair $(P, Q)$ is $Q := \N(0, \sigma^2)$ and $P := (1-q)\N(0, \sigma^2) + q\N(1, \sigma^2)$. 
\end{lemma}
\begin{proof}
    See Appendix B of \cite{gopi2021numerical}. 
\end{proof}

That is, $Q$ is just a 1-dimensional standard Gaussian distribution, and $P$ is a convex combination between standard Gaussian and a Gaussian centered at $1$. 

\begin{remark}[Dominating pair supported on higher dimensional space]
The cost of our approach would not increase (in terms of the number of samples) even if the dominating pair is supported in a high dimensional space. For Monte Carlo estimate, we can see from Hoeffding’s inequality that the expected error rate of estimation is independent of the dimension of the support set of dominating distribution pairs. This means the number of samples we need to ensure a certain confidence interval is independent of the dimension. However, we should also note that although the number of samples does not change, the sampling process itself might be more costly for higher dimensional spaces. 
\end{remark}

\subsection{Importance Sampling for the Composition of Poisson Subsampled Gaussian Mechanisms}

Importance Sampling (IS) is a classic method for rare event simulation. It samples from an alternative distribution instead of the distribution of the quantity of interest, and a weighting factor is then used for correcting the difference between the two distributions. Specifically, we can re-write the expression for $\delta_Y(\eps)$ as follows: 
\begin{align}
    \delta_Y(\eps) 
    &= \E_Y \left[(1-e^{\eps - Y})_+ \right] \nonumber \\
    &= \E_{ \bt \sim \bp } \left[ \left(1-e^{\eps - y(\bt; \bp, \bq) } \right)_+ \right] \nonumber \\
    % &= \int \left(1-e^{\eps - y(\bt; \bp, \bq)}\right)_+ 
    % \bp(\bt) d \bt \nonumber \\
    &= \E_{ \bt \sim \bpalter } \left[ \left(1-e^{ \eps - y(\bt; \bp, \bq) }\right)_+ 
    \frac{\bp(\bt)}{\bpalter(\bt)}
    \right] \label{eq:is}
\end{align}
where $\bpalter$ is the alternative distribution up to the user's choice. 
From Equation (\ref{eq:is}), one can construct an unbiased importance sampling estimator for $\delta_Y(\eps)$ by sampling from $\bpalter$. In this section, we develop a $\bpalter$ for estimating $\delta_Y(\eps)$ when composing \emph{identically distributed Poisson subsampled Gaussian mechanisms}, which is arguably the most important DP mechanism nowadays due to its application in differentially private stochastic gradient descent. 

\emph{Exponential tilting} is a common way to construct alternative sampling distribution for IS. The exponential tilting of a distribution $\dist$ is defined as 
\begin{align}
    \alterdist(t) := \frac{e^{\theta t}}{\mgf_{\dist}(\theta)} \dist(t) \nonumber
\end{align}
where $\mgf_{\dist}(\theta) := \E_{t \sim P}[e^{\theta t}]$ is the moment generating function for $\dist$. 
Such a transformation is especially convenient for distributions from the exponential family. For example, for normal distribution $\N(\mu, \sigma^2)$, the tilted distribution is $\N(\mu + \theta \sigma^2, \sigma^2)$, which is easy to sample from. 

\newcommand{\Pzero}{P_0}

Without the loss of generality, we consider Poisson Subsampled Gaussian mechanism with sensitivity 1, noise variance $\sigma^2$, and subsampling rate $q$. 
Recall from Lemma \ref{lemma:dominating-pair-sgm} that the dominating pair in this case is $Q := \N(0, \sigma^2)$ and $P := (1-q)\N(0, \sigma^2) + q\N(1, \sigma^2)$. 
For notation simplicity, we denote $\Pzero := \N(1, \sigma^2)$, and thus $P = (1-q) Q + q \Pzero$. 
Since each individual mechanism is the same, $\bp = P \times \ldots \times P$ and $\bq = Q \times \ldots \times Q$. The exponential tilting of $P$ with parameter $\theta$ is $P_\theta := (1-q)\N(\theta \sigma^2, \sigma^2) + q\N(1+\theta \sigma^2, \sigma^2)$. We propose the following importance sampling estimator for $\delta_Y(\eps)$ based on exponential tilting. 

\begin{definition}[Importance Sampling Estimator for Subsampled Gaussian Composition]
\label{def:is-estimator}
% When $Y$ is the summation of $k$ independent privacy loss random variables for subsampled Gaussian mechanism with sensitivity $1$, noise variance $\sigma^2$, and subsampling rate $q$, 
Let the alternative distribution 
$$
\bpalter := \bm{P}_\theta = (P, \ldots, \underbrace{P_\theta}_{i\mathrm{th~dim}}, \ldots, P), ~~i \sim \unif([k])
$$
with $\theta = 1/2 + \sigma^2 \log \left(\frac{\exp(\eps)-(1-q)}{q}\right)$. 
Given a random draw $\bt \sim \bp_\theta$, an unbiased sample for 
$\delta_Y(\eps)$ is $\left(1 - e^{\eps - y(\bt; \bp, \bq)} \right)_{+} \left( \frac{1}{k} \sum_{i=1}^k \frac{ P_\theta(t_i) }{ P(t_i) } \right)^{-1}$. 
We denote $\deltais^{m}(\eps)$ as the random variable of the corresponding importance sampling estimator with $m$ samples. 
\end{definition}

We defer the formal justification of the choice of $\theta$ to Appendix \ref{appendix:justify-is}. 
We first give the intuition for why we choose such an alternative distribution $\bm{P}_\theta$.

\paragraph{Intuition for the alternative distribution $\bp_\theta$.}
It is well-known that the variance of the importance sampling estimator is minimized when the alternative distribution 
\begin{align}
    \bpalter(\bt) \propto \left(1-e^{\eps - y(\bt)}\right)_+ \bp(\bt) \nonumber
\end{align}
The distribution of each privacy loss random variable $y(t; P, Q), t \sim P$ is light-tailed, which means that for the rare event where $y(\bt) = \sum_{i=1}^k y(t^{(i)}) > \eps$, it is most likely that there is only \emph{one} outlier $t^*$ among all $\{ t^{(i)} \}_{i=1}^k$ such that $y(t^*)$ is large (which means that $y(\bt)$ is also large), and all the rest of $y(t^{(i)})$s are small. 
% WLOG we assume the outlier is $t^{(k)}$. 
Hence, a reasonable alternative distribution can just tilt the distribution of a randomly picked $t^{(i)}$, and leave the rest of $k-1$ distributions to stay the same. Moreover, $\theta$ is selected to \emph{approximately} minimize the variance of $\deltais$ (detailed in Appendix \ref{appendix:justify-is}). 
An intuitive way to see it is that $y(\theta) = \eps$, which significantly improves the probability where $y(\bt) \ge \eps$ while also accounting for the fact that $\dist(t)$ decays exponentially fast as $t$ increases. 

We also empirically verify the advantage of the IS estimator over the SMC estimator. The orange curve in Figure \ref{fig:secondmmt} (a) shows the empirical estimate of $\E[(\deltais)^2]$ which quantifies the variance of $\deltais$. 
Note that $\theta=0$ corresponds to the case of $\deltamc$. As we can see, $\E[(\deltais)^2]$ drops quickly as $\theta$ increases, and eventually converges. 
We can also see that the $\theta$ selected by our heuristic in Definition \ref{def:is-estimator} (marked as red `*') approximately corresponds to the lowest point of $\E[(\deltais)^2]$. 
This validates our theoretical justification for the selection of $\theta$.

\begin{figure}[t]
    \centering
    \includegraphics[width=0.7\columnwidth]{images/mmt_and_util.pdf}
    \caption{
    We examine the properties of MC-based DP verifiers for Poisson Subampled mechanism. We set $q=10^{-3}, \sigma=0.6, \eps=1.5, k=100$. $\delta_Y(\eps) \approx 7.7 \times 10^{-6}$ in this case.
    \textbf{(a)} Plot for the upper bound and empirical estimate of $\E[\deltamc^2]$ and $\E[\deltais^2]$. The upper bounds are computed by Corollary \ref{corollary:varbound-mc} (for $\E[\deltamc^2]$) and Theorem \ref{thm:varbound-is-holder} (for $\E[\deltais^2]$). Note that $\theta=0$ corresponds to $\deltamc$. The red star indicates the second moment for the value of $\theta$ selected by our heuristic in Definition \ref{def:is-estimator}. 
    The blue star indicates the $\theta$ that minimizes the analytical bound. 
    \textbf{(b)} Empirical estimate of the rejection probability $\Pr[\deltais^m > \propdelta/\smooth - \Delta]$ scaled with the number of samples $m$. 
    We set $\propdelta=0.8\delta_Y(\eps), \smooth = 10^{-5} / \propdelta$, and we set 
    $\Delta$ following the heuristic proposed in Section \ref{sec:utility}. 
    }
    \label{fig:secondmmt}
\end{figure}

\subsection{Intuition of the Heuristic of Choosing Exponential Tilting Parameter $\theta$}
\label{appendix:justify-is}

The variance of the IS estimator proposed in Definition \ref{def:is-estimator} is given by 
\begin{align*}
&\E_{\bt \sim \bp_\theta} \left[\left(1 - e^{\eps - y(\bt; \bp, \bq)} \right)_{+}^2 \left( \frac{\bp(\bt)}{\bp_\theta(\bt)} \right)^2 \right] \\
& = \E_{\bt \sim \bp} \left[\left(1 - e^{\eps - y(\bt; \bp, \bq)} \right)_{+}^2 \left( \frac{\bp(\bt)}{\bp_\theta(\bt)} \right) \right] \\
& = \E_{\bt \sim \bp} \left[\left(1 - e^{\eps - y(\bt; \bp, \bq)} \right)_{+}^2 \left( \frac{1}{k} \sum_{i=1}^k \frac{ P_\theta(t_i) }{ P(t_i) } \right)^{-1} \right] \\
&= k \mgf_\dist(\theta) \E_{\bt \sim \bp} \left[\left(1 - e^{\eps - y(\bt; \bp, \bq)} \right)_{+}^2 \left( \frac{1}{\sum_{i=1}^k e^{\theta t_i} } \right) \right]
\end{align*}

Let $S(\theta) := k \mgf_\dist(\theta) \E_{\bt \sim \bp} \left[\left(1 - e^{\eps - y(\bt; \bp, \bq)} \right)_{+}^2 \left( \frac{1}{\sum_{i=1}^k e^{\theta t_i} } \right) \right]$. We aim to find $\theta$ that minimizes $S(\theta)$. 

Note that 
\begin{align}
    \mgf_P(\theta) = (1-q)e^{ \frac{1}{2} \sigma^2\theta^2 } + qe^{ \theta + \frac{1}{2} \sigma^2\theta^2 }
\end{align}

To simplify the notation, let $b(\bt) := \left(1-e^{\eps - y(\bt) }\right)_+^2 \prod_{i=1}^k \Pzero(t_i)$. 

\begin{align}
    \frac{\del}{\del \theta} S(\theta) 
    &= 
    \left[ (1-q)e^{ \frac{1}{2} \sigma^2\theta^2 } (\sigma^2 \theta) + qe^{  \theta + \frac{1}{2} \sigma^2\theta^2 } (1+\sigma^2 \theta) \right] 
    \int \ldots \int b(\bt) \left( \sum_{i=1}^k e^{\theta t_i} \right)^{-1} d \bt \\
    &~~~ - \left[ (1-q)e^{ \frac{1}{2} \sigma^2\theta^2 } + qe^{ \theta + \frac{1}{2} \sigma^2\theta^2 } \right] \int \ldots \int b(\bt) \frac{
    \sum_{i=1}^k e^{\theta t_i} t_i
    }{ \left( \sum_{i=1}^k e^{\theta t_i} \right)^2 } d \bt 
\end{align}

By setting $\frac{\del}{\del \theta} S(\theta) = 0$ and simplify the expression, we have 

\begin{align}
    \frac{ (1-q + qe^{\theta}) (\sigma^2 \theta) + qe^{ \theta } }{1 - q + qe^{ \theta }}
    = \frac{\int \ldots \int b(\bt) \frac{
    \sum_{i=1}^k e^{\theta t_i} t_i
    }{ \left( \sum_{i=1}^k e^{\theta t_i} \right)^2 } d \bt }{\int \ldots \int b(\bt) \left( \sum_{i=1}^k e^{\theta t_i} \right)^{-1} d \bt} 
    \label{eq:approx}
\end{align}

As we mentioned earlier, $b(\bt) > 0$ only when $y(\bt) = \sum_{i=1}^k y(t^{(i)}) > \eps$, and for such an event it is most likely that there is only \emph{one outlier} $t^*$ among all $\{ t^{(i)} \}_{i=1}^k$ such that $y(t^*) \approx \eps$, and all the rest of $y(t^{(i)}) \approx 0$. Therefore, a very simple and rough, yet surprisingly effective approximation for the RHS of (\ref{eq:approx}) is 
\begin{align}
    \frac{\int \ldots \int b(\bt) \frac{
    \sum_{i=1}^k e^{\theta t_i} t_i
    }{ \left( \sum_{i=1}^k e^{\theta t_i} \right)^2 } d \bt }{\int \ldots \int b(\bt) \left( \sum_{i=1}^k e^{\theta t_i} \right)^{-1} d \bt} 
    \approx 
    \frac{ b(t) e^{-\theta t} t }{ b(t) e^{-\theta t} } = t 
\end{align}
for $t$ s.t. $y(t) = \eps$. 
This leads to an approximate solution 
\begin{align}
    \theta^* = 
    \frac{1}{2\sigma^2} + \log \left( (\exp(\eps)-(1-q)) / q \right)
\end{align}

\newpage

\clearpage

\section{Sample Complexity for Achieving Target False Positive Rate}
\label{appendix:sample-complexity}
To derive the sample complexity for achieving a DP verifier with a target false positive rate, we use Bennett's inequality. 

\begin{lemma}[Bennett’s inequality]
\label{thm:bennett}
Let $X_1, \ldots, X_n$ be independent real-valued random variables with finite variance such that $X_i \le b$ for some $b>0$ almost surely for all $1 \le i \le n$. 
Let $\nu \ge \sum_{i=1}^n \E[X_i^2]$. For any $t > 0$, we have 
\begin{align}
    \Pr \left[ \sum_{i=1}^n X_i - \E[X_i] \ge t \right] 
    \le \exp \left( -\frac{\nu}{b^2} h \left(\frac{bt}{\nu}\right) \right)
\end{align}
where $h(x) = (1+x)\log(1+x)-x$ for $x > 0$. 
\end{lemma}

\smcsamplecomp*

\begin{proof}
For any $\M$ s.t. $\propdelta <  \smooth \delta_Y(\eps)$, we have 
\begin{align}
    \Pr \left[ \deltamc^m(\eps; Y) < \propdelta/\smooth - \Delta \right] %&\le \Pr \left[ \deltamc^m(\eps; Y) < \delta_\M(\eps) - \Delta \right] \nonumber \\
    &\le \Pr \left[ \deltamc^m(\eps; Y) < \delta_Y(\eps) - \Delta \right] \nonumber \\
    &= \Pr \left[ \frac{1}{m} \sum_{i=1}^m (\deltamc^{(i)}-\delta_Y(\eps)) < -\Delta \right] \nonumber \\
    &= \Pr \left[ \sum_{i=1}^m (\delta_Y(\eps)-\deltamc^{(i)}) > m\Delta \right] \label{eq:to-use-bennett} 
\end{align}
Since $\deltamc \in [-1, 0]$, the condition in Bennett's inequality is satisfied with $b \rightarrow 0^+$. 
Hence, (\ref{eq:to-use-bennett}) can be upper bounded by 
\begin{align}
    (\ref{eq:to-use-bennett}) 
    &\le \lim_{b \rightarrow 0^+} \exp \left( - \frac{m\varbound}{b^2} h\left(\frac{b \Delta}{\varbound}\right) \right) \nonumber \\
    &= \exp \left( -\frac{m\Delta^2}{2\varbound} \right) \nonumber
\end{align}

By setting $\exp \left( -\frac{m\Delta^2}{2\varbound} \right) \le \propdelta / \smooth$, we have
\begin{align}
    m \ge \frac{2\varbound}{\Delta^2} \log(\smooth/\propdelta) \nonumber
\end{align}
\end{proof}

\clearpage

\section{Proofs for Moment Bound}
\label{appendix:moment-bound}

\subsection{Overview}

% As we can see from Theorem \ref{thm:smc-samplecomp}, to improve the sample efficiency for $\deltamc^m$ (or $\deltaism^m$) to achieve a target $\smooth$-relaxed false positive guarantee, it is important to bound the second moment $\E\left[\deltamc^2\right]$ (or $\E\left[\deltaism^2\right]$) as tight as possible. 

% it is important to derive a good upper bound $\varbound$ for the second moment of $\deltamc$ (or $\deltaism$) in order to reduce the number of samples required for achieving the privacy guarantee. 
% Since $(1-e^{\eps - Y})_+ \in [0, 1]$, a trivial bound is
% \begin{align}
%     \varbound &:= \E \left[ \left( \deltamc \right)^2 \right] = \E \left[ (1-e^{\eps - Y})_+^2 \right] \nonumber \\
%     &\le \E \left[ (1-e^{\eps - Y})_+ \right] = \delta_Y(\eps) \nonumber
% \end{align}
% Therefore, we can use any (efficiently computable) upper bound for $\delta_Y(\eps)$ as $\varbound$, e.g., the one from RDP. 

As suggested by Theorem \ref{thm:smc-samplecomp}, a good upper bound for $\E\left[\left( \deltamc\right)^2\right]$ (or $\E\left[\left( \deltaism\right)^2\right]$) is important for the computational efficiency of MC-based $\dpv$. 
%For space constraints, we show a simple way of bounding $\E\left[\left( \deltamc\right)^2\right]$ here and defer the more technically involved results for $\E\left[\left( \deltaism\right)^2\right]$ to Appendix \ref{appendix:moment-bound}. 

% We start with an upper bound for the higher moment of $\deltamc$ when the RDP guarantee for the composed mechanism $\M$ is available. 

We upper bound the higher moment of $\deltamc$ through the RDP guarantee for the composed mechanism $\M$. This is a natural idea since converting RDP to upper bounds for $\delta_Y(\eps)$ -- the first moment of $\deltamc$ -- is a well-studied problem \cite{mironov2017renyi, canonne2020discrete, asoodeh2021three}. 
Recall that the RDP guarantee for $\M$ is equivalent to a bound for $M_Y(\lambda) := \E[e^{\lambda Y}]$ for $\M$'s privacy loss random variable $Y$ for any $\lambda \ge 0$. 
\begin{lemma}[RDP-MGF bound conversion \cite{mironov2017renyi}]
\label{lemma:rdp-to-mgf}
If a mechanism $\M$ is $(\alpha, \eps_{\mathrm{R}}(\alpha))$-RDP, then $\mgf_Y(\lambda) \le \exp(\lambda \eps_{\mathrm{R}}(\lambda+1))$. 
\end{lemma}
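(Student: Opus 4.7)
The plan is to unpack the two definitions involved---Rényi differential privacy and the MGF of the privacy loss random variable---and show that the claim is essentially a direct rewriting via a change of measure. Let $(P, Q)$ be the dominating pair for $\M$ so that the PRV is $Y = \log(P(o)/Q(o))$ with $o \sim P$. RDP at order $\alpha$ means $D_\alpha(P \| Q) \le \eps_{\mathrm{R}}(\alpha)$, where the Rényi divergence is
\begin{equation}
D_\alpha(P \| Q) = \frac{1}{\alpha - 1} \log \E_{o \sim Q}\!\left[ \left( \tfrac{P(o)}{Q(o)} \right)^{\!\alpha} \right]. \nonumber
\end{equation}

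First I would write out $\mgf_Y(\lambda) = \E_{o \sim P}[e^{\lambda Y}] = \E_{o \sim P}[(P(o)/Q(o))^{\lambda}]$. Next I would perform a change of measure from $P$ to $Q$ by absorbing one factor of $P(o)/Q(o)$, obtaining
\begin{equation}
\mgf_Y(\lambda) = \E_{o \sim Q}\!\left[ \left( \tfrac{P(o)}{Q(o)} \right)^{\!\lambda+1} \right]. \nonumber
\end{equation}
By the definition of Rényi divergence at order $\alpha = \lambda+1$, the right-hand side equals $\exp(\lambda \, D_{\lambda+1}(P \| Q))$. Applying the RDP assumption at $\alpha = \lambda + 1$ yields the desired bound $\mgf_Y(\lambda) \le \exp(\lambda \eps_{\mathrm{R}}(\lambda+1))$.

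There is essentially no hard step: the whole argument is a one-line change of measure plus recognizing the resulting expression as a Rényi divergence. The only mild subtlety is ensuring $\lambda \ge 0$ so that $\lambda + 1 \ge 1$ lies in the range where RDP is defined, and being careful that $(P, Q)$ is actually a dominating pair (so the RDP of $\M$ applied to adjacent datasets translates into an RDP-type bound on $(P, Q)$); both are standard and handled by the dominating-pair formalism already established earlier in the paper (Definition~\ref{def:dpair-prv} and Lemma~\ref{lemma:dominating-composed}). I would also note in passing that the same derivation works in reverse, so the lemma is essentially tight.
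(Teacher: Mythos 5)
Your derivation is correct and is exactly the standard change-of-measure argument: $\mgf_Y(\lambda) = \E_{o\sim P}[(P(o)/Q(o))^\lambda] = \E_{o\sim Q}[(P(o)/Q(o))^{\lambda+1}] = \exp(\lambda D_{\lambda+1}(P\|Q)) \le \exp(\lambda\eps_{\mathrm{R}}(\lambda+1))$. The paper states this lemma without proof as a known conversion (it underlies the cited RDP-to-DP literature), so there is nothing to compare against beyond noting that your argument is the intended one; your caveats about $\lambda \ge 0$ and the dominating-pair convention are appropriate and standard.
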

We convert an upper bound for $\mgf_Y(\cdot)$ into the following guarantee for the higher moment of $\deltamc = (1-e^{\eps - Y})_+$. 

\begin{restatable}{theorem}{variancebound}
\label{thm:variancebound}
For any $u \ge 1$, we have
\begin{small}
\begin{align}
\E[(\deltamc)^u] = 
\E \left[ (1-e^{\eps - Y})_+^u \right]
\le \min_{\lambda \ge 0} \mgf_Y(\lambda) e^{-\eps \lambda} \frac{u^u \lambda^\lambda}{ (u+\lambda)^{u+\lambda} } \nonumber
\end{align}
\end{small}
\end{restatable}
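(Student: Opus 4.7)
The plan is to prove this via a pointwise Chernoff-style bound: find a constant $C(u,\lambda)$ such that $(1-e^{\eps-Y})_+^u \le C(u,\lambda)\, e^{\lambda(Y-\eps)}$ holds pointwise in $Y$, then take expectations to get the MGF $\mgf_Y(\lambda)$, then optimize over $\lambda \ge 0$.

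The first step would be to substitute $Z := Y - \eps$ so the target reduces to bounding $g(Z) := (1-e^{-Z})_+^u \, e^{-\lambda Z}$ uniformly in $Z \in \mathbb{R}$. Since $(1-e^{-Z})_+ = 0$ for $Z \le 0$, only $Z \ge 0$ matters, and on $Z \ge 0$ the function $g$ is smooth and nonnegative. Differentiating yields
\begin{align*}
g'(Z) = e^{-\lambda Z}(1-e^{-Z})^{u-1}\bigl[(u+\lambda)e^{-Z} - \lambda\bigr],
\end{align*}
so the unique interior critical point is $Z^* = \log((u+\lambda)/\lambda)$, and $g$ vanishes at both $Z=0$ and $Z\to\infty$ (the latter using $\lambda \ge 0$; the case $\lambda = 0$ is handled as a limit or directly since $(1-e^{-Z})^u \le 1$). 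Plugging in $Z^*$ gives $1 - e^{-Z^*} = u/(u+\lambda)$ and $e^{-\lambda Z^*} = (\lambda/(u+\lambda))^\lambda$, so
\begin{align*}
\max_{Z \ge 0} g(Z) = \frac{u^u \lambda^\lambda}{(u+\lambda)^{u+\lambda}}.
\end{align*}

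Rearranging the pointwise inequality gives $(1-e^{\eps-Y})_+^u \le \frac{u^u \lambda^\lambda}{(u+\lambda)^{u+\lambda}}\, e^{\lambda(Y-\eps)}$ for every realization of $Y$. Taking expectations and using $\E[e^{\lambda Y}] = \mgf_Y(\lambda)$ produces
\begin{align*}
\E\bigl[(1-e^{\eps-Y})_+^u\bigr] \;\le\; \mgf_Y(\lambda)\, e^{-\eps\lambda}\, \frac{u^u \lambda^\lambda}{(u+\lambda)^{u+\lambda}}.
\end{align*}
Finally, since the inequality holds for every $\lambda \ge 0$, we may take the minimum over $\lambda \ge 0$ to obtain the claimed bound. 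The boundary case $\lambda = 0$ corresponds to the trivial bound $\E[(1-e^{\eps-Y})_+^u] \le 1$ (interpreting $0^0 = 1$), which is consistent with the formula.

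The only real obstacle is the univariate optimization of $g(Z)$, which is a short calculus exercise; everything else (substitution, pulling the deterministic constant out of the expectation, minimizing over $\lambda$) is mechanical. A minor care point is handling $u \ge 1$ correctly so that $g$ is $C^1$ at $Z=0$ (ensured by $u-1 \ge 0$), and noting that the bound combined with Lemma \ref{lemma:rdp-to-mgf} then immediately converts any RDP guarantee $(\alpha, \eps_{\mathrm{R}}(\alpha))$ into a bound on $\E[(\deltamc)^u]$ usable in Theorem \ref{thm:smc-samplecomp}.
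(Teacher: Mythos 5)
Your proof is correct and is essentially the same argument as the paper's: the paper phrases the step through the exponential tilting $P_\lambda(x) \propto \dist(x)e^{\lambda x}$ and then maximizes the integrand $f(x,\lambda) = (1-e^{-x})_+^u e^{-x\lambda}$, which is exactly your pointwise Chernoff-style bound $(1-e^{\eps-Y})_+^u \le C(u,\lambda)e^{\lambda(Y-\eps)}$ followed by taking expectations. The critical point, the resulting constant $u^u\lambda^\lambda/(u+\lambda)^{u+\lambda}$, and the final minimization over $\lambda \ge 0$ all match.
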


The proof is deferred to Appendix \ref{appendix:proof-smc-mmt}. 
The basic idea is to find the smallest constant $c$ such that $\E[(\deltamc)^u] \le c \mgf_Y(\lambda)$. By setting $u=1$, our result recovers the RDP-DP conversion from \cite{canonne2020discrete}. By setting $u=2$, we obtain the desired bound for $\E[(\deltamc)^2]$. 

\begin{corollary}
$\E[(\deltamc)^2] \le \min_{\lambda \ge 0} \mgf_Y(\lambda) e^{-\eps \lambda} \frac{4 \lambda^\lambda}{(\lambda+2)^{\lambda+2}}$. 
\label{corollary:varbound-mc}
\end{corollary}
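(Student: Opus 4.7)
The plan is to reduce the expectation bound to a pointwise bound of the form $(1-e^{\eps-y})_+^u \le c(\lambda, \eps, u)\, e^{\lambda y}$ and then take expectations to introduce the MGF $\mgf_Y(\lambda) = \E[e^{\lambda Y}]$. Concretely, for any fixed $\lambda \ge 0$ I would search for the smallest constant $c$ such that $(1-e^{\eps-y})_+^u \le c\, e^{\lambda y}$ holds for all $y \in \R$. When $y < \eps$, the left side is zero and the inequality is automatic, so the constraint is only active for $y \ge \eps$. Setting $t := y - \eps \ge 0$ reduces the problem to finding $c^* = e^{-\lambda \eps} \sup_{t \ge 0} (1-e^{-t})^u\, e^{-\lambda t}$.

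Next, I would carry out the one-variable maximization. Differentiating $f(t) = (1-e^{-t})^u e^{-\lambda t}$ gives
\begin{align*}
f'(t) = (1-e^{-t})^{u-1} e^{-\lambda t}\bigl[(u+\lambda)e^{-t} - \lambda\bigr],
\end{align*}
so the unique interior critical point satisfies $e^{-t^*} = \lambda/(u+\lambda)$, and hence $1-e^{-t^*} = u/(u+\lambda)$. Plugging back yields $f(t^*) = u^u \lambda^\lambda / (u+\lambda)^{u+\lambda}$. A brief sign check on $f'$ confirms this is a maximum (and the boundary values $f(0)=0$, $f(\infty)=0$ rule out the endpoints). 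This gives the tight constant $c^* = e^{-\lambda \eps}\, u^u \lambda^\lambda / (u+\lambda)^{u+\lambda}$.

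Taking expectations in the pointwise bound $(1-e^{\eps-Y})_+^u \le c^*\, e^{\lambda Y}$ and using linearity yields $\E[(\deltamc)^u] \le \mgf_Y(\lambda)\, e^{-\eps\lambda}\, u^u \lambda^\lambda / (u+\lambda)^{u+\lambda}$. Since $\lambda \ge 0$ was arbitrary, minimizing over $\lambda$ gives Theorem \ref{thm:variancebound}. The Corollary then follows immediately by specializing $u=2$, which gives the factor $2^2 \lambda^\lambda / (\lambda+2)^{\lambda+2} = 4\lambda^\lambda/(\lambda+2)^{\lambda+2}$.

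The main obstacle here is essentially conceptual rather than technical: the calculus optimization is elementary, so the real step is recognizing that RDP-type bounds should be obtained by linearizing the integrand against $e^{\lambda y}$ and optimizing the multiplicative constant pointwise. A secondary subtlety is that when $u=1$ the computation must recover the known RDP-to-DP conversion of \citet{canonne2020discrete}, which provides a useful sanity check: substituting $u=1$ gives $\lambda^\lambda/(1+\lambda)^{1+\lambda}$, matching the literature's tight conversion factor and confirming that the pointwise step is tight.
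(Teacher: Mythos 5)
Your proposal is correct and follows essentially the same route as the paper: the paper rewrites the expectation under the exponentially tilted distribution $P_\lambda$ and then maximizes the integrand $(1-e^{-t})^u e^{-\lambda t}$ pointwise, which is algebraically identical to your "find the smallest $c$ with $(1-e^{\eps-y})_+^u \le c\,e^{\lambda y}$ and take expectations" step, and the calculus yielding $u^u\lambda^\lambda/(u+\lambda)^{u+\lambda}$ matches exactly. Specializing $u=2$ gives the corollary as stated.
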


Corollary \ref{corollary:varbound-mc} applies to any mechanisms where the RDP guarantee is available, which covers a wide range of commonly used mechanisms such as (Subsampled) Gaussian or Laplace mechanism. We also note that one may be able to further tighten the above bound similar to the optimal RDP-DP conversion in \cite{asoodeh2021three}. We leave this as an interesting future work.

Next, we derive the upper bound for $\E[(\deltais)^2]$ for Poisson Subsampled Gaussian mechanism. 

% We also derive the upper bound for $\E[(\deltais)^2]$, which is specific to Poisson Subsampled Gaussian mechanism. 

\begin{restatable}{theorem}{varboundisholder}
\label{thm:varbound-is-holder}
For any positive integer $\lambda$, and for any $a, b \ge 1$ s.t. $1/a + 1/b = 1$, we have
$
\E[(\deltais)^2] 
\le 
k \mgf_\dist(\theta) 
\left( \E[\deltamc^{2a}] \right)^{1/a}
\cdot
\left(
b \theta e^{- \lambda \eps} \int \left[ r(\lambda, x) \right]^k e^{- b \theta x} dx\right)^{1/b}
$
where $r(\lambda, x)$ is an upper bound for $\Pr_{\bt \sim \bp}[\max_i t_i \le x, y(\bt) \ge \eps]$ detailed in Appendix \ref{appendix:proof-is-mmt}. 
\end{restatable}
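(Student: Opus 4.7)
The plan is to start from the change-of-measure identity that the authors have already set up in Appendix~\ref{appendix:justify-is}, namely
$$
\E[(\deltais)^2] \;=\; k\,\mgf_\dist(\theta)\;\E_{\bt \sim \bp}\!\left[\frac{(1-e^{\eps-y(\bt;\bp,\bq)})_+^{2}}{\sum_{i=1}^{k} e^{\theta t_i}}\right],
$$
which follows by rewriting the mixture likelihood ratio $\bp(\bt)/\bp_\theta(\bt) = k\mgf_\dist(\theta)/\sum_i e^{\theta t_i}$. From this point, the strategy is to separate the ``badness'' factor $(1-e^{\eps-y(\bt)})_+^{2}$ from the ``importance weight'' factor $(\sum_i e^{\theta t_i})^{-1}$, and then to control each piece via tools that are already available.

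The first separation step is H\"older's inequality with conjugate exponents $a,b\ge 1$ satisfying $1/a+1/b=1$. Since the first factor vanishes on $\{y(\bt) < \eps\}$, I may freely multiply the second factor by the indicator $I[y(\bt)\ge\eps]$ before invoking H\"older, giving
$$
\E_{\bt\sim\bp}\!\left[\frac{(1-e^{\eps-y(\bt)})_+^{2}}{\sum_i e^{\theta t_i}}\right] \;\le\; \bigl(\E[(1-e^{\eps-y(\bt)})_+^{2a}]\bigr)^{1/a}\;\left(\E_{\bt\sim\bp}\!\left[\frac{I[y(\bt)\ge\eps]}{(\sum_i e^{\theta t_i})^{b}}\right]\right)^{1/b}.
$$
The first factor is exactly $(\E[\deltamc^{2a}])^{1/a}$ by Definition~\ref{def:smc}, so it is controlled by Theorem~\ref{thm:variancebound} applied with $u=2a$; no new calculation is needed there.

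The remaining work is to bound the second factor. For this I would combine two pointwise inequalities that hold for any $\bt$ and any $\lambda\ge 0$: the Chernoff-style bound $I[y(\bt)\ge\eps]\le e^{\lambda(y(\bt)-\eps)}$ and the trivial lower bound $\sum_i e^{\theta t_i}\ge e^{\theta\max_i t_i}$. After substituting both, the expectation becomes $e^{-\lambda\eps}\,\E_{\bt\sim\bp}\bigl[\prod_i(P(t_i)/Q(t_i))^{\lambda}\cdot e^{-b\theta\max_i t_i}\bigr]$. The key move is the layer-cake identity $e^{-b\theta M}=b\theta\int_M^\infty e^{-b\theta x}\,\dd x$ applied with $M=\max_i t_i$, which turns the coupling-inducing maximum into an integral of the indicator $I[\max_i t_i\le x]=\prod_i I[t_i\le x]$. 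Fubini and independence of the coordinates under $\bp$ then factor the $k$-fold expectation into $[r(\lambda,x)]^{k}$, where $r(\lambda,x):=\E_{t\sim P}[(P(t)/Q(t))^{\lambda}\,I[t\le x]]$. Assembling the three steps yields the claimed bound.

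The main obstacle is the last step: the product of likelihood ratios and the max-exponential are coupled through the joint law of $\bt$, and any naive attempt to factorise them (for instance, Cauchy--Schwarz between the two) wastes too much. The layer-cake rewrite is the essential trick because it replaces the maximum by a deterministic threshold $x$, after which the indicator $I[\max_i t_i\le x]$ splits as $\prod_i I[t_i\le x]$ and independence finally takes over. Beyond that, the remaining items --- justifying Fubini via nonnegativity, checking that $r(\lambda,x)$ is finite for integer $\lambda$ under the Gaussian-mixture tails of $P$, and ensuring $b\theta>0$ so the layer-cake identity is valid --- are routine.
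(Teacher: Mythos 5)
Your proposal is correct and follows essentially the same route as the paper: the same change-of-measure identity, multiplication by the indicator $I[y(\bt)\ge\eps]$, H\"older with exponents $a,b$, Theorem \ref{thm:variancebound} with $u=2a$ for the first factor, and the bound $(\sum_i e^{\theta t_i})^{-b}\le e^{-b\theta\max_i t_i}$ followed by Chernoff and the tail-integral rewrite for the second factor (the paper phrases your layer-cake step as integration by parts on the CDF of $\max_i t_i$ and applies Chernoff to the conditional probability rather than pointwise, but these orderings yield identical expressions, with the same $r(\lambda,x)=\E_{t\sim P}[e^{\lambda y(t)}I[t\le x]]$). No gaps.
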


The proof is based on applying H\"older's inequality to the expression of $\E[(\deltais)^2]$, and then bound the part where $\theta$ is involved: 
$\E_{\bt \sim \bp} \left[ \left( \frac{1}{k} \sum_{i=1}^k \frac{ P_\theta(t_i) }{ P(t_i) } \right)^{-1} \right]$. 
We can bound $\E[\deltamc^{2a}]$ through Theorem \ref{thm:variancebound}. 

Figure \ref{fig:secondmmt} (a) shows the analytical bound from Corollary \ref{corollary:varbound-mc} and Theorem \ref{thm:varbound-is-holder} compared with empirically estimated $\E[(\deltamc)^2]$ and $\E[(\deltais)^2]$. 
As we can see, the analytical bound for $\E[(\deltais)^2]$ for relatively large $\theta$ is much smaller than the bound for $\E[(\deltamc)^2]$ (i.e., $\theta=0$ in the plot). 
Moreover, we find that the $\theta$ which minimizes the analytical bound (the blue `*') is \emph{close} to the $\theta$ selected by our heuristic (the red `*'). 
For computational efficiency, one may prefer to use $\theta$ that minimizes the analytical bound. 
However, the heuristically selected $\theta$ is still useful when one simply wants to estimate $\delta_Y(\eps)$ and does not require the formal, analytical guarantee for the false positive rate. 
We see such a scenario when we introduce the MC accountant in Section \ref{sec:mcacct}. 
We also note that such a discrepancy (and the gap between the analytical bound and empirical estimate) is due to the use of H\"older's inequality in bounding $\E[(\deltais)^{2}]$. Further tightening the bound for $\E[(\deltais)^{2}]$ is important for future work.

% and then combine the bound with Corollary \ref{corollary:varbound-mc} via H\"older's inequality. 
% As we can see, the upper bound for $\E[(\deltais)^2]$ depends on $\E[(\deltamc)^2]$, which means that if we obtain a tighter bound for $\E[(\deltamc)^2]$, Theorem \ref{thm:varbound-is-js} can also be further improved. 

\subsection{Moment Bound for Simple Monte Carlo Estimator}
\label{appendix:proof-smc-mmt}

\variancebound*

\begin{proof}
\begin{align*}
\E[ (1-e^{\eps - Y})_+^u ] 
&= \int (1-e^{\eps - x})_+^u \dist(x) \dd x \\
&= \mgf_Y(\lambda) \int (1-e^{\eps - x})_+^u e^{-\lambda x} \frac{\dist(x)e^{\lambda x}}{\E[e^{\lambda Y}]} \dd x \\ 
&= \mgf_Y(\lambda) \E_{x \sim \alterdist} [(1-e^{\eps - x})_+^u e^{-\lambda x}] \\
&= \mgf_Y(\lambda) e^{- \lambda \eps} \E_{x \sim \alterdist} [(1-e^{\eps - x})_+^u e^{(\eps - x)\lambda}]
\end{align*}
where $P_\lambda(x) := \frac{\dist(x)e^{\lambda x}}{\E[e^{\lambda Y}]}$ is the exponential tilting of $P$. 
Define $f(x, \lambda) := (1-e^{-x})_+^u e^{-x \lambda}$. When $x \le 0, f(x, \lambda)=0$. When $x > 0$, the derivative of $f$ with respect to $x$ is 
\begin{align}
    \frac{\del f(x, \lambda)}{\del x}
    = e^{-x\lambda} (1-e^{-x})^{u-1} [ e^{-x}(u+\lambda) - \lambda ] \nonumber
\end{align}
It is easy to see that the maximum of $f(x, \lambda)$ is achieved at $x^* = \log \left( \frac{u+\lambda}{\lambda} \right)$, and we have
\begin{align*}
    \max_x f(x, \lambda) &= 
    \left( \frac{u}{u+\lambda} \right)^u \left( \frac{\lambda}{u+\lambda} \right)^\lambda \\
    &= \frac{u^u \lambda^\lambda}{ (u+\lambda)^{u+\lambda} }
\end{align*}
Overall, we have 
\begin{align}
    \E[ (1-e^{\eps - Y})_+^u ] 
    &\le \mgf_Y(\lambda) e^{-\eps \lambda} \frac{u^u \lambda^\lambda}{ (u+\lambda)^{u+\lambda} } \nonumber
\end{align}
\end{proof}

\subsection{Moment Bound for Importance Sampling Estimator}
\label{appendix:proof-is-mmt}

In this section, we first prove two possible upper bounds for $\E[(\deltais)^2]$ in Theorem \ref{thm:varbound-is-js} and Theorem \ref{thm:varbound-is-max}. We then combine these two bounds in Theorem \ref{thm:varbound-is-holder} via Holder's inequality.

\begin{restatable}{theorem}{varboundisjs}
\label{thm:varbound-is-js}
For any $\theta \ge 1 / \sigma^2$, we have
\begin{small}
\begin{align}
\E[(\deltais)^2] 
\le 
\mgf_\dist(\theta) \left[ \frac{1}{k} \left( \frac{\eps}{q} + k \right) \right]^{- \theta \sigma^2} e^{- \theta / 2} \E[(\deltamc)^2] \nonumber
\end{align}
\end{small}
\end{restatable}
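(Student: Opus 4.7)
The plan is to start from the variance identity established at the end of Appendix \ref{appendix:importance-sampling}:
\begin{align*}
\E[(\deltaism)^2]
= k\, \mgf_{\dist}(\theta)\, \E_{\bt \sim \bp}\!\left[\left(1-e^{\eps - y(\bt;\bp,\bq)}\right)_+^{2} \cdot \frac{1}{\sum_{i=1}^{k} e^{\theta t_i}}\right],
\end{align*}
and lower-bound the denominator $\sum_i e^{\theta t_i}$ on the event $\{y(\bt)\ge\eps\}$, which is the only region where the squared hinge is nonzero. Because the bound only has to hold on this event, we can afford a rather crude estimate and still get the factor $[(\eps/q+k)/k]^{-\theta\sigma^2} e^{-\theta/2}$ claimed in the statement.

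First, I will relate $y(\bt)$ to a sum of exponentials in $t_i$. With $Q=\N(0,\sigma^2)$ and $\Pzero=\N(1,\sigma^2)$, one has $\Pzero(t)/Q(t)=\exp((t-\tfrac12)/\sigma^2)$, so $y(t_i)=\log((1-q)+q e^{(t_i-1/2)/\sigma^2})$. Applying $\log(1+x)\le x$ gives $y(t_i)\le q\bigl(e^{(t_i-1/2)/\sigma^2}-1\bigr)$. Summing over $i$, the condition $y(\bt)\ge\eps$ forces
\begin{align*}
\sum_{i=1}^{k} e^{t_i/\sigma^2} \;\ge\; e^{1/(2\sigma^2)}\!\left(\frac{\eps}{q}+k\right).
\end{align*}

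Second, I will convert this lower bound on $\sum_i e^{t_i/\sigma^2}$ into a lower bound on $\sum_i e^{\theta t_i}$ using the power-mean inequality. Setting $u_i := e^{t_i/\sigma^2}\ge 0$ and $p := \theta\sigma^2 \ge 1$ (this is exactly where the hypothesis $\theta\ge 1/\sigma^2$ is used), we have $e^{\theta t_i}=u_i^p$ and
\begin{align*}
\sum_{i=1}^k u_i^p \;\ge\; k^{\,1-p}\!\left(\sum_{i=1}^k u_i\right)^{\!p}
\;\ge\; k\!\left(\frac{e^{1/(2\sigma^2)}(\eps/q+k)}{k}\right)^{\!\theta\sigma^2}
= k\, e^{\theta/2}\!\left(\frac{\eps/q+k}{k}\right)^{\!\theta\sigma^2}.
\end{align*}
Reciprocating gives a pointwise upper bound on $1/\sum_i e^{\theta t_i}$ on the relevant event, which I then pull out of the expectation. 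The remaining factor is exactly $\E_{\bt\sim\bp}[(1-e^{\eps-y(\bt)})_+^{2}] = \E[(\deltamc)^2]$, and collecting constants yields the stated inequality.

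The only nontrivial step is the passage through the power-mean inequality, and the role of the assumption $\theta\sigma^2\ge 1$ is to ensure convexity of $x\mapsto x^{\theta\sigma^2}$ on $[0,\infty)$ so that power-mean applies in the correct direction. The other bound, $\log(1+x)\le x$, is loose when $q e^{(t_i-1/2)/\sigma^2}$ is large, which is one source of slack compared with the tighter (but more technical) Hölder-based bound of Theorem \ref{thm:varbound-is-holder}; this is the tradeoff that allows the cleaner closed form here.
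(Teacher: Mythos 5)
Your proposal is correct and follows essentially the same route as the paper's proof: the same second-moment identity, the same use of $\log(1+x)\le x$ on the event $\{y(\bt)\ge\eps\}$ to get $\sum_i e^{t_i/\sigma^2}\ge e^{1/(2\sigma^2)}(\eps/q+k)$, and the same convexity step (your power-mean inequality with $p=\theta\sigma^2\ge1$ is exactly the paper's application of Jensen's inequality in Lemma \ref{lemma:lowerbound}). As a minor aside, your $\eps/q$ matches the correct computation and the theorem statement; the paper's intermediate Lemma \ref{lemma:lowerbound} contains a typo ($\eps/k$) that its own proof does not propagate.
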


\begin{proof}
\begin{align}
&\E_{\bt \sim \bp_\theta} \left[\left(1 - e^{\eps - y(\bt; \bp, \bq)} \right)_{+}^2 \left( \frac{\bp(\bt)}{\bp_\theta(\bt)} \right)^2 \right] \nonumber \\
& = \E_{\bt \sim \bp} \left[\left(1 - e^{\eps - y(\bt; \bp, \bq)} \right)_{+}^2 \left( \frac{\bp(\bt)}{\bp_\theta(\bt)} \right) \right] \nonumber \\
& = \E_{\bt \sim \bp} \left[\left(1 - e^{\eps - y(\bt; \bp, \bq)} \right)_{+}^2 \left( \frac{1}{k} \sum_{i=1}^k \frac{ P_\theta(t_i) }{ P(t_i) } \right)^{-1} \right] \nonumber \\
&= \mgf_\dist(\theta) \E_{\bt \sim \bp} \left[\left(1 - e^{\eps - y(\bt; \bp, \bq)} \right)_{+}^2 \left( \frac{k}{\sum_{i=1}^k e^{\theta t_i} } \right) \right] \label{eq:cont}
\end{align}

Note that 
\begin{align}
    &\E_{\bt \sim \bp} \left[\left(1 - e^{\eps - y(\bt; \bp, \bq)} \right)_{+}^2 \left( \frac{k}{\sum_{i=1}^k e^{\theta t_i} } \right) \right] \nonumber \\
    &= 
    \E_{\bt \sim \bp} \left[\left(1 - e^{\eps - y(\bt; \bp, \bq)} \right)_{+}^2 \left( \frac{k}{\sum_{i=1}^k e^{\theta t_i} } \right) I[y(\bt; \bp, \bq) \ge \eps] \right]  \nonumber
\end{align}

\begin{lemma} 
\label{lemma:lowerbound}
When $y(\bt; \bp, \bq) = \sum_{i=1}^k \log\left( 1-q+qe^{(2t_i - 1)/(2\sigma^2)} \right) \ge \eps$ and $\theta \sigma^2 \ge 1$, we have 
\begin{align*}
    \sum_{i=1}^k e^{\theta t_i} \ge k \left[ \frac{1}{k} \left(\frac{\eps}{k} + k\right) e^{1/(2\sigma^2)} \right]^{\theta \sigma^2}
\end{align*}
\end{lemma}
\begin{proof}
Since $\log(1+x) \le x$, we have 
\begin{align*}
    \eps &\le \sum_{i=1}^k \log\left( 1-q+qe^{(2t_i - 1)/(2\sigma^2)} \right) \\
    &\le \sum_{i=1}^k q\left( e^{(2t_i - 1)/(2\sigma^2)} - 1 \right)
\end{align*}
Hence, 
\begin{align*}
\sum_{i=1}^k e^{t_i / \sigma^2} \ge 
\left( \frac{\eps}{q} + k \right) e^{1 / (2\sigma^2)}
\end{align*}

Hence, 
\begin{align*}
    \sum_{i=1}^k e^{t_i \theta} 
    = \sum_{i=1}^k \left(e^{t_i / \sigma^2}\right)^{\theta \sigma^2}
    &= k \left[ \frac{1}{k} \sum_{i=1}^k \left(e^{t_i / \sigma^2}\right)^{\theta \sigma^2} \right] \\
    &\ge k \left[ \frac{1}{k} \sum_{i=1}^k \left(e^{t_i / \sigma^2}\right) \right]^{\theta \sigma^2} \\
    &\ge k \left[ \frac{1}{k} \left( \frac{\eps}{q} + k \right) e^{1 / (2\sigma^2)} \right]^{\theta \sigma^2} \\
    &= k \left[ \frac{1}{k} \left( \frac{\eps}{q} + k \right) \right]^{\theta \sigma^2} e^{\theta / 2}
\end{align*}
where the first inequality is due to Jensen's inequality. 
\end{proof}

By Lemma \ref{lemma:lowerbound}, we have 
\begin{align*}
    (\ref{eq:cont}) 
    &\le \mgf_\dist(\theta) \frac{1}{\left[ \frac{1}{k} \left( \frac{\eps}{q} + k \right) \right]^{\theta \sigma^2} e^{\theta / 2}}
    \E_{\bt \sim \bp} \left[\left(1 - e^{\eps - y(\bt; \bp, \bq)} \right)_{+}^2 \right] \\
    &= \mgf_\dist(\theta) \left[ \frac{1}{k} \left( \frac{\eps}{q} + k \right) \right]^{- \theta \sigma^2} e^{- \theta / 2} \E[(\deltamc)^2]
\end{align*}
which concludes the proof. 
\end{proof}

\begin{restatable}{theorem}{varboundismax}
\label{thm:varbound-is-max}
For any positive integer $\lambda$, we have
\begin{small}
\begin{align}
\E[(\deltais)^2] 
\le 
k \mgf_\dist(\theta) 
\theta e^{- \lambda \eps} \int \left[ r(\lambda, x) \right]^k e^{-\theta x} dx
\end{align}
\end{small}
where $r(\lambda, x)$ is an upper bound for the MGF of privacy loss random variable of \emph{truncated} Gaussian mixture $P|_{\le x}$. 
\end{restatable}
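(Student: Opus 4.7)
The plan is to start from the expression for $\E[(\deltais)^2]$ derived in the proof of Theorem \ref{thm:varbound-is-js}, namely
$$
\E[(\deltais)^2] = k\,\mgf_\dist(\theta)\cdot \E_{\bt \sim \bp}\!\left[\frac{(1 - e^{\eps - y(\bt;\bp,\bq)})_+^2}{\sum_{i=1}^k e^{\theta t_i}}\right],
$$
and then apply two mild pointwise relaxations: bound the denominator below by its largest term, $\sum_i e^{\theta t_i} \ge e^{\theta \max_i t_i}$ (valid for all real $\theta$ since every summand is positive), and bound $(1 - e^{\eps - y})_+^2 \le I[y \ge \eps]$. Letting $M := \max_i t_i$, this gives
$$
\E[(\deltais)^2] \le k\,\mgf_\dist(\theta)\cdot \E_{\bt \sim \bp}\!\left[e^{-\theta M}\,I[y(\bt) \ge \eps]\right].
$$

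Next I would convert the $e^{-\theta M}$ factor into an integral that separates $M$ from the rest of the integrand. Using the identity $e^{-\theta M} = \theta \int_M^\infty e^{-\theta x}\,dx$ and Fubini's theorem,
$$
\E\!\left[e^{-\theta M}\,I[y(\bt) \ge \eps]\right] = \theta \int_{-\infty}^\infty e^{-\theta x}\,\Pr_{\bt \sim \bp}\!\left[\max_i t_i \le x,\; y(\bt) \ge \eps\right]\,dx.
$$
So the entire task reduces to producing a clean upper bound on the joint probability $\Pr[\max_i t_i \le x,\ y(\bt) \ge \eps]$ of the form $e^{-\lambda \eps} r(\lambda,x)^k$, after which plugging in and combining constants immediately yields the claimed inequality.

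For that final step, I would exploit the product structure of $\bp$: the event $\{\max_i t_i \le x\}$ is $\bigcap_i \{t_i \le x\}$, so conditional on this event each $t_i$ is independently distributed as $P|_{\le x}$. A Chernoff bound with parameter $\lambda$ applied to the sum $y(\bt) = \sum_i y(t_i)$ then gives
$$
\Pr\!\left[\max_i t_i \le x,\; y(\bt) \ge \eps\right] \le F_P(x)^k\cdot e^{-\lambda \eps}\cdot \left(\E_{t \sim P|_{\le x}}[e^{\lambda\, y(t)}]\right)^{k},
$$
and identifying $r(\lambda, x) := F_P(x)\cdot \E_{t \sim P|_{\le x}}[e^{\lambda\, y(t)}] = \int_{-\infty}^x P(t)\,e^{\lambda\, y(t)}\,dt$ (the unnormalized truncated MGF of the PRV) collapses this to $e^{-\lambda\eps}\, r(\lambda,x)^k$, exactly as required.

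The main obstacle is largely bookkeeping rather than conceptual: one must use the \emph{truncated} MGF $r(\lambda,x)$ rather than the full MGF $\E_{t\sim P}[e^{\lambda y(t)}]$, since for subsampled Gaussian the latter grows rapidly in $\lambda$ and the outer integral in $x$ would not converge; the $F_P(x)$ factor is precisely what absorbs the $\Pr[\max_i t_i \le x]$ weight and keeps $r(\lambda,x)^k e^{-\theta x}$ integrable. Restricting to positive integer $\lambda$, as in the hypothesis, further guarantees that $(P/Q)^\lambda$ is a tractable polynomial in the Gaussian-mixture density, which will be useful when actually computing $r(\lambda,x)$ downstream. The slack in the two pointwise relaxations ($\sum \to \max$ and $(1-e^{\eps-y})_+^2 \to I[y\ge \eps]$) is acceptable because importance sampling targets precisely the rare-event regime where $y$ concentrates just above $\eps$, where both relaxations are close to tight.
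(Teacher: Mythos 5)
Your proposal is correct and follows essentially the same route as the paper's proof: the same two pointwise relaxations ($\sum_i e^{\theta t_i}\ge e^{\theta\max_i t_i}$ and dropping $(1-e^{\eps-y})_+^2$ to an indicator), the same conversion of $e^{-\theta t_{\max}}$ into an integral over the joint tail probability (the paper phrases it as integration by parts, you as $e^{-\theta M}=\theta\int_M^\infty e^{-\theta x}\,dx$ plus Fubini), and the same conditional Chernoff bound yielding $e^{-\lambda\eps}[r(\lambda,x)]^k$ with $r(\lambda,x)=\E_{t\sim P}[e^{\lambda y(t)}I[t\le x]]$. Your identification of $r$ as the \emph{unnormalized} truncated MGF, with $F_P(x)^k$ absorbing the $\Pr[\max_i t_i\le x]$ factor, matches the paper's computation exactly.
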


%\varboundismax*

\begin{proof}
Similar to Theorem \ref{thm:varbound-is-js}, the goal is to bound 
\begin{align}
k \mgf_\dist(\theta) \E_{\bt \sim \bp} \left[\left(1 - e^{\eps - y(\bt; \bp, \bq)} \right)_{+}^2 \left( \frac{1}{\sum_{i=1}^k e^{\theta t_i} } \right) \right] \nonumber
\end{align}

\newcommand{\tmax}{t_{\max}}

Note that 
\begin{align}
    &\E_{\bt \sim \bp} \left[\left(1 - e^{\eps - y(\bt; \bp, \bq)} \right)_{+}^2 \left( \frac{1}{\sum_{i=1}^k e^{\theta t_i} } \right) \right] \nonumber \\
    &=\E_{\bt \sim \bp} \left[\left(1 - e^{\eps - y(\bt; \bp, \bq)} \right)_{+}^2 \left( \frac{1}{\sum_{i=1}^k e^{\theta t_i} } \right) I[y(\bt; \bp, \bq) \ge \eps] \right] \nonumber \\
    &\le \E_{\bt \sim \bp} \left[ \frac{1}{\sum_{i=1}^k e^{\theta t_i} } I[y(\bt; \bp, \bq) \ge \eps] \right] \nonumber \\
    &\le \E_{\bt \sim \bp} \left[ \frac{1}{e^{\theta t_{\max}} } I[y(\bt; \bp, \bq) \ge \eps] \right] \label{eq:conttwo}
\end{align}
where $\tmax := \max_i t_i$. 

Further note that 
\begin{align}
    (\ref{eq:conttwo}) &= \E_{\bt \sim \bp} \left[ e^{-\theta t_{\max}} I[y(\bt) \ge \eps] \right] \nonumber \\
    &= \E_{\bt \sim \bp} \left[ e^{-\theta t_{\max}} | y(\bt) \ge \eps \right] \Pr_{\bt \sim \bp}[y(\bt) \ge \eps] \nonumber \\
    &= \left( \int e^{-\theta x} \dd \Pr_{\bt \sim \bp}[ \tmax \le x] \right)
     \Pr_{\bt \sim \bp}[y(\bt) \ge \eps] \nonumber \\
    &= - \left( \int \Pr_{\bt \sim \bp}\left[ \tmax \le x | y(\bt) \ge \eps \right] \dd e^{-\theta x} \right) 
    \Pr_{\bt \sim \bp}[y(\bt) \ge \eps] \label{eq:int-by-part} \\
    &= \theta \int \Pr_{\bt \sim \bp}\left[ \tmax \le x,  y(\bt) \ge \eps \right] e^{-\theta x} \dd x \label{eq:integral}
\end{align}

where (\ref{eq:int-by-part}) is obtained through integration by parts. 

Now, as we can see from (\ref{eq:integral}), the question reduces to bound $\Pr_{\bt \sim \bp}\left[ \tmax \le x,  y(\bt) \ge \eps \right]$ for any $x \in \R$. 
It might be easier to write 
\begin{align}
    &\Pr_{\bt \sim \bp}\left[ \tmax \le x,  y(\bt) \ge \eps \right] = \Pr_{\bt \sim \bp}\left[ y(\bt) \ge \eps | \tmax \le x \right] \Pr_{\bt \sim \bp}\left[ \tmax \le x \right] \nonumber
\end{align}
and we know that 
\begin{align}
    \Pr_{\bt \sim \bp}\left[ \tmax \le x \right]
    &= \left( \Pr_{t \sim \dist}\left[ t \le x \right] \right)^k \\
    &= \left( (1-q) \Phi(x; 0, \sigma^2) + q \Phi(x; 1, \sigma^2) \right)^k
\end{align}
as all $t_i$s are i.i.d. random samples from $\dist$, where $\Phi(\cdot; \mu, \sigma^2)$ is the CDF of Gaussian distribution with mean $\mu$ and variance $\sigma^2$. 

It remains to bound the conditional probability $\Pr_{\bt \sim \bp}\left[ y(\bt) \ge \eps | \tmax \le x \right]$, it may be easier to see it in this way:
\begin{align*}
    &\Pr_{\bt \sim \bp}\left[ y(\bt) \ge \eps | \tmax \le x \right] \\
    &= \Pr_{\bt \sim \bp}\left[ y(\bt) \ge \eps | t_1 \le x, \ldots, t_k \le x \right] \\
    % &= \Pr_{\bt \sim \bp}\left[ y(\bt) \ge \eps | y(t_1) \le y(x), \ldots, y(t_k) \le y(x) \right] \\
    &= \Pr_{\bt \sim \bp}\left[ \sum_{i=1}^k y(t_i) \ge \eps | y(t_1) \le y(x), \ldots, y(t_k) \le y(x) \right] \\
    &\le 
    \frac{
    \E_{\bt \sim \bp}\left[ e^{ \lambda \sum_{i=1}^k y(t_i)} \ge e^{\lambda \eps} | y(t_1) \le y(x), \ldots, y(t_k) \le y(x) \right]}{ e^{\lambda \eps} } \\
\end{align*}
where the last step is due to Chernoff bound which holds for any $\lambda > 0$. 
Now we only need to bound the moment generating function for $y(t) = \log(1-q+qe^{\frac{2t-1}{2\sigma^2}}), t \sim P|_{\le x}$, where $P|_{\le x}$ is the \emph{truncated} distribution of $P$. 
We note that this is equivalent to bounding the R\'enyi divergence for truncated Gaussian mixture distribution. 

\newcommand{\qtil}{\Tilde{q}}

Recall that $\dist = (1-q) \N(0, \sigma^2) + q\N(1, \sigma^2)$. 
For any $\lambda$ that is a positive integer, we have 
\begin{small}
\begin{align}
    &\E_{t \sim \dist} \left[ e^{\lambda y(t)} I[t \le x] \right] \nonumber \\
    &= \frac{1}{\sqrt{2\pi}\sigma} \left[
    (1-q) \int_{-\infty}^x \left( 1-q+qe^{\frac{2t-1}{2\sigma^2}} \right)^\lambda e^{-\frac{t^2}{2\sigma^2}} dt 
    + 
    q \int_{-\infty}^x \left( 1-q+qe^{\frac{2t-1}{2\sigma^2}} \right)^\lambda e^{-\frac{(t-1)^2}{2\sigma^2}} dt 
    \right] \nonumber \\
    &= \frac{1}{2} \left( q e^{-\frac{1}{2\sigma^2}} \right)^\lambda 
    \sum_{i=0}^\lambda {\lambda \choose i} \qtil^i \left[ (1-q) e^{\frac{(\lambda-i)^i}{2\sigma^2}} \left( \erf \left( \frac{ x-(\lambda-i) }{\sqrt{2}\sigma} \right)+1 \right) 
    + 
    q e^{\frac{(\lambda-i+1)^i-1}{2\sigma^2}} \left( \erf \left( \frac{ x-(\lambda-i+1) }{\sqrt{2}\sigma} \right)+1 \right) 
    \right] \label{eq:erf}
\end{align}
\end{small}
where $\qtil := \frac{(1-q)\exp(1/(2\sigma^2))}{q}$. Note that the above expression can be efficiently computed. Denote the above results as $r(\lambda, x) :=$ (\ref{eq:erf}). Hence 
\begin{align}
    \E_{t \sim \dist} \left[ e^{\lambda y(t)} | t \le x \right] = \frac{ r(\lambda, x) }{ \Pr_{t \sim \dist}[t\le x]  }
\end{align}
Now we have 
\begin{align*}
    \Pr_{\bt \sim \bp}\left[ y(\bt) \ge \eps | \tmax \le x \right] &\le 
    \frac{
    \E_{\bt \sim \bp}\left[ e^{ \lambda \sum_{i=1}^k y(t_i)} \ge e^{\lambda \eps} | \tmax \le x \right]}{ e^{\lambda \eps} } \\
    &= \frac{
    \left[ r(\lambda, x) \right]^k }{ e^{\lambda \eps} \left(\Pr_{t \sim \dist}[t\le x]\right)^k }
\end{align*}

Plugging this bound into (\ref{eq:integral}), we have
\begin{align*}
    (\ref{eq:conttwo}) 
    &= \theta \int \Pr_{\bt \sim \bp}\left[ \tmax \le x,  y(\bt) \ge \eps \right] e^{-\theta x} \dd x \\
    &\le \theta e^{- \lambda \eps}
    \int \left[ r(\lambda, x) \right]^k e^{-\theta x} \dd x
\end{align*}
which leads to the final conclusion.
\end{proof}

\begin{remark}
In practice, we can further improve the bound by moving the minimum operation inside the integral:
\begin{small}
\begin{align}
\E[(\deltais)^2] 
\le 
k \mgf_\dist(\theta) 
\theta e^{- \lambda \eps} \int \left[ \min_\lambda r(\lambda, x) \right]^k e^{-\theta x} dx
\end{align}
\end{small}
Of course, this bound will be less efficient to compute. 
\end{remark}

% \varboundisholder*

\begin{customthm}{\ref{thm:varbound-is-holder}}[Generalizing Theorem \ref{thm:varbound-is-js} and Theorem \ref{thm:varbound-is-max} via Holder's inequality]
For any positive integer $\lambda$, and for any $a, b \ge 1$ s.t. $1/a + 1/b = 1$, we have
$
\E[(\deltais)^2] 
\le 
k \mgf_\dist(\theta) 
\left( \E[\deltamc^{2a}] \right)^{1/a}
\cdot
\left(
b \theta e^{- \lambda \eps} \int \left[ r(\lambda, x) \right]^k e^{- b \theta x} dx\right)^{1/b}
$
% where $r(\lambda, x)$ is an upper bound for $\Pr_{\bt \sim \bp}[\max_i t_i \le x, y(\bt) \ge \eps]$ detailed in 
where $r(\lambda, x)$ is an upper bound for the MGF of privacy loss random variable of \emph{truncated} Gaussian mixture $P|_{\le x}$ defined in (\ref{eq:erf}). 
\end{customthm}

% \begin{restatable}{theorem}{varboundisholder}
% \label{thm:varbound-is-holder}
% For any positive integer $\lambda$, and for any $a, b \ge 1$ s.t. $1/a + 1/b = 1$, we have
% $
% \E[(\deltais)^2] 
% \le 
% k \mgf_\dist(\theta) 
% \left( \E[\deltamc^{2a}] \right)^{1/a}
% \cdot
% \left(
% b \theta e^{- \lambda \eps} \int \left[ r(\lambda, x) \right]^k e^{- b \theta x} dx\right)^{1/b}
% $
% where $r(\lambda, x)$ is an upper bound for $\Pr_{\bt \sim \bp}[\max_i t_i \le x, y(\bt) \ge \eps]$ detailed in Appendix \ref{appendix:proof-is-mmt}. 
% \end{restatable}

\begin{proof}
Note that Theorem \ref{thm:varbound-is-js} and Theorem \ref{thm:varbound-is-max} can both be viewed as two special cases of H\"older's inequality: for any $a, b \ge 1$ s.t. $\frac{1}{a} + \frac{1}{b} = 1$, we have 

\begin{align*}
&\E_{\bt \sim \bp} \left[\left(1 - e^{\eps - y(\bt; \bp, \bq)} \right)_{+}^2 \left( \frac{1}{\sum_{i=1}^k e^{\theta t_i} } \right) \right] \\
&= \E_{\bt \sim \bp} \left[\left(1 - e^{\eps - y(\bt; \bp, \bq)} \right)_{+}^2 \left( \frac{1}{\sum_{i=1}^k e^{\theta t_i} } \right) I[y(\bt; \bp, \bq) \ge \eps] \right] \\
&\le \E_{\bt \sim \bp} \left[\left(1 - e^{\eps - y(\bt; \bp, \bq)} \right)_{+}^{2a} \right]^{1/a}
\E_{\bt \sim \bp} \left[
\left( \frac{1}{\sum_{i=1}^k e^{\theta t_i} } \right)^b I[y(\bt; \bp, \bq) \ge \eps] \right]^{1/b}
\end{align*}

Theorem \ref{thm:varbound-is-js} corresponds to the case where $a=1$, and Theorem \ref{thm:varbound-is-max} corresponds to the case where $b=1$. 
We can actually tune the parameters $a$ and $b$ to see if we can obtain any better bounds, as we have 
\begin{align}
\E_{\bt \sim \bp} \left[\left(1 - e^{\eps - y(\bt; \bp, \bq)} \right)_{+}^{2a} \right] 
\le \E[\deltamc^{2a}]
\end{align}
and 
\begin{align}
\E_{\bt \sim \bp} \left[
\left( \frac{1}{\sum_{i=1}^k e^{\theta t_i} } \right)^b I[y(\bt; \bp, \bq) \ge \eps] \right] 
\le 
b \theta e^{- \lambda \eps} \int \left[ r(\lambda, x) \right]^k e^{- b \theta x} dx
\end{align}
Simply combining the above inequalities leads to the final conclusion. 
\end{proof}

\clearpage 

\section{Proofs for Utility}
\label{appendix:proof-for-util}
\subsection{Overview}

% In this section, we discuss how to select $\Delta$ to improve the utility of the $\dpv$ implemented by Monte Carlo techniques. That is, when $\propdelta$ is not too smaller than $\delta_Y(\eps)$, we want the probability of being rejected by $\dpv$ to stay negligible. As we mentioned earlier, a trivial $\dpv$ that achieves zero FP rate can simply reject everything, which is not useful at all. 

% To begin with, we quantify the utility of a $\dpv$ through the notion of $\uparam$-relaxed false negative rate, which analogizes to the definition of $\smooth$-relaxed false positive rate. 

% \begin{definition}
% We define a $\dpv$'s $\uparam$-relaxed false negative rate at $(\eps, \propdelta)$ as 
% \begin{align}
%     &\fn_\dpv(\eps, \propdelta; \uparam) \nonumber \\
%     &:= \sup_{\M: \propdelta > \uparam \delta_Y(\eps)} \Pr_\dpv \left[ \dpv(\M, \eps, \propdelta) = \false \right] \nonumber 
% \end{align} 
% \end{definition}

While one may be able to bound the false negative rate through similar techniques that we bound the false positive rate, i.e., applying the concentration inequalities, the guarantee may be loose. 
As a formal, strict guarantee for $\fn_\dpv$ is not required, we provide a convenient heuristic of picking an appropriate $\Delta$ such that $\fn_\dpv$ is \emph{approximately} smaller than $\fp_\dpv$.

For any mechanism $\M$ such that $\propdelta > \uparam \delta_Y(\eps)$, we have %false negative rate
\begin{align}
    &\Pr_\dpv \left[ \dpv(\M, \eps, \propdelta) = \false \right] \nonumber \\
    &= \Pr\left[ \deltamc^m > \propdelta / \smooth - \Delta \right] \nonumber \\
    % &\le \Pr \left[ \deltamc^m > \frac{\uparam \delta_Y(\eps)}{\smooth} - \Delta \right] \nonumber \\
    &= \Pr\left[ \deltamc^m - \delta_Y(\eps) >
    \propdelta / \smooth - \delta_Y(\eps) - \Delta
    %\left( \left(\frac{\uparam}{\smooth}-1\right) \delta_Y(\eps) - \Delta \right) 
    \right] 
    \nonumber \\
    &\le \Pr\left[ \deltamc^m - \delta_Y(\eps) >
    \left( \left(1/\smooth-1/\uparam\right) \propdelta - \Delta \right) 
    \right] 
    \nonumber
\end{align}
and in the meantime, if $\propdelta < \smooth \delta_Y(\eps)$, we have %false positive rate 
\begin{align}
    &\Pr\left[ \deltamc^m < \propdelta / \smooth - \Delta \right] \le \Pr\left[ \deltamc^m - \delta_Y(\eps) < - \Delta \right] \nonumber 
\end{align}

Our main idea is to find $\Delta$ such that 
\begin{align}
    &\Pr\left[ \deltamc^m - \delta_Y(\eps) >
    \left( \left(1/\smooth-1/\uparam\right) \propdelta - \Delta \right) 
    \right] \nonumber \\
    &\lessapprox \Pr\left[ \deltamc^m - \delta_Y(\eps) < - \Delta \right] \nonumber
\end{align}

In this way, we know that $\fn_{\mc}(\eps, \propdelta; \uparam)$ is upper bounded by $\Theta(\propdelta/\smooth)$ for the same amount of samples we discussed in Section \ref{sec:sample-comp}. 

Observe that $\deltamc$ (or $\deltais$ with a not too large $\theta$) is typically a highly asymmetric distribution with a significant probability of being zero, and the probability density decreases monotonically for higher values. 
Under such conditions, we prove the following results:
\begin{theorem}[Informal]
When $m \ge \frac{2\varbound}{\Delta^2} \log(\smooth/\propdelta)$, we have 
\begin{align*}
\Pr[\deltamc^m - \delta_Y(\eps) < -\Delta] \gtrapprox \Pr[\deltamc^m - \delta_Y(\eps) > \frac{3}{2} \Delta]
\end{align*}

\end{theorem}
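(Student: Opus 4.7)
The plan is to apply a Central Limit Theorem approximation to $\deltamc^m$ and exploit the gap between the two thresholds $-\Delta$ and $\tfrac{3}{2}\Delta$. Since $\deltamc^{(i)} \in [0,1]$ with mean $\delta_Y(\eps)$ and variance at most $\varbound$, the normalized statistic $\sqrt{m/\varbound}(\deltamc^m - \delta_Y(\eps))$ is approximately standard normal. Under this approximation the two tail probabilities behave like $\Phi(-\Delta\sqrt{m/\varbound})$ and $\Phi(-\tfrac{3}{2}\Delta\sqrt{m/\varbound})$ respectively, and the claim reduces to the elementary fact that $\Phi$ is strictly decreasing. The sample-size hypothesis $m \ge (2\varbound/\Delta^2)\log(\smooth/\propdelta)$ rewrites as $\Delta\sqrt{m/\varbound} \ge \sqrt{2\log(\smooth/\propdelta)}$, which places us precisely in the regime where the Gaussian tail is of order $\propdelta/\smooth$ — the same order as the target FP rate from Theorem~\ref{thm:smc-samplecomp}, which is what makes a two-sided comparison at these particular thresholds the meaningful one.

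To make the argument rigorous, I would pair an upper bound on the upper tail with a matching lower bound on the lower tail. For the upper tail, Bennett's inequality (Theorem~\ref{thm:bennett}) applied to $\deltamc^{(i)} - \delta_Y(\eps) \le 1$ gives
\begin{align*}
\Pr\left[\deltamc^m - \delta_Y(\eps) > \tfrac{3}{2}\Delta\right] \le \exp\!\left(-\tfrac{m\varbound}{b^2}\,h\!\left(\tfrac{3b\Delta}{2\varbound}\right)\right),
\end{align*}
which in the near-Gaussian regime ($b\Delta/\varbound$ small) behaves like $\exp(-9m\Delta^2/(8\varbound))$. For the lower tail, I would invoke Berry--Esseen to obtain
\begin{align*}
\Pr\left[\deltamc^m - \delta_Y(\eps) < -\Delta\right] \ge \Phi(-\Delta\sqrt{m/\varbound}) - C\,\rho_3/(\sqrt{m}\,\varbound^{3/2}),
\end{align*}
where $\rho_3 = \E|\deltamc - \delta_Y(\eps)|^3$. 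Under the stated asymmetry assumptions (point mass at $0$ and monotone-decreasing density on $(0,1]$), one shows $\rho_3 \le C'\varbound$, so the Berry--Esseen remainder is dominated by the main Gaussian term whenever $m$ satisfies the hypothesis. Combining the two bounds and using $\exp(-m\Delta^2/(2\varbound)) \ge \exp(-9m\Delta^2/(8\varbound))$ with room to spare yields the comparison.

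The main obstacle is the lower-tail direction: standard concentration inequalities yield upper bounds on tails, so some anti-concentration input is needed. The Berry--Esseen route above is the cleanest, but a convenient fallback that bypasses CLT altogether is the following. Whenever $\delta_Y(\eps) \le \Delta$, the event $\{\deltamc^m = 0\}$ (i.e., all $m$ drawn PRV samples satisfy $y_i < \eps$) is contained in $\{\deltamc^m - \delta_Y(\eps) < -\Delta\}$ and has probability $(1 - \Pr[Y > \eps])^m$, which under the assumed asymmetry dominates the Bennett bound on the upper tail for $m$ in the stated range. A case split on whether $\delta_Y(\eps)$ is small relative to $\Delta$ (use this two-point argument) or comparable to $\Delta$ (use Berry--Esseen) then covers both regimes and delivers the informal inequality.
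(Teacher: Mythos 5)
Your strategy correctly isolates the hard direction (a lower bound on the left tail requires anti-concentration, not concentration), but both of the anti-concentration inputs you propose fail quantitatively in the regime where this theorem is used. The two tail probabilities being compared are each of order $\propdelta/\smooth$, i.e.\ of order $\delta_Y(\eps)$, which is $10^{-6}$ or far smaller. The Berry--Esseen remainder is $C\rho_3/(\sigma^3\sqrt{m})$, and because $\deltamc$ is essentially an atom at $0$ plus a sliver of mass elsewhere, the normalized skewness $\rho_3/\sigma^3$ is of order $\varbound^{-1/2}$; plugging in $m = 2\varbound\log(\smooth/\propdelta)/\Delta^2$ leaves a remainder of order $\Delta/(\varbound\sqrt{\log(\smooth/\propdelta)})$, which in the paper's numerical regimes is a constant-scale quantity (order $10^{-1}$ or $10^{-2}$), not something dominated by a $10^{-6}$-scale Gaussian tail. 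This is precisely why CLT-based privacy accountants are loose here, and no generic CLT-type additive bound can be sharp enough. Your fallback does not rescue this: the heuristic sets $\Delta = 0.4(1/\smooth - 1/\uparam)\propdelta$, a small fraction of $\propdelta \approx \delta_Y(\eps)$, so the case $\delta_Y(\eps)\le\Delta$ essentially never occurs; and even where it did, $\Pr[\deltamc^m = 0] \le e^{-m\,\Pr[Y>\eps]} \le e^{-m\,\delta_Y(\eps)}$ is astronomically below $\propdelta/\smooth$ for the required $m$, so the two-point lower bound is vacuous.

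The paper's proof avoids comparing either tail to a Gaussian and instead compares the two tails \emph{directly to each other}. Under the assumption $\Pr[\deltamc = 0]\ge 1/2$, it builds a symmetrized variable $\deltamctil$ by reflecting the mass of $\deltamc$ on $(0,\delta)$ and on $[2\delta,\infty)$ about the mean $\delta$ and absorbing the surplus of the atom at $0$ into a point mass at $\delta$; an induction on $m$ shows $\Pr[\deltamctil^m - \delta < -t] \ge \Pr[\deltamctil^m - \delta > t]$ for every $t$. It then couples $\deltamc^m$ to $\deltamctil^m$ so that the two averages differ by more than $c = \Delta/4$ only with negligible probability (controlled by Bennett), and the $2c = \Delta/2$ slack absorbed in the coupling is exactly where the threshold $\tfrac{3}{2}\Delta$ on the right-hand side comes from. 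To repair your argument you would need to replace Berry--Esseen with a distribution-specific reflection or coupling step of this kind; the Bennett upper bound on the right tail in your first display is fine and matches what the paper uses.
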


The proof is deferred to Appendix \ref{appendix:proof-utility}. 
Therefore, by setting $\Delta = \heuristic$, one can ensure that $\fn_{\mc}(\eps, \propdelta; \uparam)$ is also (approximately) upper bounded by $\Theta( \propdelta/\smooth )$. 
We empirically verify the effectiveness of such a heuristic by estimating the actual false negative rate. 
%the heuristic $\Delta = \frac{1}{2} \left(1/\smooth-1/\uparam\right) \propdelta$. 
As we can see from Figure \ref{fig:secondmmt} (b), the dashed curve is much higher than the two solid curves, which means that the false negative rate is a very conservative bound. 

\begin{remark}[For the halting case]
In this section, we develop techniques for ensuring the false negative rate (i.e., the rejection probability) is around $O(\delta)$ when the proposed privacy parameter $\propdelta$ is close to the true $\delta$. In the experiment, we use the state-of-the-art FFT accountant to produce $\propdelta$, which is very accurate as we can see from Figure 1. Hence, the rejection probability in the experiment is around $O(\delta)$, which means the probability of rejection is close to the probability of catastrophic failure for privacy.

If one is still concerned that the rejection probability is too large, we can further reduce the probability as follows: we run two instances of EVR paradigm simultaneously; if both of the instances are passed, we randomly pick one and release the output. If either one of them is passed, we release the passed instance. It only fails when both of the instances fail. By running two instances of the EVR paradigm in parallel, the false positive rate (i.e., the final $\delta$) will be only doubled, but the probability of rejection will be squared. 

\newcommand{\deltaupp}{\delta^{\mathrm{upp}}}

\newcommand{\evr}{\texttt{EVR}}

We can also introduce a variant of our EVR paradigm that better deals with the failure case: whenever we face the ``rejection'', we run a different mechanism $\M'$ that is guaranteed to be $(\eps, \propdelta)$-DP (e.g., by adjusting the subsampling rate and/or noise multiplier in DP-SGD). Moreover, we use FFT accountant to obtain a strict privacy guarantee upper bound $(\eps, \deltaupp)$ for the original mechanism $\M$, where $\propdelta < \deltaupp$. 
We use $p_{\fn}$ and $p_{\fp}$ to denote the false negative and false positive rate of the privacy verifier used in EVR paradigm. If the original mechanism $\M$ is indeed $(\eps, \propdelta)$-DP, then for any subset $S$, for this augmented EVR paradigm $\Maug$ we have 
\begin{align*}
\Pr[\Maug(D) \in S] &= p_{\fn} \Pr[\M(D) \in S] + (1-p_{\fn}) \Pr[\M'(D) \in S] \\
& \le p_{\fn} (e^\eps \Pr[\M(D') \in S] + \propdelta ) + (1-p_{\fn}) (e^\eps \Pr[\M'(D') \in S] + \propdelta) \\
& \le e^\eps \Pr[\Maug(D') \in S] + \propdelta
\end{align*}

If the original mechanism $\M$ is not $(\eps, \propdelta)$-DP, then we have 
\begin{align*}
\Pr[\Maug(D) \in S] &= p_{\fp}\Pr[\M(D) \in S] + (1-p_{\fp})\Pr[\M'(D) \in S]  \\
& \le p_{\fp}(e^\eps \Pr[\M(D') \in S] + \deltaupp) + (1-p_{\fp})(e^\eps \Pr[\M'(D') \in S] + \propdelta) \\
& \le e^\eps \Pr[\Maug(D') \in S] + \propdelta + p_{\fp}(\deltaupp-\propdelta)
\end{align*}
Hence, this augmented EVR algorithm will be $(\eps, \propdelta + p_{\fp} (\deltaupp-\propdelta))$, and if $p_{\fp}$ is around $\propdelta$, then this extra factor $p_{\fp} (\deltaupp-\propdelta)$ will be very small. We can also adjust the privacy guarantee for $\M'$ such that the privacy guarantees for the two cases are the same, which can further optimize the final privacy cost. 
\end{remark}

% for the false negative guarantee. 

\newpage

\subsection{Technical Details}
\label{appendix:proof-utility}

In this section, we provide theoretical justification for the heuristic of setting $\Delta = \heuristic$. 

For notation convenience, throughout this section, we talk about $\deltamc$, but the same argument also applies to $\deltais$ with a not too large $\theta$, unless otherwise specified. 
We use $\deltamc(x)$ to denote the density of $\deltamc$ at $x$. Note that $\deltamc(0) = \infty$. 

% We make the following assumption about the distribution of $\deltamc$. 
% \begin{assumption}
% \label{assumption:noninc}
% $\deltamc(x)$ is nonincreasing on $(0, \infty)$. 
% \end{assumption}

We make the following assumption about the distribution of $\deltamc$. 
\begin{assumption}
\label{assumption:zero}
$\Pr[\deltamc = 0] \ge 1/2$.
\end{assumption}

While intuitive, this assumption is hard to analyze for the case of Subsampled Gaussian mechanism. Therefore, we instead provide a condition for which the assumption holds for Pure Gaussian mechanism. 

\begin{lemma}
    Fix $\eps, \sigma$. When $k / (2\sigma^2) \le \eps$, Assumption \ref{assumption:zero} holds. 
\end{lemma}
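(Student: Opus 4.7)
The plan is to observe that $\deltamc = (1 - e^{\eps - Y})_+ = 0$ if and only if $Y \le \eps$, so Assumption~\ref{assumption:zero} reduces to the tail statement $\Pr[Y \le \eps] \ge 1/2$, where $Y = \sum_{i=1}^k Y_i$ is the composed PRV for $k$ Gaussian mechanisms with noise scale $\sigma$. The goal is then purely to compute (or lower bound) the median of $Y$ and compare it to $\eps$.

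First I would identify the distribution of $Y$ explicitly. For a pure Gaussian mechanism with sensitivity $1$ and noise scale $\sigma$, a tightly dominating pair is $(P, Q) = (\N(1, \sigma^2), \N(0, \sigma^2))$, so the per-step PRV is $Y_i = \log(P(t)/Q(t))$ for $t \sim \N(1,\sigma^2)$, which after straightforward algebra is $Y_i = (2t-1)/(2\sigma^2)$. Pushing this through the affine transformation, $Y_i \sim \N\!\left(1/(2\sigma^2),\, 1/\sigma^2\right)$, and by independence across the $k$ steps, $Y \sim \N\!\left(k/(2\sigma^2),\, k/\sigma^2\right)$.

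Next, since a Gaussian has its median equal to its mean, I would note that $\Pr[Y \le \mu_Y] = 1/2$ where $\mu_Y = k/(2\sigma^2)$. Under the hypothesis $k/(2\sigma^2) \le \eps$, monotonicity of the CDF gives
\begin{equation}
\Pr[\deltamc = 0] = \Pr[Y \le \eps] \ge \Pr[Y \le \mu_Y] = 1/2, \nonumber
\end{equation}
which is exactly Assumption~\ref{assumption:zero}.

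There is no real obstacle here: the statement is essentially the observation that the condition $k/(2\sigma^2) \le \eps$ is precisely the condition that $\eps$ sits at or above the median of the Gaussian PRV. The only subtlety worth remarking on is why we must restrict to the pure Gaussian case: for the Subsampled Gaussian, the PRV has the form $\log(1-q+q\exp((2t-1)/(2\sigma^2)))$ with $t$ drawn from a mixture, which is no longer symmetric and whose median does not admit the same clean closed form, so the identical reduction to "mean $\le \eps$" breaks down and a different argument (or direct estimate of the median) would be required.
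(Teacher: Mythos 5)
Your proof is correct and follows essentially the same route as the paper: identify the composed PRV as $\N\bigl(k/(2\sigma^2),\, k/\sigma^2\bigr)$, note that $\deltamc = 0$ exactly when $Y \le \eps$, and conclude from the symmetry of the Gaussian (mean equals median) that this probability is at least $1/2$ under the stated condition. The only difference is that you spell out the derivation of the per-step PRV distribution and the median argument, which the paper leaves implicit.
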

\begin{proof}
The PRV for the composition of $k$ Gaussian mechanism is $\N\left(\frac{k}{2\sigma^2}, \frac{k}{\sigma^2}\right)$. 
\begin{align*}
    \Pr[ (1-e^{\eps - Y})_+ = 0 ] = \Pr[ Y \le \eps ]
\end{align*}
which is clearly $\ge 1/2$ when $k / (2\sigma^2) \le \eps$. 
\end{proof}

We also empirically verify this assumption for Subsampled Gaussian mechanism as in Figure \ref{fig:probzero}. 
As we can see, the $\theta$ selected by our heuristic (the red star) has  $\Pr[ \deltais = 0 ] \approx 1/2$ which matches our principle of selecting $\theta$. 
The $\theta$ minimizes the analytical bound (\textbf{which we are going to use in practice}) achieves $\Pr[ \deltais = 0 ] \approx 0.88 \gg 0.5$. 
 
\begin{figure}[h]
    \centering
    \includegraphics[width=0.5\columnwidth]{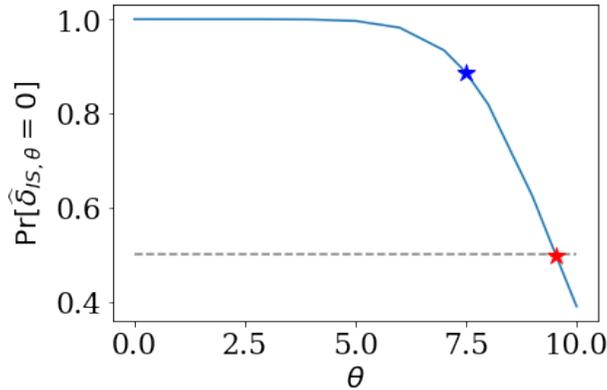}
    \caption{
    We empirically estimate $\Pr[ \deltais = 0 ]$ for the case of Poisson Subampled Gaussian mechanism. We set $q=10^{-3}, \sigma=0.6, \eps=1.5, k=100$. $\delta_Y(\eps) \approx 7.7 \times 10^{-6}$ in this case. 
    The red star indicates the second moment for the value of $\theta$ selected by our heuristic in Definition \ref{def:is-estimator}. 
    The blue star indicates the $\theta$ that minimizes the analytical bound. 
    }
    \label{fig:probzero}
\end{figure}

% \begin{lemma}
%     Fix $\eps, \sigma$. When $k / (2\sigma^2) \le \eps$, Assumption \ref{assumption:noninc} holds. 
% \end{lemma}
% \begin{proof}
% The PRV for the composition of $k$ Gaussian mechanism is $\N\left(\frac{k}{2\sigma^2}, \frac{k}{\sigma^2}\right)$. 
% \begin{align*}
%     \Pr[ (1-e^{\eps - Y})_+ = x ] = \Pr[ Y = \eps - \log(1-x) ]
% \end{align*}
% for $x \in (0, 1)$, which is clearly non-increasing when $k / (2\sigma^2) \le \eps$. 
% \end{proof}

% We also empirically estimate the density function for $\deltamc$ and $\deltais$ to further validate this assumption, as shown in Figure \ref{fig:empirical-density}. 

% \begin{figure}[h]
%     \centering
%     \includegraphics[width=0.75\columnwidth]{images/empirical_density.pdf}
%     \caption{
%     Empirical density curve for $\deltamc$ and $\deltais$ with $\theta$ selected by the heuristic in Definition \ref{def:is-estimator}. 
%     }
%     \label{fig:empirical-density}
% \end{figure}

\newcommand{\pa}{p_0}
\newcommand{\pb}{p_{(0, 1)}}
\newcommand{\pc}{p_{(1, 2)}}
\newcommand{\pd}{p_{(2, \infty)}}

Our goal is to show that $\Pr[\deltamc^m - \delta_Y(\eps) < -\Delta^*] \gtrapprox \Pr[\deltamc^m - \delta_Y(\eps) > \Delta^*]$ for large $m$. 
For notation simplicity, we denote $\delta := \E[\deltamc] = \delta_Y(\eps)$ in the remaining of the section. 
We also denote 
\begin{align*}
    \pa &:= \Pr[ \deltamc = 0 ] \\
    \pb &:= \Pr[ 0 < \deltamc < \delta] \\
    \pc &:= \Pr[ \delta \le \deltamc \le 2\delta] \\
    \pd &:= \Pr[ \deltamc \ge 2\delta ]
\end{align*}
Note that $\pa + \pb + \pc + \pd = 1$. We also write $\deltamc |_{(a, b)}$ and $\deltamc |_{[a, b)}$ to indicate the truncated distribution of $\deltamc$ on $(a, b)$ and $[a, b)$, respectively. 

% \begin{theorem}
% For any $\Delta^* \in (-\delta_Y(\eps), \delta_Y(\eps))$, we have $\Pr[\deltamc^m - \delta_Y(\eps) < -\Delta^*] \gtrapprox \Pr[\deltamc^m - \delta_Y(\eps) > \Delta^*]$. 
% \end{theorem}

% \begin{proof}
% For notation simplicity, we denote $\delta := \E[\deltamc] = \delta_Y(\eps)$ in the following proof. 
% Denote $p_0 = \Pr[ \deltamc = 0 ]$, and $p_2 = \Pr[ \deltamc \ge 2\delta ]$. 

We first construct an alternative random variable $\deltamctil$ with the following distribution:

\begin{equation}
\deltamctil = 
\left\{
    \begin{array}{lr}
        2\delta - x~~\text{for}~~ x \sim \deltamc|_{\ge 2\delta} & \text{w.p.}~~ \pd \\
        x~~\text{for}~~x \sim \deltamc|_{(0, \delta)} & \text{w.p.}~~ \pb \\
        2\delta - x ~~\text{for}~~x \sim \deltamc|_{(0, \delta)} & \text{w.p.}~~ \pb \\
        x~~\text{for}~~ x \sim \deltamc|_{\ge 2\delta} & \text{w.p.}~~ \pd \\
        \delta & \text{w.p.}~~ 1 - 2( \pb + \pd )
    \end{array}
\right.
\end{equation}

This is a valid probability due to Assumption \ref{assumption:zero}, as $\pb + \pd \le 1 - \pa \le 1/2$. The distribution of $\deltamctil$ is illustrated in Figure \ref{fig:density}. 
Note that $\deltamctil$ is a symmetric distribution with $\E[\deltamctil] = \delta$. 

\begin{figure}[h]
    \centering
    \includegraphics[width=0.7\columnwidth]{images/deltatil_illustrate.pdf}
    \caption{
    Illustration for the density function of $\deltamctil$ (black curve indicates the density curve for $\deltamc$, and red curve indicates the density curve for $\deltais$). 
    }
    \label{fig:density}
\end{figure}

% We first construct an alternative random variable $\deltamctil$ with the following distribution:

% \begin{equation}
% \deltamctil = 
% \left\{
%     \begin{array}{lr}
%         x~~\text{for}~~x \sim \deltamc|_{> 0} & \text{w.p.}~~1 - p_0 \\
%         \delta & \text{w.p.}~~ p_0 - p_2 \\
%         2\delta - x~~\text{for}~~ x \sim \deltamc|_{\ge 2\delta} & \text{w.p.}~~ p_2
%     \end{array}
% \right.
% \end{equation}

Similar to the notation of $\deltamc^m$, we write $\deltamctil^m := \frac{1}{m} \sum_{i=1}^{m} \deltamctil^{(i)}$. 
We show an asymmetry result for $\deltamctil^m$ in terms of $\delta$. 

\newpage

\begin{lemma}
For any $\Delta^* \in \R$, we have 
\begin{align}
\Pr[\deltamctil^m - \delta < -\Delta^*] 
\ge \Pr[\deltamctil^m - \delta > \Delta^*] 
\end{align}
\label{lemma:asymmetry}
\end{lemma}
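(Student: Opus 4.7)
The key observation is that by construction $\deltamctil$ is \emph{symmetric about $\delta$}, i.e., $\deltamctil \stackrel{d}{=} 2\delta - \deltamctil$. My plan is to first verify this case-by-case from the piecewise definition, and then lift the symmetry to the i.i.d.\ sample mean $\deltamctil^m = \frac{1}{m}\sum_{i=1}^m \deltamctil^{(i)}$. The inequality in the lemma will in fact come out as an equality.

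The first step is to pair up the five branches in the definition of $\deltamctil$ under the reflection $y \mapsto 2\delta - y$. The branch supported on $(-\infty,0]$ arising from $2\delta - x$ with $x \sim \deltamc|_{\ge 2\delta}$ (mass $\pd$) reflects onto the branch supported on $[2\delta,\infty)$ arising from $x \sim \deltamc|_{\ge 2\delta}$ (also mass $\pd$). Similarly the two mass-$\pb$ pieces on $(0,\delta)$ and $(\delta,2\delta)$ are images of one another, and the atom at $\delta$ with mass $1 - 2(\pb+\pd)$ is fixed by the reflection. Together this gives $\deltamctil \stackrel{d}{=} 2\delta - \deltamctil$. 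Note the construction is only well-defined because Assumption~\ref{assumption:zero} ensures $\pb + \pd \le 1-\pa \le \tfrac{1}{2}$, so the residual atom has nonnegative mass.

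Given symmetry of a single draw, the second step is immediate: if $\deltamctil^{(1)},\ldots,\deltamctil^{(m)}$ are i.i.d.\ copies, then by applying the reflection coordinate-wise,
\begin{equation}
(\deltamctil^{(1)},\ldots,\deltamctil^{(m)}) \stackrel{d}{=} (2\delta - \deltamctil^{(1)},\ldots,2\delta - \deltamctil^{(m)}),
\end{equation}
and averaging both sides yields $\deltamctil^m \stackrel{d}{=} 2\delta - \deltamctil^m$. Equivalently, the distribution of $\deltamctil^m - \delta$ is symmetric about $0$. Consequently, for every $\Delta^* \in \mathbb{R}$,
\begin{equation}
\Pr[\deltamctil^m - \delta < -\Delta^*] = \Pr[-(\deltamctil^m - \delta) < -\Delta^*] = \Pr[\deltamctil^m - \delta > \Delta^*],
\end{equation}
which is strictly stronger than the claim.

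I do not anticipate any real obstacle in this lemma; the only technical content is the case-check in the first step that $\deltamctil$ was engineered to be reflection-symmetric about $\delta$. The genuine work presumably happens in the \emph{next} lemma, where one has to couple $\deltamctil^m$ back to $\deltamc^m$ and use the fact that the original $\deltamc$ has more mass above $\delta$ than symmetry would predict (via Assumption~\ref{assumption:zero} and monotone-density arguments), so that $\deltamc^m$ inherits an asymmetry in the direction $\Pr[\deltamc^m - \delta < -\Delta^*] \gtrapprox \Pr[\deltamc^m - \delta > \tfrac{3}{2}\Delta^*]$.
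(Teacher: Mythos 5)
Your proof is correct, and it actually establishes the stronger statement that the two probabilities are \emph{equal}. The case-check is right: the two mass-$\pd$ branches are reflections of one another under $y \mapsto 2\delta - y$, as are the two mass-$\pb$ branches, and the atom at $\delta$ is fixed, so $\deltamctil \stackrel{d}{=} 2\delta - \deltamctil$; applying the reflection coordinate-wise to the i.i.d.\ vector and averaging gives $\deltamctil^m \stackrel{d}{=} 2\delta - \deltamctil^m$, and the conclusion follows since $\{\deltamctil^m - \delta < -\Delta^*\}$ is exactly the event $\{2\delta - \deltamctil^m - \delta > \Delta^*\}$.

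The paper takes a slightly different route to the same fact: it proves the inequality by induction on $m$, peeling off the last sample, conditioning on it, applying the induction hypothesis to the partial sum, and then applying the single-draw property to the last coordinate (written with densities $\Pr[\cdot = x]\,\dd x$ inside the integrals). The mathematical content is identical --- both arguments ultimately rest on the observation, which the paper also records explicitly, that $\deltamctil$ is symmetric about $\delta$; the paper's induction is just the "convolution of symmetric distributions is symmetric" fact unrolled step by step, with equalities relaxed to inequalities along the way. Your version is shorter, avoids the density-notation abuse, and makes transparent that the bound is tight; the paper's inductive form has the minor advantage that it would still go through if the single-draw distribution satisfied only the one-sided domination $\Pr[\deltamctil - \delta < -t] \ge \Pr[\deltamctil - \delta > t]$ rather than full symmetry, but that generality is not used here. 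Your closing remark is also accurate: the substantive work is in the subsequent coupling of $\deltamc^m$ to $\deltamctil^m$, not in this lemma.
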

\begin{proof}
Note that the above argument holds trivially when $m = 0, 1$. For $m \ge 2$, we use the induction argument. 
Suppose we have 

\begin{align}
    \Pr[\deltamctil^{m-1} - \delta < -\Delta^*] \ge \Pr[\deltamctil^{m-1} - \delta > \Delta^*] 
\end{align}
for any $\Delta^* \in \R$. That is, 
\begin{align}
    \Pr \left[\sum_{i=1}^{m-1} \deltamctil^{(i)} - (m-1) \delta < - (m-1)\Delta^* \right] \ge \Pr \left[\sum_{i=1}^{m-1} \deltamctil^{(i)} - (m-1) \delta > (m-1)\Delta^*\right] 
\end{align}
for any $\Delta^* \in \R$. 

\begin{align*}
    \Pr[\deltamctil^m - \delta < -\Delta^*] 
    &= \Pr \left[\sum_{i=1}^{m} \deltamctil^{(i)} - m \delta < -m\Delta^* \right] \\
    &= \Pr \left[\sum_{i=1}^{m-1} \deltamctil^{(i)} - (m-1) \delta + \deltamctil^{(m)} - \delta < -m\Delta^* \right] \\
    &= \int \Pr \left[\sum_{i=1}^{m-1} \deltamctil^{(i)} - (m-1) \delta < -m\Delta^* - x \right] \Pr[ \deltamctil^{(m)} - \delta = x ] \dd x \\
    &\ge \int \Pr \left[\sum_{i=1}^{m-1} \deltamctil^{(i)} - (m-1) \delta > m\Delta^* + x \right] \Pr[ \deltamctil^{(m)} - \delta = x ] \dd x \\
    &= \Pr \left[\sum_{i=1}^{m-1} \deltamctil^{(i)} - (m-1) \delta - (\deltamctil^{(m)} - \delta) > m\Delta^* \right] \\
    &= \int \Pr \left[x - (\deltamctil^{(m)} - \delta) > m\Delta^* \right] \Pr \left[ \sum_{i=1}^{m-1} \deltamctil^{(i)} - (m-1) \delta = x \right] \dd x \\
    &= \int \Pr \left[ \deltamctil^{(m)} - \delta < -m\Delta^* + x \right] \Pr \left[ \sum_{i=1}^{m-1} \deltamctil^{(i)} - (m-1) \delta = x \right] \dd x \\
    &\ge \int \Pr \left[ \deltamctil^{(m)} - \delta > m\Delta^* - x \right] \Pr \left[ \sum_{i=1}^{m-1} \deltamctil^{(i)} - (m-1) \delta = x \right] \dd x \\
    &= \left[\sum_{i=1}^{m} \deltamctil^{(i)} - m \delta > m\Delta^* \right] \\
    &= \left[ \deltamctil^{m} - \delta > \Delta^* \right] 
\end{align*}
where the two inequalities are due to the induction assumption. 
\end{proof}

Now we come back and analyze $\deltamc^m$ by using Lemma \ref{lemma:asymmetry}. 

\begin{align*}
    &\Pr \left[ \deltamc^m - \delta \le -\Delta^* \right] \\
    &= \Pr \left[ \deltamc^m - \deltamctil^m + \deltamctil^m - \delta \le -\Delta^* \right] \\
    &\ge \Pr \left[ \deltamc^m - \deltamctil^m \le c \right] 
    \Pr \left[ \deltamc^m - \deltamctil^m + \deltamctil^m - \delta \le -\Delta^* | \deltamc^m - \deltamctil^m \le c \right] \\
    &\ge \Pr \left[ \deltamc^m - \deltamctil^m \le c \right] 
    \Pr \left[ \deltamctil^m - \delta \le -\Delta^* - c \right] \\
    &\ge \Pr \left[ \deltamc^m - \deltamctil^m \le c \right] 
    \Pr \left[ \deltamctil^m - \delta \ge \Delta^* + c \right]
\end{align*}

Similarly, 
\begin{align*}
&\Pr \left[ \deltamctil^m - \delta \ge \Delta^* + c \right] \\
&= \Pr \left[ \deltamctil^m - \deltamc^m + \deltamc^m - \delta \ge \Delta^* + c \right] \\
&\ge \Pr \left[ \deltamctil^m - \deltamc^m \ge -c \right] 
\Pr \left[ \deltamctil^m - \deltamc^m + \deltamc^m - \delta \ge \Delta^* + c | \deltamctil^m - \deltamc^m \ge -c \right] \\
&\ge \Pr \left[ \deltamctil^m - \deltamc^m \ge -c \right] 
\Pr \left[ \deltamc^m - \delta \ge \Delta^* + 2c \right]
\end{align*}
Overall, we have 
\begin{align}
    \Pr \left[ \deltamc^m - \delta \le -\Delta^* \right] 
    \ge \left(\Pr \left[ \deltamc^m - \deltamctil^m \le c \right] \right)^2 \Pr \left[ \deltamc^m - \delta \ge \Delta^* + 2c \right]
\end{align}
for any $\Delta^* \in \R$. 
Note that the above argument does not require $\deltamc^m$ and $\deltamctil^m$ to be correlated. 
To maximize $\Pr \left[ \deltamc^m - \deltamctil^m \le c \right]$, we can sample $\deltamctil^m$ for a given $\deltamc^m$ as follows: for each $\deltamc^{(i)}$,
\begin{enumerate}
    \item If $\deltamc^{(i)} > 0$, then let $\deltamctil^{(i)} = \deltamc^{(i)}$. 
    \item If $\deltamc^{(i)} = 0$, then with probability $\pd/\pa$, output $2\delta - x$ for $x \sim \deltamc|_{(2, \infty)}$; with probability $\pb / \pa$ output $2\delta - x$ for $x \sim \deltamc|_{(0, 1)}$; with probability $1 - (\pb+\pd) / \pa$ output $\delta$. 
\end{enumerate}

\newcommand{\deltadiff}{\bm{\delta}_{\mathrm{diff}}}

Denote the random variable $\deltadiff := \deltamc - \deltamctil$. 
It is not hard to see that $\E[\deltadiff^2] \le \E[\deltamc^2] + \delta^2 \le 2 \E[\deltamc^2]$. 
By Bennett's inequality, with $m \ge \frac{2\nu}{\Delta^2} \log(\tau / \propdelta)$, we have 
\begin{align*}
    \Pr \left[ \deltamc^m - \deltamctil^m \le c \right]
    &\gtrapprox 1 - \frac{\tau}{\propdelta} \exp \left( - \frac{c^2}{\Delta^2} \right) \\
    &= 1 - O \left( \frac{\tau}{\propdelta} \right)
\end{align*}
if we set $c = O(\Delta)$. 
Let $c = \frac{1}{4} \Delta$, then by setting $-\Delta^* = -\Delta$ and $\Delta^* + 2c = \left(1/\smooth-1/\uparam\right) \propdelta - \Delta$, we have $\Delta = \heuristic$.

To summarize, when we set $\Delta$ with the heuristic, we have 
\begin{align*}
    \Pr \left[ \deltamc^m - \delta \ge \Delta^* + 2c \right]
    &\lessapprox 
    \frac{ \tau / \propdelta }{ \left( 1 - O \left( \frac{\tau}{\propdelta} \right) \right)^2 } \\
    &\approx \tau / \propdelta
\end{align*}

\clearpage

\section{Pseudocode for Privacy Accounting for DP-SGD with EVR Paradigm}
\label{appendix:pseudocode}

In this section, we outline the steps of privacy accounting for DP-SGD with our EVR paradigm. 
Recall from Lemma \ref{lemma:dominating-pair-sgm} that for subsampled Gaussian mechanism with sensitivity $C$, noise variance $C^2 \sigma^2$, and subsampling rate $q$, one dominating pair $(P, Q)$ is $Q := \N(0, \sigma^2)$ and $P := (1-q)\N(0, \sigma^2) + q\N(1, \sigma^2)$. Hence, for DP-SGD with $k$ iterations, the dominating pair is the product distribution $\bp := P_1 \times \ldots \times P_k$ and $\bq := Q_1 \times \ldots \times Q_k$ where each $P_i$ and $Q_i$ follow the same distribution as $P$ and $Q$.\footnote{It can also be extended to the heterogeneous case easily.}

\begin{algorithm}[H]
\caption{Privacy Accounting for DP-SGD with EVR Paradigm}
\label{alg:fullpseudocode}
\begin{algorithmic}[1]

\STATE \textbf{Privacy Parameters for DP-SGD:} 
$k$ -- number of mechanisms to be composed, $\sigma$ -- noise multiplier, $q$ -- subsampling rate, $C$ -- clipping ratio.

\STATE

\STATE \texttt{// Step 1: Estimate Privacy Parameter.}
\STATE Obtain a privacy parameter estimate $(\eps, \propdelta)$ from a DP accountant (e.g., FFT/GDP/MC accountant). Set the smoothing factor $\tau = 0.99$ (one can also adjust the value of $\tau$ according to the privacy guarantee and runtime requirements). 

\STATE

\STATE \texttt{// Step 2: Verify Privacy Parameter.}

\STATE

\STATE \texttt{// Step 2.1: Derive the required amount of samples for privacy verification.}

\STATE Compute the number of required samples $m$ according to Theorem \ref{thm:smc-samplecomp}, where the second moment upper bound $\nu$ is computed according to Corollary \ref{corollary:varbound-mc} for Simple MC estimator, or Theorem \ref{thm:varbound-is-holder} for Importance Sampling estimator. 
Set $\uparam = (1+\smooth)/2$ and set $\Delta$ according to Theorem \ref{thm:Delta-informal} for utility guarantee. 

\STATE

\STATE \texttt{// Step 2.2: Estimate privacy parameter with MC estimator.}

\STATE If use Simple MC estimator: 
Obtain i.i.d. samples $\{\bt_i\}_{i=1}^m$ from $\bp$. 
Compute simple MC estimate $\widehat{\delta} = \frac{1}{m} \sum_{i=1}^m \left(1 - e^{\eps-y_i} \right)_{+}$ with PRV samples $y_i = \log \left(\frac{\bp(\bt_i)}{\bq(\bt_i)}\right), i = 1 \ldots m$.

\STATE If use Importance Sampling estimator: compute MC estimate according to Definition \ref{def:is-estimator}. 

\STATE

\STATE \texttt{// Step 3: Release according to verification result.}

\STATE  \textbf{if} $\widehat{\delta} < \frac{\propdelta}{\smooth} - \Delta$  \textbf{then} release the result of DP-SGD. \\
\STATE  \textbf{else} terminate program.
\end{algorithmic}
\end{algorithm}

\clearpage

\section{Experiment Settings \& Additional Results}
\label{appendix:eval}

% \begin{algorithm}[t]
% \SetAlgoLined
% \SetKwInOut{Input}{input}
% \SetKwInOut{Output}{output}
% \Input{ Composed mechanism $\M = \M_1 \circ \ldots \circ \M_k$ and $\eps$. }
% \setlength{\textfloatsep}{0pt}

% Estimate $\propdelta \leftarrow \deltaism^{\mest}(\eps)$ with an appropriate amount of samples $\mest$. 

% Sample $\widehat{\delta} \leftarrow \deltaism^{\mbound}$ where the amount of samples $\mbound$ yields $\fp_\dpv(\eps, \propdelta; \smooth) \le \propdelta/\smooth$. 

% \uIf{ $\widehat{\delta} < \propdelta/\smooth - \Delta$ }{
% Output $\M_1 \circ \cdots \circ \M_k(D)$.
% }
% \uElse{
% Print $\reject$.
% }
% \caption{ Estimating Privacy with MC Accountant and Augmenting Mechanism with MC Verifier. }
% \label{alg:mcaccountant}
% \end{algorithm}

% \begin{algorithm}[t]
% \caption{ Estimating Privacy with MC Accountant and Augmenting Mechanism with MC Verifier. }
% \label{alg:mcaccountant}
% \begin{algorithmic}[1]
% \STATE \textbf{Input} Composed mechanism $\M = \M_1 \circ \ldots \circ \M_k$ and $\eps$. \\
% \STATE \textbf{Estimate} $\propdelta \leftarrow \deltaism^{\mest}(\eps)$ with an appropriate amount of samples $\mest$. \\
% \STATE \textbf{Sample} $\widehat{\delta} \leftarrow \deltaism^{\mbound}$ where the amount of samples $\mbound$ yields $\fp_\dpv(\eps, \propdelta; \smooth) \le \propdelta/\smooth$. 
% \STATE \textbf{if} $\widehat{\delta} < \propdelta/\smooth - \Delta$
% \textbf{then} Output $\M_1 \circ \cdots \circ \M_k(D)$.
% \STATE \textbf{else }Print $\reject$.
% \end{algorithmic}
% \end{algorithm}

\subsection{GPU Acceleration for MC Sampling}

MC-based techniques are well-suited for parallel computing and GPU acceleration due to their nature of repeated sampling. 
One can easily utilize PyTorch's CUDA functionality, e.g., 
\begin{align}
\texttt{torch.randn(size=(k,m)).cuda()*sigma+mu}\normalsize \nonumber
\end{align}
, to significantly boost the computational efficiency. 
Figure \ref{fig:runtime} (a) shows that when using a NVIDIA A100-SXM4-80GB GPU, the execution time of sampling Gaussian mixture ($(1-q) \N(0, \sigma^2) + q\N(1, \sigma^2)$) can be improved by $10^3$ times compared with CPU-only scenario. 
Figure \ref{fig:runtime} (b) shows the predicted runtime for different target false positive rates for $k=1000$. We vary $\sigma$ and set the target false positive rate as the smallest $s \times 10^{-r}$ that is greater than $\delta_Y(\eps)$, where $s \in \{1, 5\}$ and $r$ is positive integer. We set $\propdelta = 0.8 \delta_Y(\eps)$, and $\smooth, \uparam, \Delta$ are set as the heuristics introduced in the previous sections. The runtime is predicted by the number of required samples for the given false positive rate. As we can see, even when we target at $10^{-10}$ false positive rate (which means that $\Delta \approx 10^{-10}$), the clock time is still acceptable (around 3 hours). 

\begin{figure}[h]
    \centering
    \includegraphics[width=0.75\columnwidth]{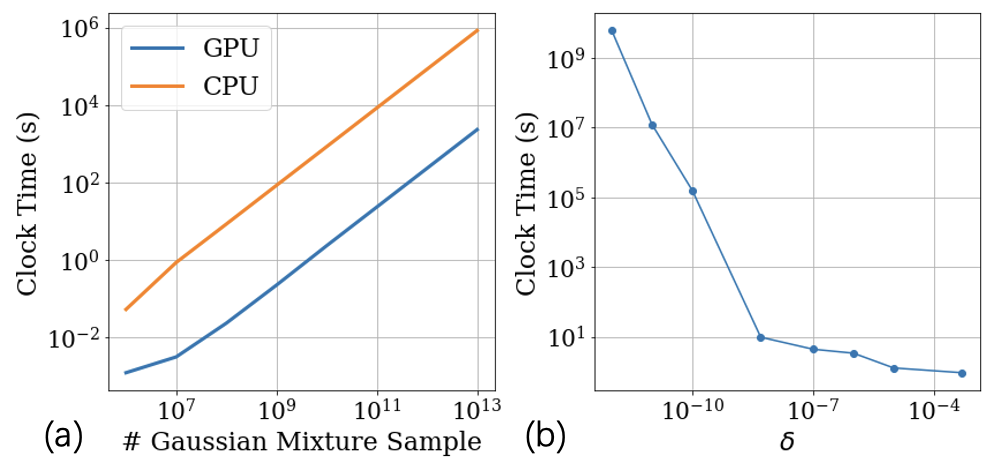}
    \caption{
    The execution time of sampling Gaussian mixture ($(1-q) \N(0, \sigma^2) + q\N(1, \sigma^2)$) when only using CPU and when using a NVIDIA A100-SXM4-80GB GPU. 
    %(b) shows the predicted runtime for different target false positive rate when composing subsampled Gaussian mechanisms with $k=1000, q=10^{-3}$. We vary $\sigma$ and set the target false positive rate as the smallest $s \times 10^{-r}$ that is greater than $\delta_Y(\eps)$, where $s \in \{1, 5\}$ and $r$ is positive integer. We set $\propdelta = 0.8 \delta_Y(\eps)$, and $\smooth, \uparam, \Delta$ are set as the heuristics introduced in the previous sections. The details of hyperparameter settings are deferred to Appendix \ref{appendix:eval}. 
    }
    \label{fig:runtime}
\end{figure}

\newpage

\subsection{Experiment for Evaluating EVR Paradigm}

\subsubsection{Settings}

For Figure \ref{fig:unmeaningful-upp} \& Figure \ref{fig:evr-vs-bound}, the FFT-based method has hyperparameter being set as $\epserror=10^{-3}, \deltaerror=10^{-10}$. For the GDP-Edgeworth accountant, we use the second-order expansion and uniform bound, following \cite{wang2022analytical}. 

For Figure \ref{fig:tradeoff-gm} (as well as Figure \ref{fig:tradeoff-sgm}), the BEiT \cite{bao2021beit} is first self-supervised pretrained on ImageNet-1k and then trained finetuned on ImageNet-21k, following the state-of-the-art approach in \cite{panda2022dp}. 
For DP-GD training, we set $\sigma$ as 28.914, clipping norm as 1, learning rate as 2, and we train for at most 60 iterations, and we only finetune the last layer on CIFAR-100.

\subsubsection{Additional Results}

\textbf{$k \rightarrow \eps_{Y^{(k)}}(\delta)$ curve. }
We show additional results for a more common setting in privacy-preserving machine learning where one set a target $\delta$ and try to find $\eps_{Y^{(k)}}(\delta)$ for different $k$, the number of individual mechanisms in the composition. 
We use $Y^{(k)}$ to stress that the PRV $Y$ is for the composition of $k$ mechanisms. 
Such a scenario can happen when one wants to find the optimal stopping iteration for training a differentially private neural network. 

Figure \ref{fig:k-to-eps-sgm} shows such a result for Poisson Subsampled Gaussian where we set subsampling rate 0.01, $\sigma=2$, and $\delta=10^{-5}$. 
We set $\epserror=10^{-1}, \deltaerror=10^{-10}$ for the FFT method. 
The estimate in this case is obtained by fixing $\propdelta = \tau \delta$ and find the corresponding estimate for $\eps$ through FFT-based method \cite{gopi2021numerical}. 
As we can see, the EVR paradigm achieves a much tighter privacy analysis compared with the upper bound derived by FFT-based method. 
The runtime of privacy verification in this case is  $<15$ minutes for all $k$s. 

\begin{figure}[h]
    \centering
    \includegraphics[width=0.4\columnwidth]{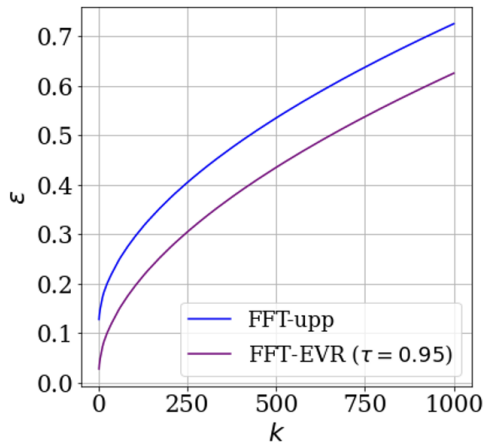}
    \caption{
    $k \rightarrow \eps_{Y^{(k)}}(\delta)$ curve for Poisson Subsampled Gaussian mechanism for subsampling rate 0.01, $\sigma=2$, and $\delta=10^{-5}$. 
    When running on an NVIDIA A100-SXM4-80GB GPU, the runtime of privacy verification is $<15$ minutes.  
    }
    \label{fig:k-to-eps-sgm}
\end{figure}

\textbf{Privacy-Utility Tradeoff. }
We show additional results for the privacy-utility tradeoff curve when finetuning ImageNet-pretrained BEiT on CIFAR100 dataset with DP stochastic gradient descent (DP-SGD). For DP-SGD training, we set $\sigma$ as 5.971, clipping norm as 1, learning rate as 0.2, momentum as $0.9$, batch size as 4096, and we train for at most 360 iterations (30 epochs). We only finetune the last layer on CIFAR-100.

As shown in Figure \ref{fig:tradeoff-sgm} (a), the EVR paradigm provides a better utility-privacy tradeoff compare with the traditional upper bound method. 
In Figure \ref{fig:tradeoff-sgm} (b), we show the runtime of DP verification when $\uparam = (1+\smooth)/2$ and we set $\Delta$ according to Theorem \ref{thm:Delta-informal} (which ensures EVR's failure probability is negligible). 
The runtime is estimated on an NVIDIA A100-SXM4-80GB GPU. As we can see, it only takes a few minutes for privacy verification, which is short compared with hours of model training.

\begin{figure}[h]
    \centering
    \includegraphics[width=0.7\columnwidth]{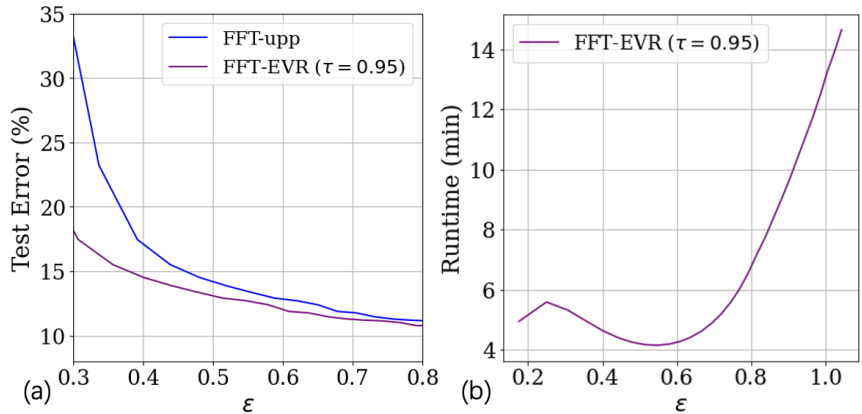}
    \caption{
    (a) Utility-privacy tradeoff curve for fine-tuning ImageNet-pretrained BEiT \cite{bao2021beit} on CIFAR100 when $\delta=10^{-12}$ with DP-SGD. 
    We follow the training hyperparameters from \cite{panda2022dp}. 
    (b) Runtime of privacy verification in the EVR paradigm. 
    For fair comparison, we set $\uparam = (1+\smooth)/2$ and set $\Delta$ according to Theorem \ref{thm:Delta-informal}, which ensures EVR's failure probability is around $O(\delta)$. 
    For (b), the runtime is estimated on an NVIDIA A100-SXM4-80GB GPU. 
    }
    \label{fig:tradeoff-sgm}
\end{figure}

\newpage

\subsection{Experiment for Evaluating MC Accountant}

\subsubsection{Settings}

\textbf{Evaluation Protocol.} 
We use $Y^{(k)}$ to stress that the PRV $Y$ is for the composition of $k$ Poisson Subsampled Gaussian mechanisms. 
For the offline setting, we make the following two kinds of plots: \textbf{(1)} the relative error in approximating $\eps \mapsto \delta_{Y^{(k)}}(\eps)$ (for fixed $k$), and \textbf{(2)} the relative error in $k \mapsto \eps_{Y^{(k)}}(\delta)$ (for fixed $\delta$), where $\eps_{Y^{(k)}}(\delta)$ is the inverse of $\delta_{Y^{(k)}}(\eps)$ from (\ref{eq:deltafunc}). 
For the online setting, we make the following two kinds of plots: \textbf{(1)} the relative error in approximating $k \mapsto \eps_{Y^{(k)}}(\delta)$ (for fixed $\delta$), and \textbf{(2)} $k \mapsto$ cumulative time for privacy accounting until $k$th iteration. 

\textbf{MC Accountant.} We use the importance sampling technique with the tilting parameter being set according to the heuristic described in Definition \ref{def:is-estimator}. 

\textbf{Baselines.}
We compare MC accountant against the following state-of-the-art DP accountants with the following settings: 
\begin{packeditemize}
    \item The state-of-the-art FFT-based approach \cite{gopi2021numerical}. The setting of $\epserror$ and $\deltaerror$ is specified in the next section. 
    \item CLT-based GDP accountant \cite{bu2020deep}. 
    \item GDP-Edgeworth accountant with second-order expansion and uniform bound. 
    \item The Analytical Fourier Accountant based on characteristic function \cite{zhu2022optimal}, with double quadrature approximation as this is the practical method recommended in the original paper.
\end{packeditemize}

\subsubsection{\add{Additional Results for Online Accounting}}

\add{
Figure \ref{fig:online-1e-5} and \ref{fig:online-1e-13} show the online accounting results for $(\sigma, \delta, q) = (0.5, 10^{-5}, 10^{-3})$ and $(\sigma, \delta, q) = (0.5, 10^{-13}, 10^{-3})$, respectively. For the setting of $(\sigma, \delta, q) = (0.5, 10^{-5}, 10^{-3})$, we can see that the MC accountant achieves a comparable performance with a shorter runtime. For the setting of $(\sigma, \delta, q) = (0.5, 10^{-13}, 10^{-3})$, we can see that the MC accountant achieves significantly better performance compared to the state-of-the-art FFT accountant (and again, with a shorter runtime). This showcases the MC accountant's efficiency and accuracy in online setting.}

\begin{figure}[h]
    \centering
    \includegraphics[width=0.7\columnwidth]{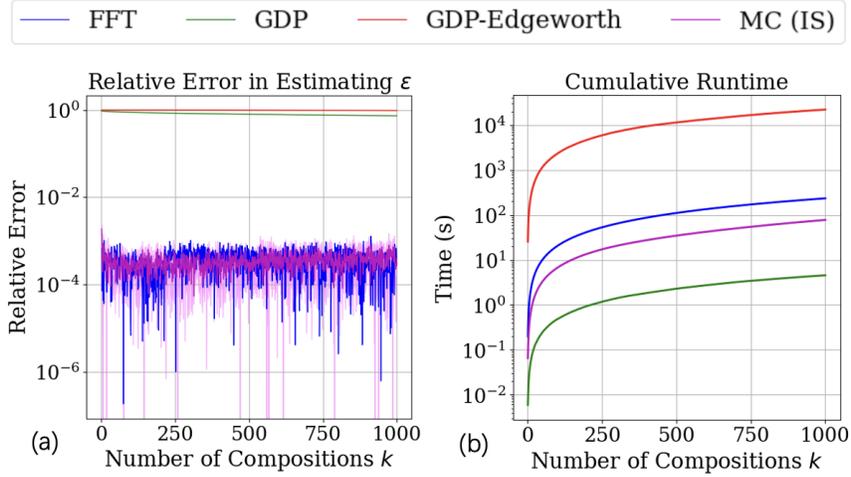}
    \caption{ 
    \add{
    Experiment for Composing Subsampled Gaussian Mechanisms in the Online Setting with hyperparameter $(\sigma, \delta, q) = (0.5, 10^{-5}, 10^{-3})$. 
    (a) Compares the relative error in approximating $k \mapsto \eps_{Y}(\delta)$. 
    The error bar for MC accountant is the variance taken over 5 independent runs. Note that the y-axis is in the log scale. 
    (b) Compares the cumulative runtime for online privacy accounting. 
    We did not show AFA \cite{zhu2022optimal} as it does not terminate in 24 hours.} 
    }
    \label{fig:online-1e-5}
\end{figure}

\begin{figure}[h]
    \centering
    \includegraphics[width=0.7\columnwidth]{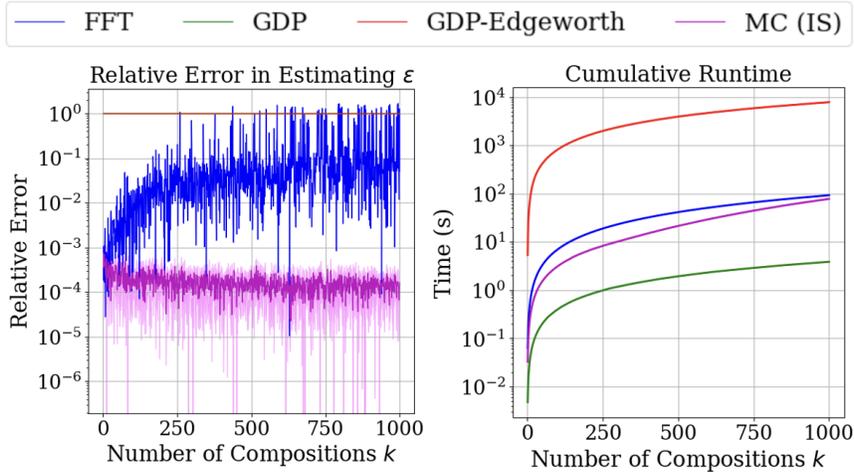}
    \caption{ 
    \add{
    Experiment for Composing Subsampled Gaussian Mechanisms in the Online Setting with hyperparameter $(\sigma, \delta, q) = (0.5, 10^{-13}, 10^{-3})$. 
    (a) Compares the relative error in approximating $k \mapsto \eps_{Y}(\delta)$. 
    The error bar for MC accountant is the variance taken over 5 independent runs. Note that the y-axis is in the log scale. 
    (b) Compares the cumulative runtime for online privacy accounting. 
    We did not show AFA \cite{zhu2022optimal} as it does not terminate in 24 hours. 
    }
    }
    \label{fig:online-1e-13}
\end{figure}

\clearpage

% \begin{figure}[h]
%     \centering
%     \includegraphics[width=0.7\columnwidth]{images/online-epserror0-4.pdf}
%     \caption{ 
%     \add{
%     Experiment for Composing Subsampled Gaussian Mechanisms in the Online Setting with hyperparameter $(\sigma, \delta, q) = (0.5, 10^{-9}, 10^{-3})$, but increase FFT accountant's hyperparameter $\epserror = 0.4$ (which trades estimation accuracy to the runtime). 
%     (a) Compares the relative error in approximating $k \mapsto \eps_{Y}(\delta)$. 
%     The error bar for MC accountant is the variance taken over 5 independent runs. Note that the y-axis is in the log scale. 
%     (b) Compares the cumulative runtime for online privacy accounting. 
%     We did not show AFA \cite{zhu2022optimal} as it does not terminate in 24 hours. 
%     }
%     }
%     \label{fig:online-1e-9-eps0.4}
% \end{figure}

\subsubsection{Additional Results for Offline Accounting}

In this experiment, we set the number of samples for MC accountant as $10^7$, and the parameter for FFT-based method as $\epserror=10^{-3}, \deltaerror=10^{-10}$. 
The parameters are controlled so that the MC accountant is faster than FFT-based method, as shown in Table \ref{tb:runtime}. 
Figure \ref{fig:offline-regular} (a) shows the offline accounting results for $\eps \mapsto \delta_{Y^{(k)}}(\eps)$ when we set $(\sigma, q, k) = (0.5, 10^{-3}, 1000)$. As we can see, the performance of MC accountant is comparable with the state-of-the-art FFT method. 
In Figure \ref{fig:offline-extreme} (a), we decreases $q$ to $10^{-5}$. Compared against baselines, MC approximations are significantly more accurate for larger $\eps$, compared with the FFT accountant. 
%, as it avoids the floating point errors in the FFT accountant.
Figure \ref{fig:offline-regular} (b) shows the offline accounting results for $k \mapsto \eps_{Y}(\delta)$ when we set $(\sigma, q, \delta) = (0.5, 10^{-3}, 10^{-5})$.
Similarly, MC accountant performs comparably as FFT accountant. However, when we decrease $q$ to $10^{-5}$ and $\delta$ to $10^{-14}$ (Figure \ref{fig:offline-extreme} (b)), MC accountant significantly outperforms FFT accountant. This illustrates that MC accountant performs well in all regimes, and is especially more favorable when the true value of $\delta_Y(\eps)$ is tiny. 

% \paragraph{Online Accounting.} 
% Here, we set the number of samples for MC accountant as $10^5$, and the parameter for FFT-based method as $\epserror=10^{-1}, \deltaerror=10^{-10}$. 
% Figure \ref{fig:online-regular} (a) shows the online accounting results for $(\sigma, q, \delta) = (0.5, 10^{-3}, 10^{-5})$. 
% As we can see, the performance of MC accountant is comparable with the state-of-the-art FFT method. 
% However, as we can see from Figure \ref{fig:online-regular} (b), the runtime for FFT accountant is around 5 times of the runtime for MC accountant. 
% In Figure \ref{fig:online-extreme} (a) and (b), we decreases $\delta$ to $10^{-9}$ and $10^{-14}$, respectively. As we can see, for these cases, MC accountant significantly outperforms FFT accountant. 
 
\begin{figure}[h]
    \centering
    \includegraphics[width=0.7\columnwidth]{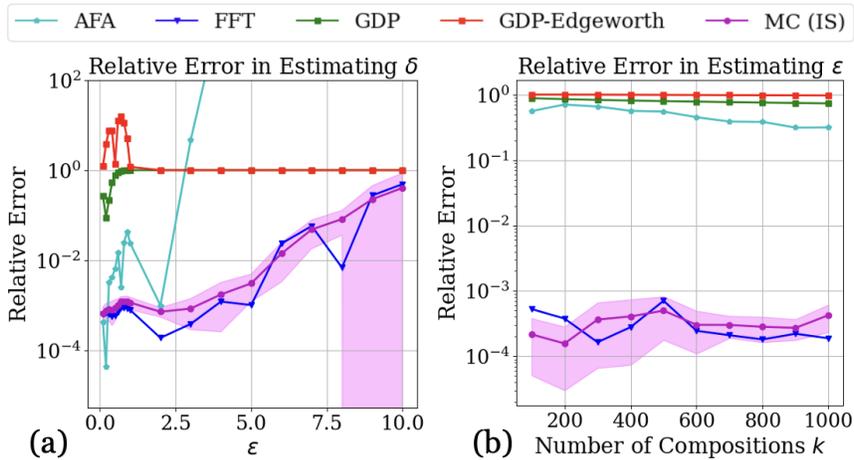}
    \caption{
    Experiment for Composing Subsampled Gaussian Mechanisms: 
    (a) Compares the relative error in approximating $\eps \mapsto \delta_{Y^{(k)}}(\eps)$ where we set $\sigma=0.5, k=1000, q=10^{-3}$. 
    (b) Compares the relative error in $k \mapsto \eps_{Y}(\delta)$ where we set $\sigma=0.5, \delta=10^{-5}, q=10^{-3}$. 
    The error bar for MC accountant is the variance taken over 5 independent runs. Note that the y-axis is in the log scale.
    }
    \label{fig:offline-regular}
\end{figure}

\begin{table}[h]
\centering
\caption{Runtime for $k=1000$.}
\resizebox{0.35\columnwidth}{!}{
\begin{tabular}{@{}ccccc@{}}
\toprule
\textbf{AFA} & \textbf{GDP} & \textbf{GDP-E} & \textbf{FFT} & \textbf{MC-IS} \\ \midrule
18.63     & $4.1\times 10^{-4}$     & 1.50                 & 3.01      & 2.31        \\ \bottomrule
\end{tabular}
}
\label{tb:runtime}
\end{table}

% \newpage

\begin{figure}[h]
    \centering
    \includegraphics[width=0.7\columnwidth]{images/sub_offline_relerr_extreme_std.pdf}
    \caption{ Experiment for Composing Subsampled Gaussian Mechanisms: 
    (a) Compares the relative error in approximating $\eps \mapsto \delta_{Y^{(k)}}(\eps)$ where we set $\sigma=0.5, k=100, q=10^{-5}$. 
    (b) Compares the relative error in $k \mapsto \eps_{Y}(\delta)$ where we set $\sigma=0.5, \delta=10^{-14}, q=10^{-5}$. 
    The error bar for MC accountant is the variance taken over 5 independent runs. Note that the y-axis is in the log scale. 
    The curves for AFA, GDP and GDP-Edgeworth are overlapped with each other. 
    }
    \label{fig:offline-extreme}
\end{figure}

% \begin{figure}[t]
%     \centering
%     \includegraphics[width=0.5\columnwidth]{images/online_regular.pdf}
%     \caption{ Experiment for Composing Subsampled Gaussian Mechanisms in the Online Setting. 
%     (a) Compares the relative error in approximating $k \mapsto \eps_{Y}(\delta)$ where we set $\sigma=0.5, \delta=10^{-5}, q=10^{-3}$. 
%     (b) Compares the cumulative runtime for online privacy accounting. 
%     We were not able to run AFA \cite{zhu2022optimal} as it does not terminate in 24 hours. 
%     }
%     \label{fig:online-regular}
% \end{figure}

% \begin{figure}[t]
%     \centering
%     \includegraphics[width=0.5\columnwidth]{images/online_extreme.pdf}
%     \caption{ Experiment for Composing Subsampled Gaussian Mechanisms in the Online Setting. 
%     (a) Compares the relative error in approximating $k \mapsto \eps_{Y}(\delta)$ where we set $\sigma=1.0, \delta=10^{-9}, q=10^{-3}$. 
%     (b) Compares the cumulative runtime for online privacy accounting. 
%     We were not able to run AFA \cite{zhu2022optimal} as it does not terminate in 24 hours. 
%     }
%     \label{fig:online-extreme}
% \end{figure}

\end{document}